\def\slasha#1{\setbox0=\hbox{$#1$}#1\hskip-\wd0\hbox to\wd0{\hss\sl/\/\hss}}
\def\periodb#1{\setbox0=\hbox{$#1$}#1\hskip-\wd0\hbox to\wd0{-}}
\newcommand{\id}{\mathrm{id}}   			
\newcommand{\CA}{\mathcal{A}}    			
\newcommand{\CF}{\mathcal{F}}
\newcommand{\CCF}{\mathscr{F}}
\newcommand{\CH}{\mathcal{H}}
\newcommand{\CJ}{\mathcal{J}}
\newcommand{\CCM}{\mathscr{M}}
\newcommand{\CP}{\mathcal{P}}
\newcommand{\CR}{\mathcal{R}}
\newcommand{\CV}{\mathcal{V}}
\newcommand{\CCV}{\mathscr{V}}
\newcommand{\CX}{\mathcal{X}}
\newcommand{\frg}{\mathfrak{g}}				
\newcommand{\frv}{\mathfrak{v}}
\def\swone{{\textrm{\tiny (1)}}}
\def\swtwo{{\textrm{\tiny (2)}}}
\def\lact{{\textrm{\tiny L}}}
\def\ract{{\textrm{\tiny R}}}
\newcommand{\FR}{\mathbbm{R}}     			
\newcommand{\FC}{\mathbbm{C}}     			
\newcommand{\RZ}{\mathbbm{Z}}     			
\newcommand{\dd}{\mathrm{d}}     			
\newcommand{\LL}{\mathrm{L}}     			
\newcommand{\aso}{\mathfrak{so}}
\newcommand{\comment}[1]{}     				
\def\tyng(#1){\hbox{\tiny$\yng(#1)$}}			
\def\tyoung(#1){\hbox{\tiny$\young(#1)$}}			
\newcommand{\beq}{\begin{eqnarray}}
\newcommand{\eeq}{\end{eqnarray}}
\newcommand{\sfd}{{\sf d}}
\newcommand{\sff}{{\sf f}}
\newcommand{\sfR}{\mathsf{R}}
\definecolor{outrageousorange}{rgb}{1.0, 0.43, 0.29}
\newcommand{\Tr}{\mathrm{Tr}}
\theoremstyle{plain}
\newtheorem{proposition}[equation]{Proposition}
\theoremstyle{definition}
\newtheorem{definition}[equation]{Definition}
\newtheorem{example}[equation]{Example}
\newcommand{\dwedge}{\curlywedge}
\newcommand{\om}{\omega}
\newcommand{\epsi}{\epsilon}
\newcommand{\nn}{\nonumber}
\newcommand{\midwedge}{\text{\Large$\wedge$}}
\newcommand{\midoplus}{\text{\Large$\otimes$}}
\newcommand{\dsf}{{\mathsf{d}}}
\def\RR{{\mathcal R}}
\def\beq{\begin{equation}}
\def\bee{\begin{equation}}
\def\eeq{\end{equation}}
\def\bea{\begin{eqnarray}}
\def\eea{\end{eqnarray}}
\def\ba{\begin{align}}
\def\ea{\end{align}}
\numberwithin{equation}{section}
\begin{document}

\renewcommand{\thefootnote}{\fnsymbol{footnote}}

\begin{titlepage}
	
	\renewcommand{\thefootnote}{\fnsymbol{footnote}}
	
	\begin{flushright}
		\small
		{\sf EMPG--21--04}
	\end{flushright}
	
	\begin{center}
		
		\vspace{1cm}
		
		\baselineskip=24pt
		
		{\Large\bf Braided $\boldsymbol{L_{\infty}}$-Algebras, Braided Field Theory \\
                  and Noncommutative Gravity}
		
		\baselineskip=14pt
		
		\vspace{1cm}
		
		{\bf Marija Dimitrijevi\'c \'Ciri\'c}${}^{\,(a)\,,\,}$\footnote{Email: \ {\tt
				dmarija@ipb.ac.rs}} \ \ \ \ \ {\bf Grigorios Giotopoulos}${}^{\,(b)\,,\,}$\footnote{Email: \ {\tt
				gg42@hw.ac.uk}} \\[2mm] {\bf Voja Radovanovi\'c}${}^{\,(a)\,,\,}$\footnote{Email: \ {\tt
				rvoja@ipb.ac.rs}} \ \ \ \ \ {\bf Richard
			J. Szabo}${}^{\,(b)\,,\,}$\footnote{Email: \ {\tt R.J.Szabo@hw.ac.uk}}
		\\[6mm]
		
		\noindent  ${}^{(a)}$ {\it Faculty of Physics, University of
			Belgrade}\\ {\it Studentski trg 12, 11000 Beograd, Serbia}
		\\[3mm]
		
		\noindent  ${}^{(b)}$ {\it Department of Mathematics, Heriot-Watt University\\ Colin Maclaurin Building,
			Riccarton, Edinburgh EH14 4AS, U.K.}\\ and {\it Maxwell Institute for
			Mathematical Sciences, Edinburgh, U.K.} \\ and {\it Higgs Centre
			for Theoretical Physics, Edinburgh, U.K.}
		\\[30mm]
		
	\end{center}
	
	\begin{abstract}
		\noindent
We define a new homotopy algebraic structure, that we call a braided $L_\infty$-algebra, and use it to systematically construct a new class of noncommutative field theories, that we call braided field theories. Braided field theories have gauge symmetries which realize a braided Lie algebra, whose Noether identities are inhomogeneous extensions of the classical identities, and which do not act on the solutions of the field equations. We use Drinfel'd twist deformation quantization techniques to generate new noncommutative deformations of classical field theories with braided gauge symmetries, which we compare to the more conventional theories with star-gauge symmetries. We apply our formalism to introduce a braided version of general relativity without matter fields in the Einstein--Cartan--Palatini formalism. In the limit of vanishing deformation parameter, the braided theory of noncommutative gravity reduces to classical gravity without any extensions.
	\end{abstract}
	
\end{titlepage}

{
	\tableofcontents
}

\setcounter{footnote}{0}
\renewcommand{\thefootnote}{\arabic{footnote}}

\newpage

\section{Introduction}

In this paper we will present a new perspective on symmetries of noncommutative field theories, and develop first principles which enable systematic constructions of new examples of field theories with noncommutative gauge symmetries. To achieve this we develop a new notion of homotopy algebra that is suited to organise the symmetries and dynamical content of (generalized) noncommutative gauge theories; we call this new mathematical object a `braided $L_\infty$-algebra'. Although our formalism applies broadly to any perturbative classical field theory with symmetries, in this paper we focus exclusively on examples of diffeomorphism-invariant field theories for simplicity, in order to illustrate clearly and conceptually how our novel constructions work. In particular, we derive new noncommutative extensions of the Einstein--Cartan--Palatini formulation of gravity in three and four spacetime dimensions. 

In this section we begin with a discussion of some background and context, before turning to a more detailed description of our results and the contents of this paper.

\subsubsection*{Classical field theories and $\boldsymbol{L_\infty}$-algebras}

A \emph{classical} field theory on a manifold can for a large part be understood as a dynamical system, that is, a system of partial differential equations which encodes the dynamics of fields. The field theory may have symmetries, particularly local symmetries, and we refer to these broadly as `generalized gauge symmetries' to accomodate more general systems whose fields and symmetries are not necessarily associated with connections on a principal bundle. In the setting of dynamical systems, by `symmetry' we mean that the field equations transform covariantly under the symmetry transformations; the gauge symmetry then acts on the space of classical solutions. The main object of interest is therefore the space of classical physical states, defined to be the moduli space of solutions to the field equations modulo the local symmetries. The field equations may or not follow from a variational principle. For a Lagrangian field theory, the data may be accompanied by a gauge-invariant action functional, however for many (but not all) considerations the action functional is supplementary data which is not needed to analyse the space of classical physical states.

This perspective is naturally and systematically organised by strong homotopy Lie algebras~\cite{LadaStasheff92}, or $L_\infty$-algebras, which are generalizations of differential graded Lie algebras with infinitely-many graded antisymmetric brackets satisfying higher homotopy versions of the Jacobi identity. The data of any perturbative\footnote{In the case of a non-polynomial theory or a non-linear space of fields $\mathcal{M}$, this holds at the level of the pertubative expansion of the theory on the tangent space $T_{\phi}\mathcal{M}$ at a fixed solution $\phi \in \mathcal{M}$.} classical field theory gives rise to an $L_\infty$-algebra. This may be seen in several ways. From a practical standpoint, the $L_\infty$-algebra underlying a field theory may be constructed directly through an explicit ``bootstrapping'' algorithm~\cite{Hohm:2017pnh}. They can also be obtained through the duality between $L_\infty$-algebras and differential graded commutative algebras, with the latter in this case being in the guise of the BV--BRST formalism~\cite{Stasheff:1997iz,BVChristian}. These constructions can also be understood more geometrically in the setting of derived geometry, where the $L_\infty$-algebras which appear are the (derived) tangent complexes on the space of fields describing the deformation theory around critical points of an action functional~\cite{Costello}. For a Lagrangian field theory, the extra data of an action functional is captured by imposing an additional cyclic structure on the underlying $L_\infty$-algebra.

\subsubsection*{Noncommutative gauge theories and gravity}

The dynamical systems perspective becomes particularly fruitful in situations where one is interested in a noncommutative deformation of a given field theory,\footnote{For us, this will be a field theory where the underlying spacetime manifold $M$ is replaced by a noncommutative deformation of the algebra of functions on $M$, along with the relevant differential geometrical objects as modules over this algebra. Being a deformation of a commutative theory, the noncommutative structures and dynamics must reduce to the classical spacetime and theory in the commutative limit.} and in particular of gravity. For example, in the metric formulation of general relativity, it is often problematic to write down a variational principle which deforms the standard Einstein--Hilbert action functional, due to ambiguities in defining noncommutative deformations of volume forms and scalar curvatures, as well as reality properties and the inherent non-polynomial nature of the action (see e.g.~\cite{SchenkelThesis} for a discussion of this). Yet consistent deformations of the field equations are straightforward (see e.g.~\cite{TwistApproach,NAGravity}). It is then natural to seek $L_\infty$-algebras which organise symmetries and dynamics of noncommutative field theories. 

The noncommutative extension of the bootstrap approach has been employed by~\cite{Blumenhagen:2018kwq,Kupriyanov:2019cug} as a means for constructing new examples of noncommutative gauge theories. In this approach the $L_\infty$-algebra framework encodes not only the symmetries and dynamics of the field theory, but also the general failure of the deformed exterior algebra from forming a differential graded algebra and, in the case of nonassociative gauge symmetries, the failure of the Jacobi identities. This generally produces $L_\infty$-algebras with an infinite number of non-zero brackets, which has several limitations: explicit expressions for all brackets are difficult to obtain, the bootstrapping of brackets is particularly difficult and involved for nonassociative gauge symmetries, and so far this approach has only met with some success in the semi-classical limit of a noncommutative gauge theory. The approach based on symplectic embeddings~\cite{Kupriyanov:2021cws} drastically improves the situation in that it allows for explicit and systematic constructions of all brackets of the $L_\infty$-algebra, but so far this approach is also restricted to the semi-classical limit and moreover limited to purely kinematic considerations.

In this paper we discuss a new alternative approach, which arose from our attempts to understand a noncommutative version of the Einstein--Cartan--Palatini theory of gravity in this language; the $L_\infty$-algebra formalism for the classical theory was developed in detail by~\cite{ECPLinfty}. In the standard noncommutative extension of this theory~\cite{AschCast}, the (extended) local Lorentz symmetry is implemented by the usual star-gauge transformations, and in principle one may construct correspondingly an $L_\infty$-algebra structure as in the bootstrapping approach to noncommutative gauge theories. However, diffeomorphism invariance in this theory is implemented by the usual `twisted' actions~\cite{TwistApproach} (see~\cite{Szabo:2006wx} for a review), and the dichotomy between the actions of the two types of symmetries leads to severe difficulties in the implementation of an underlying $L_\infty$-algebra as an organising principle for noncommutative gravity: The twisted diffeomorphism symmetry does not appear to fit naturally into the standard $L_\infty$-algebra picture. The situation is reminescent of what happens in double field theory~\cite[Section~5.1]{Hohm:2017pnh}, which requires extending the diffeomorphism symmetries by ``higher'' symmetries, implementing reducible gauge symmetries of the theory. In the case of noncommutative gravity this quickly becomes very complicated, and it is not clear how to manage the higher brackets of the $L_\infty$-algebra. In other words, the standard $L_\infty$-algebra formalism does not seem to be an appropriate or useful tool for organising the dynamics of noncommutative gravity. Our proposal then is to  deform the $L_\infty$-algebra structure itself in such a way that it becomes compatible with the twisted diffeomorphisms. This leads to a new type of homotopy algebra and a new type of gauge symmetry for noncommutative field theories in general.

\subsubsection*{Braided $\boldsymbol{L_\infty}$-algebras and braided field theories}

Following the classical treatment, we employ the machinery of Hopf algebras and Drinfel’d twist deformation quantization\footnote{We use the phrase ``deformation quantization'' colloquially to mean formal noncommutative deformation. Indeed, our theories will be classical in the physical sense, and no quantum physics
		will take place in this manuscript.}~\cite{MajidBook} to produce \emph{braided} noncommutative deformations of field theories.\footnote{In this paper we use the term `braided' as a synonym for `twisted', as the former is more natural in the context in which our field theories are defined.} By twisting the enveloping Hopf algebra of vector fields on a manifold $M$ to a non-cocommutative Hopf algebra and simultaneously deforming its category of modules, we deform the $L_\infty$-algebras into \emph{braided $L_\infty$-algebras}, a structure that was sketched in~\cite{Ciric:2020eab}. In essence, one invokes the definition of an $L_\infty$-algebra in terms of morphisms in a  symmetric monoidal category with a non-trivial braiding. In practice, the axioms of a classical $L_\infty$-algebra are supplemented with a non-trivial isomorphism $\RR:V\otimes V'\to V\otimes V'$ whenever a transposition isomorphism between two modules $V$ and $V'$ occurs. For example, antisymmetry is now braided: $\ell_2(v\otimes v')=-\ell_2\circ \RR(v'\otimes v)$, and similarly for all higher homotopy Jacobi relations. Braided $L_\infty$-algebras may be regarded as a higher homotopy coherent extension of the notion of a braided Lie algebra, which was introduced by Woronowicz under the name `quantum Lie algebra'~\cite{Woronowicz:1989rt}, and later formulated in a more general categorical setting by Majid~\cite{Majid:1993yp}.

When applied to the $L_\infty$-algebra underlying a classical dynamical system, this procedure deforms the collection of kinematical and dynamical data simultaneously and consistently, producing a genuinely new type of noncommutative field theory, that we call a \emph{braided field theory}. These are field theories with braided gauge symmetries which close as braided Lie algebras. The spaces of fields, or more precisely their tangent spaces, now form braided representations of these braided Lie algebras. Various aspects of the kinematical part of braided gauge symmetries have appeared before in e.g.~\cite{BraidedGauge,SchenkelThesis}, but as far as we are aware a complete and general treatment of noncommutative dynamics with a variational principle has not been addressed before. From our perspective, the braided homotopy relations guarantee the braided representation properties, braided covariance of the field equations, and braided gauge invariance of the action functional. These symmetries differ from the star-gauge symmetries that are usually employed in noncommutative gauge theories, which close as classical Lie algebras. As a consequence, our braided deformations do not require the introduction of any new degrees of freedom or larger enveloping algebras to achieve gauge closure. We compare our braided gauge symmetries with the more conventional noncommutative gauge symmeties based on star-gauge transformations in Section~\ref{sec:braidedvsstargauge}.

We point out and resolve a conceptual issue concerning braided gauge symmetries: they leave the field equations covariant and action functional invariant, but they do not act on the space of solutions of the field equations, contrary to the classical case. This is a consequence of the braided Leibniz rule of braided Lie algebra representation theory. However, the braided $L_\infty$-algebra framework allows for the derivation of a braided version of the Noether identities, i.e. a set of differential identities among the field equations and the dynamical fields, which imply that not all degrees of freedom are independent. That is, some of them are gauge redundant, justifying the terminology `braided gauge symmetry'. The braided Noether identities are non-trivial deformations of their classical counterparts. 

\subsubsection*{Braided noncommutative gravity}
 
In this paper we apply our formalism of braided $L_\infty$-algebras to produce new braided noncommutative theories of gravity in three and four spacetime dimensions. We consider only the case of parallelizable spacetime manifolds $M$, for which the twist deformations are straightforward to implement and exhibit the salient features of our constructions more clearly; see~\cite{ECPLinfty} for the general treatment in the classical case. In three dimensions, the noncommutative field equations and action functional are naive deformations of the classical ones (with braided commutators), however in four dimensions we uncover non-trivial deformations. Both theories are crucially different from previous approaches for several reasons. 

Firstly, in our braided approach the two sets of infinitesimal gauge symmetries, diffeomorphisms and local coframe rotations, each deform consistently with one another and together combine as a semi-direct product of braided Lie algebras, contrary to earlier approaches where diffeomorphisms are deformed as a braided Lie algebra but rotations as a classical Lie algebra (see e.g.~\cite{AschCast}). Thus no new fields or extended enveloping algebras of symmetries need be introduced, contrary to these other approaches. Secondly, our approach formalises the treatment of braided gauge transformations and extends earlier treatments of twisted diffeomorphism symmetry of gravity to \emph{all} gauge transformations, pointing out and clarifying several novel aspects of the space of solutions to the field equations compared to the unbraided case. Finally, one can in principle study solutions of the braided field equations, and compare our first order noncommutative theory of gravity in four dimensions with the existing metric formulation for noncommutative gravity (see e.g.~\cite{SchenkelThesis}). We expect that these two theories are drastically different, but our approach enjoys an explicit action functional and the coupling to matter fields (such as fermions) is simply a straightforward calculation within the braided $L_\infty$-algebra framework. We stress that the classical solution space of earlier treatments of the metric formulation of noncommutative gravity with twisted diffeomorphism symmetry also suffers from the same interpretational problem mentioned above, while our framework provides an explanation of the meaning of these symmetries in this sense. We compare our braided theories of gravity with other noncommutative theories of gravity in Section~\ref{sec:NCGR}.

\subsubsection*{Summary and outline of the paper}

We begin in the next two sections by giving some background to the main developments of this paper. In Section~\ref{sec:LinftyFT} we review the basics of $L_\infty$-algebras, together with their cyclic versions, and how they are used to organise the symmetries and dynamics of classical field theories. We illustrate the formalism on the two main classes of examples of this paper: Chern--Simons gauge theory and the Einstein--Cartan--Palatini theory of gravity. In Section~\ref{sec:Drinfeldtwist} we review the main ideas of Drinfel'd twist deformation quantization, using the enveloping Hopf algebra of vector fields on a manifold $M$, and look at the two particular examples of Hopf modules which form the backbone of this paper: the exterior algebra of differential forms on $M$ and the algebras of vector fields on $M$ themselves; these two modules also furnish our first examples of braided Lie algebras. At times we appeal to the categorical picture which interprets twist deformation quantization as a functorial equivalence of symmetric monoidal categories. This language is not strictly necessary to understand the main ideas or computations of this paper, but it does provide a convenient conceptual framework and at times simplifies what would otherwise be unwieldy technical computations.

In Section~\ref{sec:braidedLinfty}, we come to the main mathematical definitions and constructions of this paper: we develop in detail the notion of a braided $L_\infty$-algebra. It is here that the categorical language is particularly powerful, as it enables a straightforward formulation of the braided homotopy relations, which would be otherwise cumbersome and difficult to write out in general. Our definition makes sense in any symmetric monoidal category, and parallels the notion of a braided Lie algebra as a Lie algebra in such a category. When a classical $L_\infty$-algebra sits in the category of modules over the enveloping Hopf algebra of vector fields on $M$, we describe in detail how its Drinfel'd twist deformation defines a braided $L_\infty$-algebra (see Proposition~\ref{prop:braidedfromclassical}). This result is very useful in applications, as once we are given a classical $L_\infty$-algebra, we know immediately that its twist quantization is a braided $L_\infty$-algebra, without the need of having to go through lengthy and technically involved checks of the braided homotopy relations. We also consider twist quantization of cyclic $L_\infty$-algebras via Drinfel'd twists which are compatible with the cyclic structure in a suitable sense, and show that they produce strictly cyclic braided $L_\infty$-algebras (see Proposition~\ref{prop:compatiblestrict}), a result which is important for applications to noncommutative field theories.

In Section~\ref{sec:bft} we turn to the main physical applications of the present paper. We formulate the notion of a braided field theory, taking the perspective that it defines a formal dynamical system. We give a general account of braided field theories starting from a braided $L_\infty$-algebra as input. After a detailed general discussion of braided gauge symmetries, we proceed to explain how to build the kinematics and dynamics of a noncommutative field theory in the braided $L_\infty$-algebra formalism, paralleling the constructions of the classical case. In general, the proofs of various properties such as gauge closure, braided covariance of the field equations, braided Noether identities and braided gauge invariance of an action functional are much more complicated compared to the classical case, as the tradeoff of strict graded antisymmetry to its braided version no longer allows simplification of the necessary homotopy relations. In particular, we find that braided covariance is realised non-trivially and results in the peculiar features of the space of classical physical states discussed above. The Noether identities also assume a very different form and contain inhomogeneous field-dependent terms compared to their classical versions.

Our construction gives a straightforward algorithm for producing braided field theories as noncommutative deformations of classical field theories:
\begin{enumerate}
\item Given a classical field theory on a manifold $M$, work out its corresponding $L_\infty$-algebra, either by bootstrapping the initial brackets of its gauge transformations, field equations and Noether identities, or by using its dual BV--BRST formulation.
\item Given an enveloping Hopf algebra of vector fields on $M$ which generate symmetries of the classical field theory, and over which the $L_\infty$-algebra is a module, choose a corresponding Drinfel'd twist element and quantize the $L_\infty$-algebra to a braided $L_\infty$-algebra.
\item Run this braided $L_\infty$-algebra through the formulas of Section~\ref{sec:bft} to produce a noncommutative field theory with braided gauge symmetries.
\end{enumerate}
We apply this algorithm to the example of Chern--Simons gauge theory in Section~\ref{sec:braidedCStheory}. While the resulting braided field theory formally looks like a simple deformation of its classical version, it is quite different from the usual discussions of noncommutative Chern--Simons theory: in particular, it possesses braided gauge symmetries and its Noether identity is a deformation of the usual Bianchi identity. In general, the field equations and action functional of the resulting braided field theory will not simply be the ``naive'' deformations of those of the classical field theory, to which they nevertheless reduce in the classical limit: our algorithm constructs genuinely new examples of noncommutative gauge theories. We illustrate this in Section~\ref{sec:NCgravity} by applying our formalism to construct new braided theories of noncommutative gravity in three dimensions, whose field equations are given by \eqref{eq:Fom3d} and \eqref{eq:Fe3d}, and in four dimensions, whose field equations are given by \eqref{eq:Fom4d} and~\eqref{eq:Fe4d}.

\subsection*{Outlook}

This paper arose as part of an ongoing programme to formulate a nonassociative theory of gravity that is inspired by scenarios in closed non-geometric string theory, see e.g.~\cite{NAGravity,Barnes:2016cjm,Blumenhagen:2016vpb}. We argued in~\cite{ECPLinfty} that the $L_\infty$-algebra formalism should provide the natural receptacle to capture the failure of closure and covariance of field equations under nonassociative gauge transformations. To accomodate these instances, one may in principle deform the classical $L_\infty$-algebra using cochain twists of the enveloping algebra of vector fields, rather than cocycle twists, which results in a quasi-Hopf algebra. The end result will be an $L_\infty$-algebra in a symmetric monoidal category with non-trivial braiding and associator isomorphisms. Using the corresponding deformed homotopy identites, nonassociative gauge transformations ought to have a straightforward definition, and the corresponding field equations and action functionals may be in reach within this formalism. In this paper we focus on associative theories of noncommutative gravity, leaving these more complicated extensions for future investigations.

Although the present paper focuses on examples of diffeomorphism-invariant field theories, it is straightforward to apply our formalism to other field theories with less symmetry by restricting to a Hopf subalgebra of vector fields over which the classical $L_\infty$-algebra is a module. Our approach also offers a straightforward route to ``noncommutative higher gauge theory'' through $k$-term braided $L_\infty$-algebras with $k>4$. Along these lines, it is natural to wonder if our braided noncommutative field theories arise as suitable low energy limits of string theory, similarly to the conventional models based on star-gauge symmetry (see e.g.~\cite{Szabo:2001kg} for a review). A promising direction is to look for connections to braided symmetries of string amplitudes, for instance through the Hopf algebraic symmetries of worldsheet string correlation functions discussed in~\cite{Asakawa:2008nu,Asakawa:2008cc}. Such a connection further has the potential to connect noncommutative gravity with string theory for the first time. All of these interesting extensions of the present work are left for future investigations.

In this paper we treat only the case of classical field theories. Quantization of our braided field theories should be made possible by firstly finding a braided version of the Chevalley--Eilenberg algebra of an $L_\infty$-algebra, which would then produce a braided version of the BV--BRST formalism for field theories. This should then be compared to Oeckl's `braided quantum field theory' framework~\cite{Oeckl:1999zu,Sasai:2007me}, which however has so far been restricted to field theories without gauge symmetries. Some steps in this direction have been discussed recently in~\cite{Nguyen:2021rsa}. The formalism of~\cite{Nguyen:2021rsa} also suggests a natural way to handle the problem of defining the braided moduli space of classical solutions, which in the dual setting of the braided BV formalism may be completely characterised by its braided differential graded algebra of functionals. These possibilities and their physical relevance
will be explored in future investigations.

\subsubsection*{Acknowledgments}

We thank Branislav Jur\v{c}o, Lukas M\"uller and Alexander Schenkel for helpful discussions and correspondence.
The work of {\sc M.D.C.} and {\sc V.R.} is supported by Project
ON171031 of the Serbian Ministry of Education, Science and
Technological Development. The work of M.D.C. and R.J.S. was partially supported by the Croatian Science Foundation Project IP--2019--04--4168. The work of {\sc G.G.} is supported by a
Doctoral Training Grant from the UK Engineering and Physical Sciences
Research Council. The work of {\sc R.J.S.} was supported by
the Consolidated Grant ST/P000363/1 
from the UK Science and Technology Facilities Council.

\paragraph{Conflict of interest statement.} 

On behalf of all authors, the corresponding author states that there is no conflict of interest.

\section{$L_{\infty}$-algebras and classical field theory}
\label{sec:LinftyFT}

\subsection{$L_{\infty}$-algebras and cyclic $L_\infty$-algebras}
\label{sec:Linfty}

An $L_\infty$-algebra is a $\RZ$-graded real vector space $V=\bigoplus_{k\in \RZ}\, V_{k}$ equipped with graded antisymmetric multilinear maps
\begin{align*}
\ell_n: \text{\Large$\otimes$}^{n} V \longrightarrow V \ , \quad  v_1\otimes \dots\otimes v_n \longmapsto \ell_n (v_1,\dots,v_n)
\end{align*}
for each $n\geq1$, which we call $n$-brackets. The graded antisymmetry translates to 
\begin{equation}\label{eq:gradedantisym}
\ell_n (\dots, v,v',\dots) = -(-1)^{|v|\,|v'|}\, \ell_n (\dots, v',v,\dots) \ ,
\end{equation}
where we denote the degree of a homogeneous element $v\in V$ by $|v|$.
The $n$-bracket is a map of degree $|\ell_n|=2-n$, that is
\begin{equation*}
\big|\ell_n(v_{1}, \dots ,v_{n})\big| = 2-n +\sum_{j=1}^n \, |v_j| \ .
\end{equation*}

The $n$-brackets $\ell_n$ are required to fulfill infinitely many
identities ${\cal J}_n(v_1,\dots,v_n)=0$ for each $n\geq1$, called homotopy relations, with
\begin{align} \label{eq:calJndef}
{\cal J}_n(v_1,\dots,v_n) := \sum^n_{i=1}\, (-1)^{i\,(n-i)} \
\sum_{\sigma\in{\rm Sh}_{i,n-i}} \, & \chi(\sigma;v_1,\dots,v_n) \\ &
\qquad
\times
\ell_{n+1-i}\big(
\ell_i(v_{\sigma(1)},\dots,
v_{\sigma(i)}),
v_{\sigma(i+1)},\dots
,v_{\sigma(n)}\big)
\
, \nn
\end{align}
where, for each $i=1,\dots,n$, the second sum runs over $(i,n-i)$-shuffled permutations
$\sigma\in S_n$ of degree $n$ which are restricted as 
\begin{equation*}
\sigma(1)<\dots<\sigma(i) \qquad \mbox{and} \qquad \sigma(i+1)<\dots < \sigma(n) \ .
\end{equation*}
The Koszul sign $\chi(\sigma;v_1,\dots,v_n)=\pm\, 1$ is determined from the grading by
\begin{align*}
v_{\sigma(1)}\wedge\cdots\wedge v_{\sigma(n)} = \chi(\sigma;v_1,\dots,v_n) \ v_1\wedge\cdots\wedge v_n \ .
\end{align*}

Let us look explicitly at the first three identities. For $n=1$ the
identity ${\cal J}_1(v)=0$ is
\begin{align*}
\ell_1\big(\ell_1(v)\big) = 0 \ , 
\end{align*}
which states that the map $\ell_1:V\to V$ is a
differential making $(V,\ell_1)$ into a cochain complex. For $n=2$ the
identity ${\cal J}_2(v_1,v_2)=0$ reads
\begin{align*}
\ell_1\big(\ell_2(v_1,v_2)\big) = \ell_2\big(\ell_1(v_1),v_2\big) +
  (-1)^{|v_1|}\, \ell_2\big(v_1, \ell_1(v_2)\big) \ , 
\end{align*}
which states that $\ell_1$ is a graded
derivation with respect to the $2$-bracket $\ell_2:V\otimes V\to V$. For $n=3$ the
identity ${\cal J}_3(v_1,v_2,v_3)=0$ expands into
\begin{align}
& \ell_2\big(\ell_2(v_1,v_2),v_3\big) - (-1)^{|v_2|\,|v_3|}\,
  \ell_2\big(\ell_2(v_1,v_3),v_2\big) +  (-1)^{(|v_2|+|v_3|)\,|v_1|}\,
  \ell_2\big(\ell_2(v_2,v_3),v_1\big) \nn\\[4pt]
& \hspace{2cm} = -\ell_3\big(\ell_1(v_1),v_2,v_3\big) - (-1)^{|v_1|}\,
  \ell_3\big(v_1, \ell_1(v_2), v_3\big) - (-1)^{|v_1|+|v_2|}\,
  \ell_3\big(v_1,v_2, \ell_1(v_3)\big) \nn\\
& \hspace{4cm} -\ell_1\big(\ell_3(v_1,v_2,v_3)\big) \ ,
\label{I3}\end{align}
which for $\ell_3=0$ is just the graded Jacobi identity for the
$2$-bracket $\ell_2$; in general it expresses the violation of the Jacobi identity by
a cochain homotopy determined by $\ell_1$. In this sense $L_\infty$-algebras are (strong) homotopy
deformations of differential graded Lie algebras which are the special
cases where the ternary and all higher brackets vanish: $\ell_n=0$ for
all~$n\geq3$. In general, the homotopy relations for $n\geq3$ are
generalized Jacobi identities.

We are sometimes interested in the case where an
$L_\infty$-algebra $(V,\{\ell_n\})$ is further endowed with a graded symmetric non-degenerate bilinear pairing $\langle-,-\rangle:V\otimes V\to\FR$ which is cyclic in the sense that
\begin{align*}
\langle v_0,\ell_n(v_1,v_2,\dots,v_n)\rangle = (-1)^{n+(|v_0|+|v_n|)\,n+|v_n| \,\sum_{i=0}^{n-1}\,|v_i|} \ \langle v_n,\ell_n(v_0,v_1,\dots,v_{n-1})\rangle
\end{align*}
for all $n\geq1$. This is the natural notion of an invariant inner product on an
$L_\infty$-algebra, and if such a pairing
exists the resulting algebraic structure is called a cyclic
$L_\infty$-algebra; this generalizes the notion of a quadratic Lie algebra. If in addition the pairing is odd, say of degree
$-2p-1$, then the only non-vanishing pairings are 
$$
\langle-,-\rangle:V_k\otimes V_{2p-k+1}\longrightarrow \FR \qquad
\mbox{for} \quad k\leq p \ ,
$$ 
so that the pairing is strictly symmetric and the cyclicity condition simplifies to
\begin{align*}
\langle v_0,\ell_n(v_1,v_2,\dots,v_n)\rangle = (-1)^{(|v_0|+1)\,n} \ \langle v_n,\ell_n(v_0,v_1,\dots,v_{n-1})\rangle \ .
\end{align*}

\subsection{Classical field theories in the $L_{\infty}$-algebra formalism}
\label{sec:Linftygft}

In this paper, by a classical field theory we will mean a polynomial (or perturbative)
field theory with irreducible gauge symmetries. They can be completely
encoded in $4$-term $L_{\infty}$-algebras concentrated in degrees
$0$, $1$, $2$, and $3$:
$$
V=V_0\oplus V_1\oplus V_2\oplus V_3 \ .
$$
The kinematical data is encoded in the vector space $V_0$ of
gauge parameters and in the vector space $V_1$ of fields, while the
dynamical data is encoded in $V_2$ and $V_3$. These vector spaces are
typically infinite-dimensional spaces of sections of vector bundles
over a manifold $M$. 

The full symmetries and dynamics of field theories are expanded in terms of the
brackets as follows. 
Given $\lambda \in V_{0}$ and $A\in V_{1}$, the gauge
variations are encoded as the maps $A\mapsto A+\delta_\lambda A$ where
\begin{align} \label{gaugetransfA}
\delta_{\lambda}A=\sum_{n =0}^\infty \, \frac{1}{n!}\, (-1)^{\frac12\,{n\,(n-1)}}\, \ell_{n+1}(\lambda,A,\dots,A) \ ,
\end{align}
where the brackets involve $n$ insertions of the field
$A$. 
The field equations $F_A=0$ are encoded through
\begin{align}\label{EOM} 
F_A=\sum_{n =1}^\infty \, \frac{1}{n!}\, (-1)^{\frac12\,{n\,(n-1)}}\, \ell_{n}(A,\dots,A) \ ,
\end{align}
with the covariant gauge variations
\begin{align} \label{gaugetransfF}
\delta_{\lambda} F_A=\sum_{n =0}^\infty \, \frac{1}{n!}\, (-1)^{\frac12\,{n\,(n-1)}}\, \ell_{n+2}(\lambda,F_A,A,\dots,A) \ .
\end{align}
If the distribution in $TV_1$ spanned by the gauge parameters is
involutive off-shell, that is, when $F_A\neq0$, then the homotopy relations imply that the closure relation for the gauge algebra has the form
\begin{align}\label{eq:closure}
[\delta_{\lambda_1},\delta_{\lambda_2}]_\circ A
= \delta_{[\![\lambda_1,\lambda_2]\!]_A}A \ ,
\end{align}
where $[\delta_{\lambda_1},\delta_{\lambda_2}]_\circ :=\delta_{\lambda_1}\circ\delta_{\lambda_2} - \delta_{\lambda_2}\circ\delta_{\lambda_1}$, and
\begin{align*}
[\![\lambda_1,\lambda_2]\!]_A = -\sum_{n=0}^\infty \, \frac1{n!}\,
(-1)^{\frac12\,{n\,(n-1)}}\, \ell_{n+2}(\lambda_1,\lambda_2,A,\dots,A)
\ .
\end{align*}
The homotopy relations guarantee that the Jacobi
identity is generally satisfied for any triple of maps $\delta_{\lambda_1}$,
$\delta_{\lambda_2}$ and~$\delta_{\lambda_3}$.
The Noether identities are encoded by
\begin{align}\label{eq:Noether}
\dsf_AF_A := \sum_{n=0}^\infty \, \frac1{n!} \, (-1)^{\frac12\, n\,
  (n-1)} \, \ell_{n+1}(F_A,A,\dots,A) = 0 \ ,
\end{align}
which vanishes identically as a consequence of the homotopy relations
${\cal J}_n(A,\dots,A)=0$, for all $n\geq1$, of the
$L_\infty$-algebra~\cite{BVChristian}. 

For a field theory with a Lagrangian formulation, the action functional is encoded
via a symmetric non-degenerate bilinear pairing $\langle -,-\rangle :
V \otimes V\to\FR$ of degree~$-3$, which makes $(V,\langle-,-\rangle)$ into a cyclic
$L_\infty$-algebra. Then it is easy to see that the field equations
$F_A = 0$ follow from varying the action functional defined as 
\begin{align} \label{action}
S(A) := \sum_{n=1}^\infty \, \frac{1}{(n+1)!}\, (-1)^{\frac12\,{n\,(n-1)}}\, \langle A, \ell_{n}(A,\dots,A)\rangle \ ,
\end{align}
since then cyclicity implies $\delta S(A)=\langle F_A,\delta A\rangle$.
Cyclicity also implies
\begin{align}\label{eq:gtNoether}
\delta_\lambda S(A)=\langle F_A,\delta_\lambda A\rangle=-\langle\dsf_AF_A,\lambda\rangle \ ,
\end{align}
so that gauge invariance of the action functional $\delta_\lambda
S(A)=0$ is then equivalent to the Noether identities
$\dsf_AF_A=0$. 
For further details and the connections
to the BV--BRST formalism see \cite{BVChristian}, while details of the
aspects covered in this paper are found in \cite{ECPLinfty}.

\subsection{Example I: Chern--Simons gauge theory}
\label{sec:CStheory}

Let us illustrate how to reconstruct Chern--Simons theory in three
spacetime dimensions in the $L_\infty$-algebra framework, which is the prototypical example of a gauge theory with a simple
bracket structure, see e.g.~\cite{ECPLinfty,BVChristian}. Let $\frg$
be a real Lie algebra with Lie bracket $[-,-]_\frg$ and let $M$ be a
three-manifold. Then the 4-term $L_\infty$-algebra which organises Chern--Simons
gauge theory on $M$ is simply the differential graded Lie algebra whose underlying cochain complex is the
de~Rham complex $(\Omega^\bullet(M,\frg),\dd)$ in three dimensions with
values in the Lie algebra $\frg$, and whose underlying graded Lie algebra $\big(\Omega^\bullet(M,\frg),[-,-]_\frg\big)$ is given by extending the Lie bracket of $\frg$ to
  $\Omega^\bullet(M,\frg):=\Omega^\bullet(M)\otimes\frg$ by the tensor
  product with the exterior multiplication of forms; that is, 
\begin{align*}
  \ell_1(\alpha)=\dd\alpha \qquad \mbox{and} \qquad \ell_2(\alpha,\beta)=-[\alpha,\beta]_\frg
\end{align*}
 for $\alpha,\beta\in V=\Omega^\bullet(M,\frg)$.

 It is easy to check that these brackets reproduce the kinematical
 and dynamical content of classical Chern--Simons gauge theory via the prescription of
 Section~\ref{sec:Linftygft}. The transformation of a gauge field
 $A\in\Omega^1(M,\frg)$ by a gauge parameter
 $\lambda\in\Omega^0(M,\frg)$ is given by
 \begin{align*}
 \delta_\lambda A&=\ell_1(\lambda)+\ell_2(\lambda,A) =
                   \dd\lambda+[A,\lambda]_\frg \ .
 \end{align*}
 As for any
 differential graded Lie algebra, these transformations close a (field-independent)
 gauge algebra
 $[\![\lambda_1,\lambda_2]\!]_A=-\ell_2(\lambda_1,\lambda_2)
 =[\lambda_1,\lambda_2]_\frg$, which in this case is just the Lie
 algebra $\Omega^0(M,\frg)$. The field equations $F_A=0$ are encoded by
 \begin{align*}
 F_A&=\ell_1(A)-\tfrac12\,\ell_2(A,A) = \dd A+\tfrac12\,[A,A]_\frg \ ,
 \end{align*}
 with
 \begin{align*}
 \delta_\lambda F_A &= \ell_2(\lambda,F_A) =
                      [F_A,\lambda]_\frg
 \end{align*}
 expressing the covariance of the
 curvature 2-forms $F_A\in\Omega^2(M,\frg)$ under local gauge
 transformations. Thus the field equations $F_A=0$ are solved by flat
 connections of a trivial principal bundle on the three-manifold
 $M$, which are locally gauge equivalent to the trivial connection $A=0$. Finally
 \begin{align*}
 \dsf_A F_A&=\ell_1(F_A)-\ell_2(A,F_A) = \dd F_A+[A,F_A]_\frg = \dd^AF_A
 \ ,
 \end{align*}
 where $\dd^A$ denotes the usual covariant derivative. Thus the Noether
 identities $\dsf_AF_A=0$ are equivalent to the Bianchi identities
 $\dd^AF_A=0$ in $\Omega^3(M,\frg)$. In this way we reproduce the usual
 space of classical physical states of Chern--Simons theory as the moduli
 space of flat connections on~$M$.

 To reproduce the Chern--Simons action functional in this
 formulation, we need further structure: we assume that $M$ is a closed
 oriented three-manifold and that $\frg$ is a quadratic Lie algebra. We
 may then construct a cyclic $L_\infty$-algebra structure on $V$ by
 defining the pairing of $\frg$-valued forms in complementary degrees: 
 \begin{align} \label{eq:CSpairing}
 \langle\alpha,\beta\rangle := \int_M\,
 \Tr_\frg(\alpha\wedge\beta) \ ,
 \end{align}
where $|\alpha|+|\beta|=3$. This defines a cyclic non-degenerate pairing of
 degree~$-3$, where cyclicity follows from the invariance of the
 quadratic form $\Tr_\frg:\frg\otimes\frg\to\FR$ on $\frg$.
 Then the Chern--Simons action
 functional is reproduced according to the general
 prescription of Section~\ref{sec:Linftygft}:
 \begin{align}\label{eq:CSaction}
 S(A) = \frac12\,\langle A,\ell_1(A)\rangle -\frac1{6}\,\langle
 A,\ell_2(A,A)\rangle =
   \frac12\, \int_M\, \Tr_\frg\Big(A\wedge\dd A + 
 \frac13\, A\wedge [A, A]_\frg\Big) \ .
 \end{align}

 \subsection{Example II: Einstein--Cartan--Palatini gravity}
 \label{sec:ECP3d}

We shall now review the reconstruction of the first-order
formulation of gravity in the $L_\infty$-algebra framework, from the
perspective of the Einstein--Cartan--Palatini (ECP) formalism for
general relativity, 
following~\cite{ECPLinfty}. To set up notation and to illustrate the ideas, here we only
consider the $L_\infty$-algebras in three spacetime dimensions and in Euclidean
signature. The general cases of arbitrary dimensionality and signature
are treated in detail in~\cite{ECPLinfty}, and later we shall also
deal with the four-dimensional case. We comment below on the
differences when one extends these constructions to higher dimensions
(independently of the signature of spacetime, see~\cite{ECPLinfty}).

Let $M$ be a parallelizable three-manifold; for this it suffices to
suppose that $M$ is orientable, a condition we will need anyway later on. The graded vector space
$V$ underlying the $L_\infty$-algebra which organises ECP gravity on $M$ is given
by
\begin{align}\label{eq:ECPvectorspace}
V:= V_{0} \oplus
  V_{1} \oplus V_{2}\oplus
  V_3
\end{align}
where 
\begin{align*}
V_{0}&=\Gamma(TM)\times \Omega^{0}\big(M,\mathfrak{so}(3)\big) \ ,  \\[4pt] 
V_{1}&= \Omega^{1}(M,\FR^{3}) \times
\Omega^{1}\big(M,\mathfrak{so}(3)\big) \ , \\[4pt]
V_{2}&=\Omega^{2}\big(M,\midwedge^{2}\FR^{3}\big)\times
\Omega^{2}\big(M,\FR^{3}\big) \ , \\[4pt]
V_3&= \Omega^{1}\big(M,\Omega^{3}(M)\big)
\times
\Omega^3\big (M,\FR^{3}\big)
\ .
\end{align*} 
In the following we denote elements in degrees~$0$, $1$, $2$ and $3$
respectively by $(\xi,\rho)\in V _0$,
$(e,\omega)\in V _1$, $(E,{\mit\Omega})\in V_2 $,
and $({\CX},{\CP})\in V_3 $. 
This yields a 4-term $L_\infty$-algebra with differential defined by 
\begin{align*}
\ell_{1}(\xi,\rho)=(0,\dd\rho) \ , \quad
  \ell_1 (e,\omega)=(\dd\omega,\dd e) \qquad \mbox{and} \qquad \ell_{1} (E,{\mit\Omega})=(0, \dd
  {\mit\Omega}) \ . 
\end{align*}
The $2$-brackets are defined by
\begin{align}
\ell_{2}\big((\xi_{1},\rho_{1})\,,\,(\xi_{2},\rho_{2})\big)&=\big([\xi_{1},\xi_{2}]_\frv\,,\,-[\rho_{1},\rho_{2}]_{\aso(3)}+\LL_{\xi_1}\rho_{2}
- \LL_{\xi_{2}}\rho_{1}\big) \ , \nn
\\[4pt]
\ell_{2}\big((\xi,\rho)\,,\,(e,\omega)\big)&=\big(-\rho \cdot e
+\LL_{\xi}e \,,\, -[\rho,\omega]_{\aso(3)}+\LL_{\xi}\om\big) \
, \nn \\[4pt]
\ell_{2}\big((\xi,\rho) \,,\, (E,{\mit\Omega})\big)&=\big(-
[\rho, E]_{\aso(3)}+\LL_{\xi}E \,,\, -\rho \cdot
{\mit\Omega}+\LL_{\xi}{\mit\Omega} \big) \ , \nn \\[4pt]
\ell_{2}\big((\xi,\rho)\,,\,({\CX},{\CP})\big)&= \big
(\dd x^\mu\otimes\Tr(\iota_\mu \dd \rho \dwedge
{\CP}) +\LL_{\xi}{\CX}\,,\,-\rho
\cdot {\CP} +\LL_{\xi} {\CP}\big) \ , \nn \\[4pt] 
\ell_{2}\big((e_{1},\omega_{1})\,,\,(e_{2},\omega_{2})\big)&=-\big([\omega_{1},
                                                             \omega_{2}]_{\aso(3)}
\,,\, \omega_{1}\wedge
e_{2} +\omega_{2} \wedge e_{1} \big) \ , \nn \\[4pt]
\ell_{2}\big((e,\om)\,,\,(E,{\mit\Omega})\big)&=\big(\dd x^\mu\otimes\Tr ( \iota_\mu \dd e \dwedge E + \iota_\mu \dd\om \dwedge {\mit\Omega} \nn   -\iota_\mu e \dwedge \dd E - \iota_\mu \om\dwedge \dd {\mit\Omega}) \,, \nn \\ 
&\phantom{{}(\Tr \big( \iota_\mu \dd e \dwedge E +{}}
E\wedge e - \omega \wedge {\mit\Omega}\big) \ , \label{eq:3dbrackets} 
\end{align}
while all the rest of the brackets vanish. As shown
in~\cite{ECPLinfty}, this defines a differential graded Lie algebra.

Let us explain the notation used, as well as the content of these
brackets following from the general prescription of Section~\ref{sec:Linftygft}. The
dynamical fields of ECP gravity on $M$ are pairs $(e,\omega)$
consisting of a coframe field $e \in \Omega^{1}(M,\FR^{3})$
and a spin connection $\om \in \Omega^{1}\big(M,\mathfrak{so}(3)\big)$. Using the isomorphism
$\mathfrak{so}(3)\cong \midwedge^{2} \FR^{3}$, one identifies the
spin connection as an element $\om \in \Omega^{1}(M,\midwedge^{2}
\FR^{3})$; we may then expand $e=e^a\,{\tt E}_a$ and $\omega=\omega^{ab}\,{\tt E}_a\wedge{\tt E}_b$, where $e^a$ and $\omega^{ab}=-\omega^{ba}$ are $1$-forms on $M$, and ${\tt E}_a$ for $a=1,2,3$ form the standard orthonormal basis of $\FR^3$. The $\dwedge$-product separately takes the
exterior products of the `curved' spacetime differential forms on the
manifold $M$
and `flat' spacetime multivectors on the vector space $\FR^{3}$; in particular, $\omega\dwedge e = \omega^{ab}\wedge e^c \, {\tt E}_a\wedge {\tt E}_b\wedge {\tt E}_c$ while $\omega\wedge e = \delta_{bc}\,\omega^{ab}\,\wedge e^c\, {\tt E}_a$. Then
$\Tr:\midwedge^3\FR^{3}\to\FR$ is the Hodge duality
operator on $\FR^3$ endowed with the standard Euclidean metric, defined by $\Tr({\tt E}_a\wedge {\tt E}_b\wedge {\tt E}_c) = \epsilon_{abc}$.

The gauge parameters are pairs $(\xi,\rho)$ consisting of a vector field
$\xi\in\Gamma(TM)$ and a local rotation
$\rho\in\Omega^0(M,\mathfrak{so}(3))$. Following the prescription of
Section~\ref{sec:Linftygft}, their action on fields is given by
\begin{align*}
\delta_{(\xi,\rho)}(e,\om) =
  \ell_1(\xi,\rho)+\ell_2\big((\xi,\rho)\,,\,(e,\om)\big) = (-\rho\cdot e+\LL_\xi
  e\,,\,\dd \rho - [\rho,\om]_{\aso(3)}+\LL_\xi\om) 
\end{align*}
where $\LL_\xi$ is the Lie derivative along the vector field $\xi$, and
$\rho\cdot e = \rho^a{}_b\,e^b \, {\tt E}_a$ denotes matrix multiplication in the fundamental
representation of $\aso(3)$.
Thus $\xi$ generates the standard diffeomorphism symmetry of
general relativity, while $\rho$ generates changes of
orthonormal coframes for a given metric. The gauge algebra
\begin{align*}
[\![(\xi_1,\rho_1)\,,\,(\xi_2,\rho_2)]\!]_{(e,\om)} =
  -\ell_{2}\big((\xi_{1},\rho_{1})\,,\,(\xi_{2},\rho_{2})\big) =\big(-[\xi_{1},\xi_{2}]_\frv\,,\,[\rho_{1},\rho_{2}]_{\aso(3)}-\LL_{\xi_1}\rho_{2}
+ \LL_{\xi_{2}}\rho_{1}\big) 
\end{align*}
is the semi-direct product of the Lie algebras of vector fields on $M$
and of local rotations:
\begin{align}\label{ClassicalSymmetries}
\Gamma(TM)\ltimes \Omega^{0}\big(M,\mathfrak{so}(3)\big) \ .
\end{align}
We denote the Lie bracket of vector fields on $\frv:=\Gamma(TM)$ by
$[-,-]_\frv$.

The field equations are given by
\begin{align*}
F_{(e,\omega)} =\ell_{1}(e,\omega)-\tfrac{1}{2}\,
  \ell_{2}\big((e,\omega)\,,\,(e,\omega)\big) 
=(\dd \om,\dd e)+ \tfrac{1}{2} \,
  ([\omega,\omega]_{\aso(3)},2\,\omega\wedge
                                                        e) = (R,T) \ ,
\end{align*}
where
\begin{align*}
R:=\dd \om +
\tfrac{1}{2}\,[\om,\om]_{\aso(3)} \ \in \ \Omega^2\big(M,\mathfrak{so}(3)\big)
\end{align*}
is the curvature $2$-form of the spin connection $\om$, while the
covariant derivative of the coframe field
\begin{align*}
T:= \dd^{\om}e = \dd e + \om \wedge e \, \, \, \in \, \, \,
  \Omega^{2}(M,\FR^{3})
\end{align*} 
is the torsion $2$-form. The field equations $F_{(e,\om)}=(0,0)$ thus
imply two conditions: Firstly, that the torsion vanishes, $T=0$, which for non-degenerate coframe
fields $e$ can be solved to identify $\omega$ with the Levi--Civita
connection of the metric $\delta^{ab}\,e_a\otimes e_b$. Secondly, that
the curvature vanishes, $R=0$, which is just the vacuum Einstein field
equation in three dimensions, whose solutions correspond to flat
spacetimes which are locally isometric to Euclidean space. In this way
we can recover the usual space of classical physical states of
three-dimensional general relativity (in the absence of matter fields). However,
the ECP theory makes sense for degenerate coframes $e$, and this
extension of general relativity is necessary for its
$L_\infty$-algebra formulation where the space of fields is required
to be a vector (or an affine) space. The covariance of the field
equations is expressed through
\begin{align*}
\delta_{(\xi,\rho)}F_{(e,\om)} =
  \ell_2\big((\xi,\rho)\,,\,F_{(e,\om)}\big) =
  (-[\rho,R]_{\aso(3)}+\LL_\xi
  R,-\rho\cdot
  T+\LL_\xi T) \ .
\end{align*}

Finally,
\begin{align*}
\dsf_{(e,\om)}F_{(e,\om)} &=
                 \ell_1(F_{(e,\om)})-\ell_2\big((e,\om)\,,\,F_{(e,\om)}\big)
  \\[4pt]
  &= \big(\dd x^\mu\otimes\Tr(\iota_\mu e\dwedge\dd
  R-\iota_\mu\dd e\dwedge R +
  \iota_\mu\om\dwedge\dd T -
  \iota_\mu\dd\om\dwedge T)\,,\, \dd^\om
                                    T-R\wedge e\big) \ ,
\end{align*}
where $\iota_\mu$ denotes the contraction with vectors
$\partial_\mu=\frac\partial{\partial x^\mu}$ of the 
local holonomic frame dual to the basis $\{\dd x^\mu\}$ of $1$-forms
in a local coordinate chart on $M$, and
we identify the vector space of $1$-forms valued in $3$-forms $\Omega^1\big(M,\Omega^3(M)\big) $ with
$\Omega^1(M) \otimes\Omega^3(M)$. The Noether identities
$\dsf_{(e,\om)}F_{(e,\om)}=(0,0)$ thus impose two differential
identities among the field equations, the second of which
is equivalent to the first Bianchi identity $\dd^{\om}T=R\wedge e$.

The action functional can be derived in this language by assuming $M$
is closed\footnote{Alternatively, we may work with fields of suitable asymptotic decay.} and defining a
cyclic pairing $\langle -, - \rangle : V\otimes V \rightarrow \FR$ of
degree $-3$ as~\cite{ECPLinfty}
\begin{align} \label{eq:ECPpairing}
\langle (e,\om) \,,\, (E,{\mit\Omega}) \rangle:= 
\int_{M}\, \Tr \big(e\dwedge E+ {\mit\Omega} \dwedge \om \big) \ ,
\end{align}
on $V_{1}\otimes V_{2}$, and as
\begin{align} \label{eq:ECPpairing2}
\langle(\xi,\rho)\,,\,({\CX},{\CP})\rangle := \int_M\,
\iota_\xi{\CX} + \int_M\, \Tr\big(\rho\dwedge{\CP}\big) \ ,
\end{align}
on $V_{0}\otimes V_{3}$, where $\iota_\xi$ denotes contraction with the
vector field $\xi$. Then the prescription of
Section~\ref{sec:Linftygft} yields the standard ECP action functional
in three dimensions:
\begin{align}\label{eq:ECP3daction}
S(e,\omega) &= \tfrac12\, \big\langle (e,\omega)\,,\,\ell_1(e,\omega) \big\rangle -
\tfrac1{6}\, \big\langle (e,\omega)\,,\,\ell_2\big(
              (e,\omega)\,,\,(e,\omega)\big) \big\rangle \nn \\[4pt]
            &= \int_M\, \Tr
\Big(e\dwedge\big(\dd\omega+\tfrac12\, [\omega,\omega]_{\aso(3)}\big)
              \Big) \nn
  \nn \\[4pt]
  &= \int_{M}\,\Tr \big(e\dwedge
    R \big) =\int_{M}\, \epsi_{abc} \,
    e^{a} \wedge R^{bc} \ .
\end{align}
The first Noether identity from $\dsf_{(e,\om)}F_{(e,\om)}=(0,0)$, in
$\Omega^1(M,\Omega^3(M))$, 
corresponds to local diffeomorphism invariance
$\delta_{(\xi,0)} S(e,\om)=0$, while the second Noether identity, in $\Omega^3(M,\FR^3)$, corresponds to the local gauge symmetry
$\delta_{(0,\rho)} S(e,\om)=0$. This construction can be easily
extended to incorporate a cosmological constant~\cite{ECPLinfty}, and
we will do so later on when we study the braided version of this theory.

In higher dimensions $d\geq4$, the $L_\infty$-algebras which organise ECP gravity contain
higher brackets and are no longer simply differential graded Lie
algebras, owing to the higher degree polynomial nature of the theory
in the coframes. We will consider the four-dimensional case in detail
later on, where the theory is again classically equivalent to general
relativity, but now with propagating degrees of freedom.

\section{Drinfel'd twist deformation quantization}
\label{sec:Drinfeldtwist}

\subsection{Drinfel'd twists on manifolds}

We briefly summarise the basic theory of Drinfel'd
twists~\cite{MajidBook}, specialised from the outset to the Lie
algebra of vector fields $\frv:=\Gamma(TM)$ on a manifold $M$, which
generate infinitesimal diffeomorphisms of $M$; here we regard $\frv$
as a Lie algebra over $\FC$ (or more precisely its complexification $\frv\otimes\FC$). The main object of
interest in this theory is the universal enveloping algebra $U\frv$ of
$\frv$, which is the tensor algebra (over $\FC$) of $\frv$,
regarded as the free unital algebra generated by $\frv$,
modulo the two-sided ideal generated by $\xi_1\, \xi_2 -\xi_2\, \xi_1-
[\xi_1,\xi_2]_\frv$ for all $\xi_1,\xi_2\in\frv$.

The enveloping algebra $U\frv$ is naturally a cocommutative Hopf
algebra with coproduct $\Delta:U\frv\to U\frv\otimes U\frv$, counit
$\varepsilon:U\frv\to\FC$ and antipode $S:U\frv\to U\frv$ defined on
generators by
\begin{align*}
\Delta(\xi)&= \xi\otimes 1 + 1\otimes \xi\qquad \mbox{and} \qquad
             \Delta(1)=1\otimes 1 \ , \\[4pt]
  \varepsilon(\xi)&=0 \qquad \mbox{and} \qquad \varepsilon(1)=1 \ ,
  \\[4pt]
S(\xi)&=-\xi \qquad \mbox{and} \qquad S(1)=1 \ ,
\end{align*}
for all $\xi\in \frv$. The maps $\Delta$ and $\varepsilon$ are
extended as algebra homomorphisms, and $S$ as an algebra
antihomomorphism to all of $U\frv$. We adopt the standard
Sweedler notation $\Delta(X)=:X_{\textrm{\tiny(1)}}\otimes
X_{\textrm{\tiny(2)}}$ (with summations understood) to abbreviate the
coproduct of $X\in U\frv$.

From the perspective of field theory, the Hopf algebra $U\frv$ is
generally ``too big'' for the purposes of deformation quantization, as the
ensuing constructions would then only work for
diffeomorphism-invariant field theories. This will actually be the case for the
examples treated in this paper, but to handle other field theories
with a smaller set of spacetime symmetries, one could simply replace
$\frv$ with any Lie subalgebra and apply all of the following
constructions to the corresponding enveloping Hopf subalgebra $\CH\subseteq
U\frv$. This can be interpreted as performing the deformation
quantization along the symmetries of $M$ which leave the field
theory on $M$ invariant, see for instance~\cite{Chaichian:2004za,Wess:2003da,Borowiec:2008uj,DimitrijevicCiric:2018blz} for examples.

To treat the typical examples of twists which arise in physics, we
shall need to introduce formal power series extensions in a
deformation parameter $\hbar$. If $V$ is a complex vector space, we
denote by $V[[\hbar]]$ the vector space of formal power series in
$\hbar$ with coefficients in $V$; it is naturally a module over
$\FC[[\hbar]]$. If $V$ and $W$ are complex vector spaces, then
$V[[\hbar]]\otimes W[[\hbar]]\cong(V\otimes W)[[\hbar]]$, where on the
left-hand side we use the appropriate topological tensor product (see
e.g.~\cite{Barnes:2014ksa} for further details), which for
simplicity we do not distinguish from the usual tensor product of
vector spaces over $\FC$. With these conventions, we denote by
$U\frv[[\hbar]]$ the formal power series extension of the
cocommutative Hopf algebra $U\frv$, with operations applied term by term to the coefficients of series.

A \emph{Drinfel'd twist} is a normalized $2$-cocycle of the Hopf
algebra $U\frv[[\hbar]]$. By this we mean an invertible element
$\CF\in U\frv[[\hbar]]\otimes U\frv[[\hbar]]$ satisfying the cocycle
condition
\begin{align*}
\CF_{12}\,(\Delta\otimes \id)\CF=\CF_{23}\,(\id\otimes \Delta)\CF \ ,
\end{align*}
where $\CF_{12}=\CF\otimes 1$ and $\CF_{23}=1\otimes \CF$, together
with the normalization condition
\begin{align*}
  (\varepsilon\otimes \id)\CF=1=(\id\otimes \varepsilon)\CF \ .
\end{align*}
We write the power series expansion of the twist as $\CF=:\sff^{k}\otimes
\sff_{k}\in U\frv[[\hbar]]\otimes U\frv[[\hbar]]$, with the sum over
$k$ understood. Then the cocycle
condition may be written in Sweedler notation as
\begin{align}\label{eq:cocyclesw}
\sff^k\,\sff^l_{\textrm{\tiny(1)}}\otimes\sff_k\,\sff^l_{\textrm{\tiny(2)}}\otimes\sff_l
  =
  \sff^l\otimes\sff^k\,{\sff_l}_{\textrm{\tiny(1)}}\otimes\sff_k\,{\sff_l}_{\textrm{\tiny(2)}}
  \ ,
\end{align}
and the normalization condition as
\begin{align*}
\varepsilon(\sff^k)\,\sff_k = 1 = \sff^k\,\varepsilon(\sff_k) \ .
\end{align*}
As a consequence, the inverse twist $\CF^{-1}=: \bar{\sff}^{k}\otimes
\bar{\sff}_{k}\in U\frv[[\hbar]]\otimes U\frv[[\hbar]]$ satisfies
similar conditions.

A Drinfel'd twist $\CF$ defines a new Hopf algebra structure on the
universal enveloping algebra $U\frv[[\hbar]]$, which we denote by
$U_\CF\frv$. As algebras, $U_\CF\frv=U\frv[[\hbar]]$ and also the
counit of $U_\CF\frv$ is the same as the counit $\varepsilon$ of
$U\frv[[\hbar]]$. The new coproduct $\Delta_\CF$ and antipode $S_\CF$
of $U_\CF\frv$ are given by
\begin{align*}
\Delta_{\CF}(X):= \CF \, \Delta(X) \, \CF^{-1} \qquad \mbox{and}
  \qquad S_{\CF}(X):=\sff^{k}\,S(\sff_{k})\, S(X)\,
  S(\bar{\sff}^l)\,\bar{\sff}_{l} \ ,
\end{align*}
for all $X\in U\frv[[\hbar]]$. For $X\in U_\CF\frv$, we adopt the Sweedler notation $\Delta_\CF(X) =: X_{\bar\swone}\otimes X_{\bar\swtwo}$ to distinguish the twisted and untwisted coproducts.

This new Hopf algebra is not cocommutative in general,
i.e. $\Delta_\CF\neq\Delta_\CF^{\rm op}:=\tau\circ\Delta_\CF$, where
$\tau$ is the transposition which interchanges the factors in a tensor
product. However, the cocommutativity is controlled up to a
\emph{braiding} given by the invertible $\RR$-matrix $\RR\in
U\frv[[\hbar]]\otimes U\frv[[\hbar]]$ induced by the twist as
\begin{align}
\RR=\CF_{21}\, \CF^{-1}=:\sfR^k\otimes\sfR_k \ ,
\end{align}
where $\CF_{21}=\tau(\CF)=\sff_k\otimes\sff^k$ is the twist with its legs
swapped. Explicitly
\begin{align*}
\Delta_{\CF}^{\rm op}(X)=\RR\,\Delta_{\CF}(X)\,\RR^{-1} \ .
\end{align*}
It is easy to see that the $\RR$-matrix is triangular, that is
\begin{align*}
  \RR_{21} = \RR^{-1} = \sfR_k\otimes\sfR^k \ ,
\end{align*}
  and moreover that 
\begin{align*}
(\Delta_{\CF}\otimes \id) \RR = \RR_{13}\, \RR_{23} \qquad \mbox{and} \qquad
(\id\otimes \Delta_{\CF})\RR= \RR_{13}\, \RR_{12} \ , 
\end{align*}
where $\RR_{13}=\sfR^k\otimes 1 \otimes\sfR_k$, or in Sweedler notation
\begin{align}\label{eq:Rmatrixidsw}
\sfR^{k}_{\bar{\textrm{\tiny(1)}}}\otimes \sfR^{k}_{\bar{\textrm{\tiny(2)}}}\otimes \sfR_{k}= \sfR^{l}\otimes \sfR^{k}\otimes
  \sfR_{l}\, \sfR_{k} \qquad \mbox{and} \qquad \sfR^{k}\otimes
  {\sfR_{k}}_{\bar{\textrm{\tiny(1)}}}\otimes {\sfR_{k}}_{\bar{\textrm{\tiny(2)}}} =
  \sfR^{l}\,\sfR^{k}\otimes \sfR_{k} \otimes \sfR_{l} \ .
\end{align}
The $\RR$-matrix also satisfies the Yang--Baxter equation 
\begin{align}\label{eq:YangBaxter}
\RR_{12} \, \RR_{13} \, \RR_{23}= \RR_{23} \, \RR_{13} \, \RR_{12} \ .
\end{align}

\begin{example}\label{ex:MoyalWeyltwist}
The standard example on $M=\FR^d$ is the
abelian Hermitian Moyal--Weyl twist
\begin{align}
\CF_\theta =
  \exp\big(-\tfrac{\mathrm{i}\,\hbar}2\,\theta^{\mu\nu}\,\partial_\mu\otimes\partial_\nu\big) =:\sff^k_\theta\otimes\sff_{\theta k}
  \ ,
\end{align}
where $(\theta^{\mu\nu})$ is a $d{\times}d$ antisymmetric real-valued
matrix. This twist is based on the enveloping Hopf algebra $\CH$ of
the abelian Lie algebra of infinitesimal translations. In this case the twisted Hopf
algebra is cocommutative (in fact $\Delta_{\CF_\theta}=\Delta$), and the
$\RR$-matrix is given by
\begin{align*}
\RR_\theta = \CF_\theta^{-2} =
  \exp\big(\,\mathrm{i}\,\hbar\,\theta^{\mu\nu}\,\partial_\mu\otimes\partial_\nu\big)
  \ .
\end{align*}
However, our
considerations in the following apply to a more general class of
Drinfel'd twists that we shall specify more precisely later on.
\end{example}

\subsection{Modules}
\label{sec:twistedmodules}

Drinfel'd twist deformation quantization consists in twisting the enveloping
Hopf algebra $U\frv$ to a non-cocommutative Hopf algebra $U_\CF\frv$,
while simultaneously twisting all of its modules~\cite{SpringerBook}. There is a symmetric
monoidal category $\CCM$ whose objects are (left)
$U\frv$-modules and whose morphisms are equivariant maps (see
e.g.~\cite{Barnes:2014ksa}). Since $U\frv$ is a cocommutative Hopf
algebra (equivalently it has a triangular structure with trivial
$\RR$-matrix $1\otimes1$), the braiding isomorphism of $\CCM$ is just the trivial
transposition $\tau$.

A $U\frv$-module algebra is an algebra in the category
$\CCM$. By this we mean an algebra $(\CA,\mu)$ with a
$U\frv$-action $\triangleright: U\frv\otimes \CA
\rightarrow \CA$ which is compatible with the algebra multiplication
via the coproduct $\Delta$, that is
\begin{align*}
X\triangleright\mu(a\otimes b)
  = \mu\big(\Delta(X)\triangleright(a\otimes b)\big)
\end{align*}
for all $X\in U\frv$ and $a,b\in\CA$, where $\mu:\CA\otimes\CA\to\CA$
is the product on $\CA$. We will usually drop the symbol
$\triangleright$ to simplify the notation. This condition means in particular that
vector fields $\xi\in\frv$ act on $(\CA,\mu)$ as derivations:
\begin{align*}
  \xi\big(\mu(a\otimes b)\big) = \mu\big(\xi(
  a)\otimes b\big) + \mu\big(a\otimes\xi( b)\big) \ .
\end{align*}

A Drinfel'd twist $\CF$ defines a functorially equivalent symmetric monoidal
category ${}_\CF\CCM$ of left
$U_\CF\frv$-modules~\cite{Barnes:2014ksa}. The braiding isomorphism of
${}_\CF\CCM$ is now non-trivial and given by composing the transposition
$\tau$ with the action of the inverse of the $\RR$-matrix. Since $\RR$ is
triangular, $\RR_{21}=\RR^{-1}$, the braiding is symmetric,
i.e. the braiding isomorphism squares to the identity
morphism. The action of the inverse $\RR$-matrix on arbitrary tensor products of $U_\CF\frv$-modules $\CV_1\otimes\cdots \otimes\CV_n$ can be computed using the identities \eqref{eq:Rmatrixidsw} for $\RR_{21}=\RR^{-1}$ and their iterations to get
\begin{align}\label{eq:Rmatrixidswn}
\sfR_k(v_1\otimes\cdots\otimes v_{n-1})\otimes\sfR^k( v_n) &= \sfR_{k_1}(v_1)\otimes\cdots\otimes\sfR_{k_{n-1}}(v_{n-1})\otimes \sfR^{k_{n-1}}\cdots\sfR^{k_{1}}(v_n) \ , \nonumber\\[4pt]
\sfR_k(v_1)\otimes\sfR^k(v_2\otimes\cdots\otimes v_n)&= \sfR_{k_1}\cdots\sfR_{k_{n-1}}(v_1)\otimes\sfR^{k_1}(v_2)\otimes\cdots\otimes \sfR^{k_{n-1}}(v_n) \ ,
\end{align}
for $n\geq2$ and $v_i\in \CV_i$. Intuitively, the identities \eqref{eq:Rmatrixidswn} tell us that passing an element ``at once" over many elements $v_i$ is the same as passing successively over ``each one'' individually.

If $(\CA,\mu)$ is a (left) $U\frv$-module algebra, then we
can deform the product $\mu$ on $\CA$ by precomposing it with the
inverse of the twist
$\CF$ to get a new product
\begin{align}\label{eq:mustar}
\mu_\star (a\otimes b) = \mu\circ\CF^{-1}(a\otimes b) =
  \mu\big(\bar\sff^k(a)\otimes\bar\sff_k(b)\big) \ ,
\end{align}
for $a,b\in\CA$, where on the right-hand side we extend $\mu$ to
$\CA[[\hbar]]\otimes\CA[[\hbar]] \cong(\CA\otimes\CA)[[\hbar]]$ by
applying it term by term to the coefficients of a formal power series.
The cocycle condition on $\CF$ guarantees that this
produces an associative star-product $\mu_\star$ on $\CA[[\hbar]]$, and it generally defines a
noncommutative $U_\CF\frv$-module algebra $(\CA[[\hbar]],\mu_\star)$,
that is, an algebra 
in the category of $U_\CF\frv$-modules:
\begin{align*}
X\big(\mu_\star(a\otimes b)\big) =
  \mu_\star\big(\Delta_\CF(X)(a\otimes b)\big) \ ,
\end{align*}
for all $X\in U\frv$ and $a,b\in\CA$.
In the following we denote $(\CA,\mu)$ by $\CA$ and
$(\CA[[\hbar]],\mu_\star)$ by $\CA_\star$ for brevity. 

If the algebra $\CA$ is commutative,
then $\CA_\star$ is braided commutative: the noncommutativity of
$\CA_\star$ is controlled by the $\RR$-matrix as
\begin{align*}
\mu_\star(a\otimes b) = \mu_\star\big(\sfR_k(b)\otimes\sfR^k(a)\big) \ ,
\end{align*}
which is easily proven by recalling that $\RR=\CF_{21}\,
\CF^{-1}$. 

\begin{example}\label{ex:MoyalWeylstar}
We apply this construction to
Example~\ref{ex:MoyalWeyltwist}. Let $\CA$ be the commutative algebra of smooth
complex-valued functions
on $\FR^d$, with the usual pointwise multiplication 
$\mu(f\otimes g)=f\,g$. Then we obtain the noncommutative Moyal--Weyl star-product
\begin{align*}
 f\star_\theta g &=
  \mu\circ\CF_\theta^{-1}(f\otimes g) \\[4pt] &=
                                                {\bar{\sff}_\theta}^k(f)\,{\bar{\sff}_{\theta}}{}_k(g) \\[4pt]
&=f\,g + \sum_{k=1}^\infty\,\frac{(\mathrm{i}\,\hbar)^k}{2^k\,k!}\,
                                                \theta^{\mu_1\nu_1}\cdots\theta^{\mu_k\nu_k}\,
                                                \partial_{\mu_1}\cdots
                                                \partial_{\mu_k}f \,
                                                \partial_{\nu_1}\cdots \partial_{\nu_k}g
\end{align*}
of smooth functions $f$ and $g$ on $\FR^d$.
\end{example}

The tensor algebra of differential forms and vector fields on the manifold $M$ is covariant under the action
of the universal enveloping algebra of infinitesimal diffeomorphisms
$U\frv$. We can construct a noncommutative differential geometry on
$M$ by requiring it to be covariant with respect to the twisted Hopf
algebra $U_\CF\frv$~\cite{SpringerBook, NAGravity}. We illustrate this
below by considering the deformation quantization of the exterior
algebra of differential forms and the (Lie and universal enveloping)
algebra of vector fields. Drinfel'd twist deformation of differential and Cartan calculus on a manifold $M$ has been studied in~\cite{Aschieri:2005zs,NAGravity,Weber:2019ryz,Aschieri:2020ifa}.

\subsection{Differential forms}
\label{sec:twistedforms}

Let $\CA =\big(\Omega^{\bullet}(M),\wedge\big)$ be the exterior
algebra of differential forms on $M$ (or
more precisely its complexification). This is a module over the Lie
algebra of vector fields $\frv$, where the (left) action is given by
the Lie derivative:
\begin{align*}
\xi(\alpha) := \LL_\xi\alpha
\end{align*}
for $\xi\in\frv$ and $\alpha\in\Omega^\bullet(M)$. This action is
extended to a $U\frv$-action via successive applications of the Lie
derivative, viewed as higher order differential operators:
\begin{align*}
\xi_1\cdots\xi_n(\alpha) =
  (\LL_{\xi_1}\circ\cdots\circ\LL_{\xi_n})\alpha 
\end{align*}
for $\xi_1,\dots,\xi_n\in\frv$. This makes
$\big(\Omega^\bullet(M),\wedge\big)$ into a $U\frv$-module
algebra, since
\begin{align*}
\xi (\alpha\wedge \beta) = \LL_{\xi}(\alpha\wedge \beta)=\LL_{\xi} \alpha \wedge \beta+ \alpha \wedge \LL_{\xi} \beta =\wedge \circ \Delta(\xi) (\alpha\otimes \beta)
\end{align*}
for all $\xi\in\frv$ and $\alpha,\beta\in
\Omega^{\bullet}(M)$.

The twisted exterior algebra
$\CA_\star=\big(\Omega^\bullet(M)[[\hbar]],\wedge_\star\big)$ is a
$U_\CF\frv$-module algebra. Following the prescription
\eqref{eq:mustar}, for $\alpha,\beta\in
\Omega^{\bullet}(M)$ we set
\begin{align*}
\alpha\wedge_\star\beta:=
\bar\sff^k(\alpha) \wedge \bar\sff_k(\beta) \ ,
\end{align*}
with (graded) braided commutativity controlled by the $\RR$-matrix as
\begin{align*}
\alpha\wedge_{\star} \beta = (-1)^{|\alpha| \,|\beta|} \ \sfR_{k} (\beta)
  \wedge_{\star} \sfR^{k} (\alpha) \ .
\end{align*}

The standard (undeformed) exterior derivative
$\dd:\Omega^\bullet(M)\to\Omega^\bullet(M)$ is then an ordinary (not
braided!) graded derivation of the twisted exterior algebra $\CA_\star$:
\begin{align*}
\dd(\alpha\wedge_{\star} \beta) = \dd\alpha\wedge_\star\beta +
  (-1)^{|\alpha|}\,\alpha\wedge_\star\dd\beta \ .
\end{align*}
This property is fulfilled because the usual exterior derivative
commutes with the Lie derivatives that enter in the definition of the
star-exterior product. It is also still a differential on $\CA_\star$,
that is, $\dd^2=0$. This twists the de~Rham complex $(\Omega^\bullet(M),\dd)$ of the
manifold $M$ to a noncommutative differential graded algebra~$(\CA_\star,\dd)$. 

In our applications to field theory, we shall be interested in a special
class of Drinfel'd twists. If $M$ is a closed oriented manifold of
dimension $d$, we will require that the usual integral over $M$ is graded
cyclic under the deformed exterior
product $\wedge_\star$:
\begin{align}\label{eq:intcyclic}
\int_M\,\alpha\wedge_\star\beta= (-1)^{|\alpha|\,|\beta|} \,
  \int_M\,\beta\wedge_\star\alpha \ ,
\end{align}
where $|\alpha|+|\beta|=d$. This is guaranteed if the twist $\CF$
satisfies the condition
$S(\bar\sff^k)\,\bar\sff_k=1$, where $S$ is the antipode of
$U\frv$~\cite{AschCast}; for example, this holds for abelian twists\footnote{That is, a Drinfel'd twist generated by an abelian subalgebra of $\frv$, such as the Moyal--Weyl twist of Example \ref{ex:MoyalWeyltwist}.}, whereby $\CF_{21}=\CF^{-1}$. We shall further
require that the twist $\CF$ is Hermitian, that is, it defines a
Hermitian star-product $\wedge_\star$:
\begin{align*}
(\alpha\wedge_\star\beta)^* = (-1)^{|\alpha|\,|\beta|} \,
  \beta^*\wedge_\star\alpha^* \ ,
\end{align*}
where ${}^*$ denotes complex conjugation. This is guaranteed if $\CF$
satisfies the reality condition given by $\bar\sff^k{}^*\otimes\bar\sff_k{}^* =
S(\bar\sff_k)\otimes S(\bar\sff^k)$. 

All of this has an extension to the graded Lie algebra
$\CA=(\Omega^\bullet(M,\frg),[-,-]_\frg)$ of differential forms valued
in a Lie algebra $\frg$, where $[-,-]_\frg$ denotes the tensor product
of exterior multiplication of forms with the Lie bracket of
$\frg$. With the trivial $U\frv$-action on $\frg$, we use \eqref{eq:mustar} to define
\begin{align*}
[\alpha,\beta]_\frg^\star:=[\bar\sff^k(\alpha),\bar\sff_k(\beta)]_\frg \ ,
\end{align*}
for $\alpha,\beta\in\Omega^\bullet(M,\frg)$.
This makes
$\CA_\star=(\Omega^\bullet(M,\frg)[[\hbar]],[-,-]_\frg^\star)$ into a (graded)
\emph{braided Lie algebra}, that is, a graded Lie algebra in the category
${}_\CF\CCM$~\cite{Barnes:2015uxa}. The new bracket 
is now (graded) braided antisymmetric: 
\begin{align*}
[\alpha_{1},\alpha_{2}]_\frg^{\star}=-(-1)^{|\alpha_1|\,|\alpha_2|} \,
  [\sfR_{k}(\alpha_{2}),\sfR^{k}(\alpha_{1})]_\frg^\star 
  \ ,
\end{align*}
and satisfies the (graded) braided Jacobi identity:
\begin{align}\label{eq:braidedJacobiforms}
[\alpha_{1},[\alpha_{2},\alpha_{3}]_\frg^{\star}]_\frg^{\star}
  =[[\alpha_{1},\alpha_{2}]_\frg^{\star},\alpha_{3}]_\frg^{\star} +
  (-1)^{|\alpha_1|\,|\alpha_2|}\,
  [\sfR_{k}(\alpha_{2}),[\sfR^{k}(\alpha_{1}) ,\alpha_{3}]_\frg^{\star}]_\frg^{\star}
  \ ,
\end{align}
for all $\alpha_{1},\alpha_{2},\alpha_{3} \in
\Omega^\bullet(M,\frg)$; this will follow as a special example of Proposition~\ref{prop:braidedfromclassical} below. 
Graded cyclicity can then be formulated
as above by precomposing the integral over $M$ with an invariant
quadratic form $\Tr_\frg$ on the Lie algebra $\frg$.

\subsection{Diffeomorphisms}
\label{sec:braideddiff}

Since we are also interested in diffeomorphism symmetry in our
applications to gravity, we have to consider the Drinfel'd twist
deformation of the algebra of vector fields
$\frv=\Gamma(TM)$. Deforming the Lie bracket following the general
prescription \eqref{eq:mustar}, we obtain the star-bracket of two
vector fields $\xi_1,\xi_2\in\Gamma(TM)$ as
\begin{align*}
[\xi_1,\xi_2]_\frv^\star = [\bar\sff^k(\xi_1),\bar\sff_k(\xi_2)]_\frv =
  \xi_1\star\xi_2 - \sfR_k(\xi_2)\star\sfR^k(\xi_1) \ ,
\end{align*}
where the second equality takes place in the twisting of the universal
enveloping algebra of vector fields $U\frv$. The star-bracket makes
$(\Gamma(TM)[[\hbar]],[-,-]_\frv^\star)$ into the braided Lie algebra
of vector fields. The star-bracket is braided antisymmetric:
\begin{align*}
[\xi_1,\xi_2]_\frv^\star = -[\sfR_k(\xi_2),\sfR^k(\xi_1)]_\frv^\star \ ,
\end{align*}
and it also fulfills the braided Jacobi identity
\begin{align*}
\big[\xi_1,[\xi_2,\xi_3]_\frv^\star\big]_\frv^\star =
  \big[[\xi_1,\xi_2]_\frv^\star,\xi_3\big]_\frv^\star +
  \big[\sfR_k(\xi_2),[\sfR^k(\xi_1),\xi_3]_\frv^\star\big]_\frv^\star
  \ ,
\end{align*}
for all $\xi_1,\xi_2,\xi_3\in\Gamma(TM)$.

Braided diffeomorphisms act via the star-Lie derivative, which is defined following \eqref{eq:mustar} as
\begin{align*}
\LL_\xi^\star T = \LL_{\bar\sff^k(\xi)} \bar\sff_k(T) \ ,
\end{align*}
for any vector field $\xi$ and any tensor field $T$ on $M$. It provides a
representation of the braided Lie algebra of vector fields on
differential forms and tensor fields, since it
satisfies
\begin{align*}
[\LL_{\xi_1}^\star,\LL_{\xi_2}^\star]_\circ^\star :=
  \LL_{\xi_1}^\star\circ\LL_{\xi_2}^\star -
  \LL_{\sfR_k(\xi_2)}^\star\circ \LL_{\sfR^k(\xi_1)}^\star =
  \LL_{[\xi_1,\xi_2]_\frv^\star}^\star \ .
\end{align*}
It is also a \emph{braided derivation}, that is
\begin{align*}
\LL_\xi^\star(T_1\otimes_\star T_2) = \LL_\xi^\star T_1\otimes_\star T_2 +
  \sfR_k(T_1)\otimes_\star \LL_{\sfR^k(\xi)}^\star T_2 \ ,
\end{align*}
for any vector field $\xi$ and any two tensors $T_1,T_2$ on $M$, where
$T_1\otimes_\star
T_2:=\bar\sff^k(T_1)\otimes_{C^\infty(M)}\bar\sff_k(T_2)$. These properties of the star-Lie derivative will be generalized to more general braided Lie algebra actions in Section~\ref{sec:bft}.

\section{Braided $L_\infty$-algebras}
\label{sec:braidedLinfty}

\subsection{$L_\infty$-algebras in braided representation categories}
\label{sec:Linftycat}

Let $(V,\{\ell_n\})$ be an $L_\infty$-algebra as defined in
Section~\ref{sec:Linfty}. The key to defining our notion of a braided
$L_\infty$-algebra consists of reinterpreting all the defining
relations in terms of morphisms in a suitable category, before acting
on elements of $V$. This reformulation also has the advantage of
absorbing the cumbersome sign factors which arise from gradings in the
homotopy relations of an $L_\infty$-algebra.

To illustrate
the idea, let us begin by considering the simplest case where only the $2$-bracket
$\ell_2$ is non-vanishing. Let $\CCV$ be the symmetric monoidal
category of vector spaces (sitting in degree~$0$), whose braiding isomorphism
is given by transposition $\tau$ which exchanges factors of a tensor
product of two vector spaces. Then symmetric and antisymmetric morphisms in $\CCV$ are
defined by acting with the symmetric group $S_n$ of degree~$n$ via
the braiding $\tau$. As an object in $\CCV$, the vector space $V$
carries the structure of a Lie algebra: the morphism $\ell_2:V\otimes
V\to V$ is antisymmetric:
\begin{align*}
\ell_2=-\ell_2\circ\sigma_{(12)}=-\ell_2\circ\tau^{\phantom{\dag}}_{V,V} \ ,
\end{align*}
where $\sigma_{(12)}\in S_2$ is the non-identity permutation of
degree~$2$; on elements this reads
\begin{align*}
  \ell_2(v_1\otimes v_2)=-\ell_2\big(\tau^{\phantom{\dag}}_{V,V}(v_1\otimes
  v_2)\big)=-\ell_2(v_2\otimes v_1) \ .
\end{align*}
  The Jacobi
identity for $\ell_2$ in this framework is $\mathcal{J}_3=0$, where the Jacobiator is
the morphism $\mathcal{J}_3:V\otimes V\otimes V\to V$ in $\CCV$
defined via elements $\sigma\in S_3$ permuting the entries in $V\otimes V\otimes V$ by
\begin{align}\label{eq:Jacobiator}
\mathcal{J}_3:=\ell_2\circ(\ell_2\otimes\id)\circ\big(\sigma_{\id}+\sigma_{(132)}-\sigma_{(23)}\big)
  = \ell_2\circ(\ell_2\otimes\id)\circ\big(\id^{\otimes3}+\tau^{\phantom{\dag}}_{V,V\otimes
  V}-\id\otimes\tau^{\phantom{\dag}}_{V,V}\big) \ .
\end{align}
It is easy to check on elements of $V$ that $\CJ_3(v_1,v_2,v_3)=0$ is
just the standard Jacobi identity for the $2$-bracket $\ell_2$.

Let now $\CCV^\sharp$ be the symmetric monoidal
category of $\RZ$-graded vector spaces, whose braiding
isomorphism is given by transposition times the degree swap
multiplication, which we denote by $\tau^\sharp$. Now symmetric and
antisymmetric morphisms in $\CCV^\sharp$ are defined using the
braiding $\tau^\sharp$ to act with the symmetric group $S_n$, and
yield graded versions. As an object in $\CCV^\sharp$, the vector space
$V$ thus carries the structure of a graded Lie algebra: the morphism
$\ell_2:V\otimes V\to V$ of degree~$0$ is (graded) antisymmetric:
\begin{align*}
\ell_2=-\ell_2\circ\sigma_{(12)} = -\ell_2\circ\tau_{V,V}^\sharp \ ,
\end{align*}
which on homogeneous elements reads
\begin{align*}
  \ell_2(v_1\otimes v_2)=-\ell_2\big(\tau_{V,V}^\sharp(v_1\otimes
  v_2)\big) = -(-1)^{|v_1|\,|v_2|}\,\ell_2(v_2\otimes v_1) \ .
\end{align*}
  The graded
Jacobi identity for $\ell_2$ in this language is analogous to the vanishing of the
Jacobiator \eqref{eq:Jacobiator}, with $\tau$ replaced by
$\tau^\sharp$. We have already encountered a concrete example 
in Section~\ref{sec:twistedforms}: in the present context,
$(\Omega^\bullet(M,\frg),[-,-]_\frg)$ is a Lie algebra in the
representation category $\CCM^\sharp$ of $\RZ$-graded $U\frv$-modules,
with $U\frv$ itself regarded as sitting in degree~$0$.

It should now be clear how to pass to the definition of a braided Lie
algebra: one simply requires a permutation $\sigma\in S_n$ to act via
a (graded) braided transposition in a braided monoidal category. This can be defined in any symmetric
monoidal category, but for concreteness, and with any eye to the
applications we are interested in later on, we specialise immediately
to the twisted representation category ${}_\CF\CCM^\sharp$ of
$\RZ$-graded left $U_\CF\frv$-modules. Similarly to
Section~\ref{sec:twistedforms}, the braiding isomorphism in this
category is determined by the $\RR$-matrix as ${}_{\textrm{\tiny$\CF$}}^{\phantom{\dag}}\tau^\sharp:=\RR^{-1}\circ\tau^\sharp$. Then as an
object in ${}_\CF\CCM^\sharp$, the vector space $V$ carries the
structure of a braided Lie algebra: the morphism $\ell_2:V\otimes V\to
V$ satisfies
\begin{align*}
\ell_2=-\ell_2\circ\sigma_{(12)}=-\ell_2\circ{}_{\textrm{\tiny$\CF$}}^{\phantom{\dag}}\tau_{V,V}^\sharp \ ,
\end{align*}
which on homogeneous elements reads
\begin{align*}
  \ell_2(v_1\otimes v_2)=-\ell_2\big({}_{\textrm{\tiny$\CF$}}^{\phantom{\dag}}\tau_{V,V}^\sharp(v_1\otimes
v_2)\big) = -(-1)^{|v_1|\,|v_2|}\,\ell_2\big(\sfR_k(v_2)\otimes
  \sfR^k(v_1)\big) \ .
\end{align*}
The braided Jacobi identity can be expressed as in
\eqref{eq:Jacobiator} by replacing $\tau$ with ${}_{\textrm{\tiny$\CF$}}^{\phantom{\dag}}\tau^\sharp$. In
Section~\ref{sec:twistedforms} we already encountered the concrete
example of the Lie algebra
$(\Omega^\bullet(M,\frg)[[\hbar]],[-,-]_\frg^\star)$ in the symmetric
monoidal category \smash{${}_\CF\CCM^\sharp$}.

At this stage it is straightforward to pass from Lie algebras to
$L_\infty$-algebras, allowing other brackets $\ell_n$ to be
non-zero for $n\neq2$. An $L_\infty$-algebra consists of an object $V$
in the category $\CCV^\sharp$ together with morphisms $\ell_n:
\text{\Large$\otimes$}^{n} V \rightarrow V$ which are
antisymmetric in $\CCV^\sharp$. The key point is that the homotopy
relations \eqref{eq:calJndef} can now be expressed, without acting
explicitly on elements of $V$, as the vanishing of
morphisms $\CJ_n: \text{\Large$\otimes$}^{n}V\to V$ in $\CCV^\sharp$
defined by
\begin{align}\label{eq:Jnmaps}
{\cal J}_n := \sum^n_{i=1}\, (-1)^{i\,(n-i)} \
\sum_{\sigma\in{\rm Sh}_{i,n-i}} \, \text{sgn}(\sigma) 
\ \ell_{n+1-i}\circ \big(
\ell_i\otimes \id^{\otimes{n-i}}\big) \circ \sigma
\ , 
\end{align}
where the permutations $\sigma$ act via the graded transposition
$\tau^\sharp$ on
$\text{\Large$\otimes$}^{n} V $, and ${\rm sgn}(\sigma)$ is the sign of
$\sigma$. It is easy to check that on elements $v_1,\dots,v_n\in V$
this coincides with \eqref{eq:calJndef}.

This leads us to the central mathematical concept of this paper.
\begin{definition}
A \emph{braided $L_\infty$-algebra} is an $L_\infty$-algebra
$(V,\{\ell_n\})$ in the
symmetric monoidal category ${}_\CF\CCM^\sharp$.
\end{definition}

This definition simply amounts to letting the symmetric
group act in all relations of the $L_\infty$-algebra, written in terms
of morphisms, through the
braiding isomorphism
${}_{\textrm{\tiny$\CF$}}^{\phantom{\dag}}\tau^\sharp$ of
${}_\CF\CCM^\sharp$; we write $_{\textrm{\tiny$\CF$}} \sigma$ to distinguish the action of a permutation $\sigma$ via the braided transposition from the usual one. Explicitly, as in the case of Lie algebras, this
gives the braided antisymmetry of the morphisms $\ell_n:
\text{\Large$\otimes$}^{n}V\to V$ in ${}_\CF\CCM^\sharp$:
\begin{align}\label{TwistedBracket}
\ell_n (\dots, v,v',\dots) = -(-1)^{|v|\,|v'|}\, \ell_n \big(\dots,
  \sfR_k(v'),\sfR^k(v),\dots\big) \ ,
\end{align}
together with \emph{braided homotopy relations}
$\CJ_n(v_1,\dots,v_n)=0$ in ${}_\CF\CCM^\sharp$ which follow from
\eqref{eq:Jnmaps}. The first two relations $\CJ_1(v)=0$ and
$\CJ_2(v_1,v_2)=0$ are unchanged from the classical case of Section~\ref{sec:Linfty}:
\begin{align}
  \ell_1\big(\ell_1(v)\big) &= 0 \ , \nonumber \\[4pt]
  \ell_1\big(\ell_2(v_1,v_2)\big) &= \ell_2\big(\ell_1(v_1),v_2\big) +
  (-1)^{|v_1|}\, \ell_2\big(v_1, \ell_1(v_2)\big) \ , \label{eq:l1l2braided}
\end{align}
hence the map $\ell_1:V\to V$ is still a differential which is a
graded (but not braided!) derivation of the $2$-bracket
$\ell_2:V\otimes V\to V$. The first homotopy relation which differs
from the classical case is
$\CJ_3(v_1,v_2,v_3)=0$, which is a deformation of the classical
identity \eqref{I3} in accordance with the non-trivial braiding
\eqref{TwistedBracket}: 
\begin{align}
& \ell_2\big(\ell_2(v_1,v_2),v_3\big) - (-1)^{|v_2|\,|v_3|}\,
  \ell_2\big(\ell_2(v_1,\sfR_k(v_3)),\sfR^k(v_2)\big) \nn \\ &
                                                               \hspace{2cm} +  (-1)^{(|v_2|+|v_3|)\,|v_1|}\,
  \ell_2\big(\ell_2(\sfR_k(v_2),\sfR_l(v_3)),\sfR^l\sfR^k(v_1)\big) \nn\\[4pt]
& \hspace{3cm} = -\ell_3\big(\ell_1(v_1),v_2,v_3\big) - (-1)^{|v_1|}\,
  \ell_3\big(v_1, \ell_1(v_2), v_3\big) - (-1)^{|v_1|+|v_2|}\,
  \ell_3\big(v_1,v_2, \ell_1(v_3)\big) \nn\\
& \hspace{5cm} -\ell_1\big(\ell_3(v_1,v_2,v_3)\big) \ .
\label{I3braided}\end{align}                                             

\subsection{Drinfel'd twist deformations of $L_\infty$-algebras}
\label{sec:Linftytwist}

A large class of examples of braided $L_\infty$-algebras can be obtained from Drinfel'd twist
deformation quantization, starting from suitable $L_\infty$-algebras
in the classical sense. Let $(V,\{\ell_n\})$ be an $L_\infty$-algebra
in the category $\CCM^\sharp$ of $U\frv$-modules. Concretely, this
means that $V=\bigoplus_{k\in\RZ}\,V_k$ is a $\RZ$-graded (left)
$U\frv$-module and the $n$-brackets $\ell_n:\midoplus^nV\to V$ are
equivariant maps, that is, they all commute with the action of
$\frv=\Gamma(TM)$ on $V$ via the trivial coproduct $\Delta$. Given any 
Drinfel'd twist $\CF\in U\frv[[\hbar]]\otimes U\frv[[\hbar]]$, we can
deform the brackets $\ell_n$ to twisted brackets $\ell_n^\star$ which
are morphisms in the twisted representation category
${}_\CF\CCM^\sharp$ of $\RZ$-graded $U_\CF\frv$-modules, that is, they
commute with the action of $\Gamma(TM)$ on $V[[\hbar]]$ via the
twisted coproduct $\Delta_\CF$. Following the
standard prescription (\ref{eq:mustar}), we set
$\ell_1^\star:=\ell_1$ and
\begin{align}\label{eq:ellnstardef}
	\ell_n^\star(v_1\otimes\cdots\otimes v_n) :=
	\ell_n(v_1\otimes_\star\cdots\otimes_\star v_n)
\end{align}
for $n\geq2$, where $v\otimes_\star v':=\CF^{-1}(v\otimes
v')=\bar\sff^k(v)\otimes\bar\sff_k(v')$ for $v,v'\in V$.

\begin{proposition}\label{prop:braidedfromclassical}
	If $(V,\{\ell_n\})$ is an $L_\infty$-algebra in the category
	$\CCM^\sharp$, then $(V[[\hbar]],\{\ell_n^\star\})$ is a braided
	$L_\infty$-algebra. 
\end{proposition}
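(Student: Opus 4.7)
The plan is to verify the two defining properties of a braided $L_\infty$-algebra in turn: braided antisymmetry of the twisted brackets $\ell_n^\star$ in ${}_\CF\CCM^\sharp$, and the braided homotopy relations $\CJ_n^\star=0$. The overall strategy is to reduce each relation to its classical counterpart for $(V,\{\ell_n\})$ in $\CCM^\sharp$ by systematically converting $\otimes_\star$ into $\otimes$ via $\CF^{-1}$, then exploiting the $U\frv$-equivariance of each $\ell_n$ together with the cocycle condition \eqref{eq:cocyclesw} and the $\RR$-matrix identities \eqref{eq:Rmatrixidsw}--\eqref{eq:Rmatrixidswn}.

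First I would verify that each $\ell_n^\star$ is a morphism in ${}_\CF\CCM^\sharp$, i.e., $U_\CF\frv$-equivariant. By construction, $\otimes_\star$ intertwines the twisted coproduct $\Delta_\CF$ acting on $V[[\hbar]]\otimes V[[\hbar]]$ with the untwisted coproduct $\Delta$ acting on the same underlying space after applying $\CF^{-1}$; iterating and combining with the equivariance of $\ell_n$ then gives equivariance of $\ell_n^\star$. Next, I would establish braided antisymmetry \eqref{TwistedBracket}: expanding the right-hand side by \eqref{eq:ellnstardef} produces $\ell_n$ applied to a star-tensor product containing an $\sfR_k\otimes\sfR^k$ swap on two adjacent entries. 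Using $\RR=\CF_{21}\,\CF^{-1}$, this swap is exactly the difference between $\otimes_\star$ in one order and in the swapped order, so the expression collapses to $\ell_n$ evaluated on the classical graded transpose of $v_1\otimes_\star\cdots\otimes_\star v_n$; classical graded antisymmetry of $\ell_n$ then supplies the desired sign.

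For the braided homotopy relations $\CJ_n^\star=0$, the cleanest route is the categorical interpretation: the twist $\CF$ induces a symmetric monoidal equivalence $\CCM^\sharp \simeq {}_\CF\CCM^\sharp$, which sends classical morphisms $V^{\otimes n}\to V$ to their star-deformed counterparts, intertwines $\tau^\sharp$ with ${}_{\textrm{\tiny$\CF$}}^{\phantom{\dag}}\tau^\sharp$, and therefore carries the classical $\CJ_n$ of \eqref{eq:Jnmaps} to the braided $\CJ_n^\star$. Since $\CJ_n=0$ by hypothesis in $\CCM^\sharp$, the image $\CJ_n^\star$ vanishes in ${}_\CF\CCM^\sharp$. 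An explicit elementwise verification proceeds by expanding each nested $\ell_i^\star$ via \eqref{eq:ellnstardef} and using the cocycle condition \eqref{eq:cocyclesw} (which is exactly the associativity of $\otimes_\star$) to rewrite the entire composition as $\ell_{n+1-i}\circ(\ell_i\otimes \id^{\otimes n-i})$ acting on a single star-tensor product $v_1\otimes_\star\cdots\otimes_\star v_n$ preceded by the braided shuffle ${}_{\textrm{\tiny$\CF$}}\sigma$; the identities \eqref{eq:Rmatrixidswn} convert ${}_{\textrm{\tiny$\CF$}}\sigma$ acting on the star product into the classical graded shuffle $\sigma$ acting on the same tuple, so the sum reproduces $\CJ_n(v_1\otimes_\star\cdots\otimes_\star v_n)$.

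The main obstacle is the bookkeeping in this last step: one must carefully combine the cocycle condition with iterated applications of \eqref{eq:Rmatrixidswn} to show that braided shuffles on the star-tensor product reduce to classical shuffles on an ordinary tensor product, while repeatedly using equivariance of the $\ell_i$ to pass twist legs through brackets. This is transparent from the monoidal-equivalence perspective but technically involved in Sweedler notation; in particular, the matching of signs and the reorganization of shuffled permutations relies crucially on the triangularity $\RR_{21}=\RR^{-1}$ and on the fact that $\ell_1^\star=\ell_1$, so that the relations \eqref{eq:l1l2braided} persist unchanged while the first genuinely deformed identity appears at $\CJ_3^\star=0$ as in \eqref{I3braided}.
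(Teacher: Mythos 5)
Your proposal is correct and follows essentially the same route as the paper: braided antisymmetry is immediate from the definition of $\ell_n^\star$, and the braided homotopy relations are obtained by exhibiting the twist as a symmetric monoidal equivalence whose coherence maps $\varphi_n$ intertwine $\tau^\sharp$ with ${}_{\textrm{\tiny$\CF$}}^{\phantom{\dag}}\tau^\sharp$, so that $\CJ_n^\star=\CJ_n\circ\varphi_n$ and the classical relations $\CJ_n=0$ carry over. The explicit bookkeeping you flag (cocycle condition as associativity of $\otimes_\star$, equivariance of the $\ell_i$ to pass twist legs through brackets, reduction of braided transpositions to classical ones) is exactly what the paper's detailed Sweedler-notation computation carries out.
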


\begin{proof}
	Braided antisymmetry of the $n$-brackets $\ell_n^\star$ follows by the
	definition \eqref{eq:ellnstardef}, and the only thing we need to check
	is that the braided homotopy relations for $\ell_n^\star$ follow from
	the (undeformed) homotopy relations for $\ell_n$. The proof follows straightforwardly from well-known abstract
	arguments, see e.g.~\cite{Barnes:2014ksa, Barnes:2015uxa}, and should be obvious to experts, but we spell out the details here for the benefit of the non-expert reader. 
	
	Drinfel'd twist deformation quantization produces a
	symmetric monoidal functor $\CCF:(\CCM^\sharp,\otimes) \to{}(_\CF\CCM^\sharp,\otimes_{\CF})$
	which is an equivalence, where $\otimes_{\CF}$ denotes the tensor product of $U_\CF\frv$-modules acted upon using the twisted coproduct $\Delta_{\CF}$. The functor $\CCF$ acts on objects and morphisms as the
	identity,\footnote{To be precise, we need to precompose $\CCF$ with the
		symmetric monoidal functor which sends an object $V$ in 
		$\CCM^\sharp$ to the object $V[[\hbar]]$ in the symmetric monoidal
		category of (left) $U\frv[[\hbar]]$-modules over $\FC[[\hbar]]$,
		see~\cite{Barnes:2014ksa} for details. We suppress this extra step
		here for clarity of exposition and because we will not need it in our
		applications later on.} with the coherence maps
	$\varphi:\CCF(V)\otimes_{\CF} \CCF(W) \to \CCF(V\otimes W)$ given by 
	\begin{align*}
		\varphi(v\otimes w) = \bar\sff^k(v)\otimes\bar\sff_k(w) =
		v\otimes_\star w \ ,
	\end{align*}
	where $V,W$ are $U\frv$-modules with $v\in V$ and $w\in W$ (the inverse is given by acting
	with the twist $\CF=\sff^k\otimes\sff_k$). These fit into the
	commutative diagram
	\begin{align}\label{diag:braiding}
		\xymatrix{
			\CCF(V)\otimes_\CF \CCF(W)
			\ar[rr]^{{}_{\textrm{\tiny$\CF$}}^{\phantom{\dag}}\tau^\sharp}
			\ar[d]_\varphi & &
			\CCF(W)\otimes_\CF
			\CCF(V) \ar[d]^\varphi \\
			\CCF(V\otimes W) \ar[rr]_{\CCF(\tau^\sharp)} & & \CCF(W\otimes V)
		}
	\end{align}
	in ${}_\CF\CCM^\sharp$. The coherence maps are iterated to arbitrary $n$-fold tensor products of $ U\frv$-modules $V_i$ to get maps $\varphi_n:\CCF(V_1)\otimes_{\CF} \cdots\otimes_\CF\CCF(V_n) \to \CCF(V_1\otimes\cdots\otimes V_n)$ for $n\geq1$ defined by
\begin{align*}
\varphi_n(v_{1}\otimes \cdots \otimes v_{n}) = v_1\otimes_\star \cdots \otimes_\star v_{n} \ ,
\end{align*}
for $v_i\in V_i$, with $\varphi_{2} = \varphi$ and $ \varphi_{1} = \id $.
	
	The morphisms \eqref{eq:ellnstardef} are thus
	defined by precomposing $\CCF(\ell_n)$ with the coherence
	maps $\varphi_n$. We will show that applying this to the morphisms \eqref{eq:Jnmaps} in
	$\CCM^\sharp$ gives morphisms 
	$\CJ_n^\star=\CCF(\CJ_n)\circ\varphi_n$ in ${}_\CF\CCM^\sharp$, where
	\begin{align}\label{eq:Jnmapsstar}
		{\cal J}^\star_n := \sum^n_{i=1}\, (-1)^{i\,(n-i)} \
		\sum_{\sigma\in{\rm Sh}_{i,n-i}} \, \text{sgn}(\sigma) 
		\ \ell^\star_{n+1-i}\circ \big(
		\ell^\star_i\otimes \id^{\otimes{n-i}}\big) \circ {}_{\textrm{\tiny$\CF$}}\sigma
		\ .
	\end{align}
	Since the functor $\CCF$ preserves the braidings by the diagram
	\eqref{diag:braiding}, these are just the braided homotopy
	relations, i.e. $\CJ_n(v_1\otimes\cdots\otimes v_n)=0$ in $\CCM^\sharp$ implies
\begin{align*}
\CJ_n^\star(v_1\otimes\cdots\otimes v_n)=\CJ_n(v_1\otimes_\star\cdots\otimes_\star v_n) =0
\end{align*}
in~${}_\CF\CCM^\sharp$. 
	
	Let us now demonstrate that $\CJ_{n}^\star = \CJ_{n}\circ \varphi_n$ explicitly. Note that this is trivially satisfied for $n=1$, as $\ell_1^\star=\ell_1$ and $\varphi_1=\id$. We will show that
	\begin{align*}
	\ell^\star_{n+1-i}\circ \big(
	\ell^\star_i\otimes \id^{\otimes{n-i}}\big) \circ {}_{\textrm{\tiny$\CF$}}\sigma=\ell_{n+1-i} \circ  \big(\ell_i\otimes \id^{\otimes{n-i}}\big) \circ \,  \sigma \circ \varphi_{n} 
	\end{align*}
for each $n\geq2$, $i=1,\dots,n$ and $\sigma\in{\rm Sh}_{i,n-i}$.  A simple expansion shows
	\begin{align*}
		\ell^\star_{n+1-i}\circ \big(
		\ell^\star_i\otimes \id^{\otimes{n-i}}\big)&= \ell_{n+1-i}\circ \varphi_{n+1-i} \circ \big( (\ell_{i}\circ \varphi_{i})\otimes \id^{\otimes n-i}\big)\\[4pt]
		&= \ell_{n+1-i} \circ \varphi_{n+1-i}\,  \circ (\ell_{i}\otimes \id^{\otimes n-i}) \circ (\varphi_{i}\otimes \id^{\otimes n-i})\\[4pt]
		&= \ell_{n+1-i} \circ (\ell_{i}\otimes \id^{\otimes n-i})\circ  \varphi^{i,n-i}\circ(\id^{\otimes i} \otimes \varphi_{ n-i})\circ (\varphi_{i} \otimes \id^{\otimes n-i}) \ ,
	\end{align*}
	where $\varphi^{i,n-i}(v_{1}\otimes\cdots\otimes v_{n}):=(v_{1}\otimes \cdots \otimes v_{i})\otimes_\star (v_{i+1}\otimes \dots \otimes v_{n}),$ and the third equality follows from 
	\begin{align*}
		\varphi_{n+1-i}\,\circ (\ell_{i}\otimes \id^{\otimes n-i}) (v_{1}\otimes \cdots \otimes v_{n})&= \bar{\sff}^k \big(\ell_{i}(v_{1}\otimes \cdots \otimes v_{i})\big) \otimes \bar{\sff}_k(v_{i+1}\otimes_\star \cdots \otimes_\star v_{n}) \\[4pt]
		&= \ell_{i}\big( \bar{\sff}^k(v_{1}\otimes \cdots \otimes v_{i})\big) \otimes \bar{\sff}_{k}(v_{i+1}\otimes_\star \cdots \otimes_\star v_{n}) \\[4pt]
		&= (\ell_{i}\otimes \id^{\otimes n-i})\circ \varphi^{i,n-i}\circ (\id^{\otimes i}\otimes \varphi_{n-i})(v_{1}\otimes \cdots \otimes v_{n}) \ ,
	\end{align*} 
	where $\ell_{i}$ commutes with $\bar{\sff}^{k}$ since it is a morphism in $\CCM^{\sharp}$.
Since $v_{1} \otimes_\star \cdots \otimes_\star v_{n} = (v_{1}\otimes_\star \cdots \otimes_\star v_{i})\otimes_{\star}(v_{i+1}\otimes_\star \cdots \otimes_\star v_{n})$, the map $\varphi_{n}$ may be equivalently written as 
	\begin{align*}
	\varphi_{n} &= \varphi^{i,n-i}\circ(\id^{\otimes i} \otimes \varphi_{ n-i})\circ (\varphi_{i} \otimes \id^{\otimes n-i}) \ ,
	\end{align*}
and hence
	\begin{align*}
	\ell^\star_{n+1-i}\circ \big(
	\ell^\star_i\otimes \id^{\otimes{n-i}}\big)= \ell_{n+1-i} \circ (\ell_{i}\otimes \id^{\otimes n-i})\circ  \varphi_{n} \ .  
\end{align*}

Next we show that $\varphi_n$ commutes with any transposition via the braiding isomorphism $_{\textrm{\tiny$\CF$}}\tau^\sharp$. That is, for any $n\geq2$ and $j=0,1,\dots,n-2$ one has
\begin{align*}
\varphi_{n}\circ (\id^{\otimes j}\otimes {}_{\textrm{\tiny$\CF$}}\tau^\sharp \otimes \id^{\otimes n-j-2} ) = (\id^{\otimes j}\otimes \tau^\sharp \otimes \id^{\otimes n-j-2}) \circ \varphi_{n} \ .
\end{align*}
Since $v_1\otimes_\star \cdots \otimes_\star v_n=(v_1\otimes_\star \cdots \otimes_\star v_j)\otimes_\star \big( (v_{j+1}\otimes_\star v_{j+2})\otimes_\star (v_{j+3}\otimes_\star \cdots \otimes_\star v_{n})\big)$, the map $\varphi_{n}$ may be also written as 
$$
\varphi_{n}= \varphi^{j,n-j}\circ (\id^{\otimes j}\otimes \varphi^{2,n-j-2})\circ (\varphi_{j}\otimes \id^{\otimes 2}\otimes \varphi_{n-j-2})\circ (\id^{\otimes j}\otimes \varphi\otimes \id^{\otimes n-j-2}) \ .
$$
Since $_{\textrm{\tiny$\CF$}}\tau^\sharp = \varphi^{-1} \circ \tau^\sharp \circ \varphi$ by \eqref{diag:braiding}, we have 
\begin{align*}
\varphi_{n}\circ (\id^{\otimes j}\otimes {}_{\textrm{\tiny$\CF$}}\tau^\sharp \otimes \id^{\otimes n-j-2}) &=  \varphi^{j,n-j}\circ (\id^{\otimes j}\otimes \varphi^{2,n-j-2})\circ (\varphi_{j}\otimes \id^{\otimes 2}\otimes \varphi_{n-j-2}) \\
& \quad \, \circ (\id^{\otimes j} \otimes \tau^\sharp \otimes \id^{\otimes n-j-2})\circ (\id^{\otimes j}\otimes \varphi\otimes \id^{\otimes n-j-2})\\[4pt]
&= \varphi^{j,n-j}\circ (\id^{\otimes j}\otimes \varphi^{2,n-j-2})\circ (\id^{\otimes j} \otimes \tau^\sharp \otimes \id^{\otimes n-j-2}) \\
& \quad \, \circ (\varphi_{j}\otimes \id^{\otimes 2}\otimes \varphi_{n-j-2}) \circ (\id^{\otimes j}\otimes \varphi\otimes \id^{\otimes n-j-2}) \ .
\end{align*}
The map involving the transposition commutes with the remaining maps since 
\begin{align*}
& (\id^{\otimes j}\otimes \varphi^{2,n-j-2})\circ (\id^{\otimes j} \otimes \tau^\sharp \otimes \id^{\otimes n-j-2})(v_{1}\otimes\cdots \otimes v_{n})\\[4pt]
& \hspace{5cm} = v_{1}\otimes \cdots\otimes  v_{j} \otimes \big(\Delta(\bar{\sff}^k)\circ \tau^\sharp\big) (v_{j+1}\otimes v_{j+2})\otimes \bar{\sff}_{k}(v_{j+3}\otimes \cdots \otimes v_{n}) \\[4pt]
 & \hspace{5cm} = v_{1}\otimes \cdots \otimes v_{j} \otimes \big(\tau^\sharp \circ \Delta(\bar{\sff}^k)\big) (v_{j+1}\otimes v_{j+2}) \otimes \bar{\sff}_k(v_{j+3}\otimes \cdots \otimes v_{n})\\[4pt]
 & \hspace{5cm} =(\id^{\otimes j} \otimes \tau^\sharp \otimes \id^{\otimes n-j-2})\circ (\id^{\otimes j}\otimes \varphi^{2,n-j-2})(v_{1}\otimes\cdots \otimes v_{n}) \ ,
\end{align*}
where the second equality follows since the trivial coproduct $\Delta$ is cocommutative, $\Delta^{\rm op}= \Delta$, and so $\Delta \circ \tau^{\sharp} = \tau^{\sharp}\circ \Delta^{\rm op}=\tau^{\sharp}\circ \Delta$. Similarly one has
\begin{align*}
& \varphi^{j,n-j}\circ(\id^{\otimes j} \otimes \tau^\sharp \otimes \id^{\otimes n-j-2})(v_{1}\otimes\cdots \otimes v_{n})\\[4pt]
& \hspace{3cm} =\bar{\sff}^k (v_{1}\otimes\cdots \otimes v_{j})\otimes \bar{\sff}_k\big(\tau^{\sharp}(v_{j+1}\otimes v_{j+2})\otimes v_{j+3}\otimes \cdots \otimes v_{n}\big) \\[4pt]
& \hspace{3cm} =\bar{\sff}^k(v_{1}\otimes\cdots \otimes v_{j})\otimes \bar{\sff}_{k\swone}\big(\tau^{\sharp}(v_{j+1}\otimes v_{j+2})\big) \otimes \bar{\sff}_{k\swtwo}(v_{j+3}\otimes \cdots \otimes v_{n})\\[4pt]
& \hspace{3cm} =\bar{\sff}^k(v_{1}\otimes\cdots \otimes v_{j})\otimes \big(\Delta(\bar{\sff}_{k\swone})\circ \tau^{\sharp}\big)(v_{j+1}\otimes v_{j+2}) \otimes \bar{\sff}_{k\swtwo}(v_{j+3}\otimes \cdots \otimes v_{n})\\[4pt]
& \hspace{3cm} =\bar{\sff}^k(v_{1}\otimes\cdots \otimes v_{j})\otimes\big( \tau^\sharp \circ \Delta(\bar{\sff}_{k\swone})\big)(v_{j+1}\otimes v_{j+2}) \otimes \bar{\sff}_{k\swtwo}(v_{j+3}\otimes \cdots \otimes v_{n})\\[4pt]
& \hspace{3cm} =(\id^{\otimes j} \otimes \tau^\sharp \otimes \id^{\otimes n-j-2})\circ \varphi^{j,n-j} (v_{1}\otimes\cdots \otimes v_{n}) \ .
\end{align*}

Collecting everything together, we have shown that
\begin{align*}
& \ell^\star_{n+1-i}\circ \big(
	\ell^\star_i\otimes \id^{\otimes{n-i}}\big) \circ \, (\id^{\otimes j}\otimes {}_{\textrm{\tiny$\CF$}}\tau^\sharp \otimes \id^{\otimes n-j-2} ) \\[4pt]
&	\hspace{6cm} =\ell_{n+1-i} \circ  \big(\ell_i\otimes \id^{\otimes{n-i}}\big) \circ \,  (\id^{\otimes j}\otimes \tau^\sharp \otimes \id^{\otimes n-j-2} ) \circ \varphi_{n} 
\end{align*}
 for all $n\geq 2$, $i=1,\dots,n$ and $j=0,1,\dots,n-2$. Since any permutation can be written as a composition of transpositions, the result follows immediately. 
\end{proof}

\begin{example}
If $\ell_n=0$ for all $n\geq3$, then Proposition~\ref{prop:braidedfromclassical} is just the statement that a differential graded Lie algebra twist quantizes to a differential (graded) braided Lie algebra. This is precisely what we saw in Section~\ref{sec:twistedforms} for the special case of exterior differential forms valued in a Lie algebra.
\end{example}

\subsection{Cyclic structures}
\label{sec:cyclictwist}

So far we have not discussed cyclic structures on our braided
$L_\infty$-infinity algebras, which as we saw in
Section~\ref{sec:Linftygft} is a crucial ingredient for a classical
field theory to have a Lagrangian formulation in the
$L_\infty$-algebra
framework. Let $(V,\{\ell_n\},\langle-,-\rangle)$ be a cyclic
$L_\infty$-algebra in the category $\CCM^\sharp$ of $\RZ$-graded
$U\frv$-modules. This means that the cyclic pairing
$\langle-,-\rangle:V\otimes V\to\FR$ has to additionally be
$U\frv$-invariant:
\begin{align}\label{eq:paringinv}
\langle X_\swone(v_1),X_\swtwo(v_2)\rangle=0
\end{align}
for all $X\in U\frv$ and $v_1,v_2\in V$; for vector fields
$\xi\in\frv=\Gamma(TM)$ this reads
$\langle\xi(v_1),v_2\rangle=-\langle v_1,\xi(v_2)\rangle$, which is
the natural requirement of diffeomorphism invariance of the pairing.

Following our standard prescription \eqref{eq:mustar}, we can twist
deform the pairing $\langle-,-\rangle$ to a new pairing
$\langle-,-\rangle_\star$ defined by
\begin{align}\label{eq:twistpairing}
\langle v_1,v_2\rangle_\star := \langle
  \bar\sff^k(v_1),\bar\sff_k(v_2) \rangle \ .
\end{align}
Non-degeneracy of the pairing $\langle-,-\rangle_\star:V[[\hbar]]\otimes V[[\hbar]]\to \FR[[\hbar]]$ follows from the  untwisted non-degeneracy and an order by order argument in the formal deformation parameter $\hbar$. In general, graded symmetry of the cyclic pairing $\langle-,-\rangle$
implies that the twisted pairing $\langle-,-\rangle_\star$ is naturally
(graded) braided symmetric, and the natural notion of cyclicity for a
braided $L_\infty$-algebra would of course be that of `braided
cyclicity' defined by inserting suitable factors of the $\RR$-matrix
in the obvious way. However, this level of generality is not 
suitable for applications to field theory, as the loss of {\it strict} graded
symmetry and cyclicity would lead to problems with the variational principle for the corresponding action functionals. We therefore
restrict from the outset the types of Drinfel'd twists that we will
use for deformation quantization of a given cyclic $L_\infty$-algebra.

\begin{definition}\label{def:compatibletwist}
A Drinfel'd twist $\CF\in U\frv[[\hbar]]\otimes U\frv[[\hbar]]$ is
\emph{compatible} with a cyclic structure $\langle-,-\rangle:V\otimes
V\to\FR$ on an $L_\infty$-algebra in $\CCM^\sharp$ if
\begin{align*}
\langle \sfR_k(v_1),\sfR^k(v_2)\rangle_\star = \langle
  v_1,v_2\rangle_\star 
\end{align*}
for all $v_1,v_2\in V$.
\end{definition}

Given the concrete examples of cyclic $L_\infty$-algebra structures
that we considered in Sections~\ref{sec:CStheory} and~\ref{sec:ECP3d},
it should come as no surprise that this definition is an abstraction
of the graded cyclicity property \eqref{eq:intcyclic} for integration
of twisted differential forms which we already discussed in
Section~\ref{sec:twistedforms}, and with it we have

\begin{proposition}\label{prop:compatiblestrict}
Let $(V,\{\ell_n\},\langle-,-\rangle)$ be a cyclic $L_\infty$-algebra
in the category $\CCM^\sharp$, and let $\CF$ be a compatible Drinfel'd
twist. Then
$(V[[\hbar]],\{\ell_n^\star\},\langle-,-\rangle_\star)$ is a strictly
cyclic braided $L_\infty$-algebra.
\end{proposition}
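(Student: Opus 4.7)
The plan is to verify three properties of the twisted pairing $\langle-,-\rangle_\star$ on $V[[\hbar]]$: non-degeneracy, strict graded symmetry, and strict cyclicity with respect to the brackets $\ell_n^\star$ (whose braided $L_\infty$-structure is inherited from Proposition~\ref{prop:braidedfromclassical}). Non-degeneracy is immediate via the standard order-by-order argument in $\hbar$ already mentioned after~\eqref{eq:twistpairing}, since the zeroth order of $\langle-,-\rangle_\star$ is the non-degenerate classical pairing.

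For strict graded symmetry, I would apply the classical graded symmetry of $\langle-,-\rangle$ inside~\eqref{eq:twistpairing} to obtain
\begin{align*}
\langle v_1,v_2\rangle_\star = (-1)^{|v_1|\,|v_2|}\,\langle\bar\sff_k(v_2),\bar\sff^k(v_1)\rangle \ .
\end{align*}
The identity $\CF_{21}^{-1}=\CF^{-1}\RR^{-1}$, an immediate consequence of $\RR=\CF_{21}\CF^{-1}$ together with $\RR^{-1}=\sfR_k\otimes\sfR^k$, rewrites the right-hand side as the braided graded symmetry $(-1)^{|v_1|\,|v_2|}\langle\sfR_k(v_2),\sfR^k(v_1)\rangle_\star$, which collapses via Definition~\ref{def:compatibletwist} to the strict relation $\langle v_1,v_2\rangle_\star=(-1)^{|v_1|\,|v_2|}\langle v_2,v_1\rangle_\star$.

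The main step is strict cyclicity. I would use the symmetric monoidal functor $\CCF$ and the coherence maps $\varphi_n$ from the proof of Proposition~\ref{prop:braidedfromclassical}, together with equivariance of $\ell_n$ and the cocycle condition on $\CF$, to write
\begin{align*}
\langle v_0,\ell_n^\star(v_1,\ldots,v_n)\rangle_\star = \langle-,-\rangle\circ(\id\otimes\ell_n)\circ\varphi_{n+1}(v_0\otimes\cdots\otimes v_n) \ .
\end{align*}
Applying the classical cyclicity of $(V,\{\ell_n\},\langle-,-\rangle)$ to the classical $(n+1)$-tensor $\varphi_{n+1}(v_0\otimes\cdots\otimes v_n)$---whose factors carry the same degrees $|v_i|$ because the $U\frv$-action is degree-preserving---produces $(-1)^\epsilon$ times the same composition with the graded classical cyclic shift $c$ inserted between $(\id\otimes\ell_n)$ and $\varphi_{n+1}$, where $\epsilon=n+(|v_0|+|v_n|)\,n+|v_n|\sum_{i=0}^{n-1}|v_i|$. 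The coherence relation $c\circ\varphi_{n+1}=\varphi_{n+1}\circ{}_{\textrm{\tiny$\CF$}}c$, proved just as for transpositions in Proposition~\ref{prop:braidedfromclassical}, combined with the explicit evaluation of the braided cyclic shift via iterated braided transpositions and the $\RR$-matrix identity \eqref{eq:Rmatrixidswn}, then yields the braided cyclicity identity
\begin{align*}
\langle v_0,\ell_n^\star(v_1,\ldots,v_n)\rangle_\star = (-1)^\epsilon\,\big\langle\sfR_k(v_n),\ell_n^\star\big(\sfR^k_{\swone}(v_0),\ldots,\sfR^k_{(n)}(v_{n-1})\big)\big\rangle_\star \ ,
\end{align*}
with $\sfR^k$ acting diagonally via the classical iterated coproduct of $U\frv$. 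The quasi-triangularity identities \eqref{eq:Rmatrixidsw} then convert this classical iterated coproduct of $\sfR^k$ into its twisted counterpart $\Delta_\CF^{(n-1)}(\sfR^k)$ (absorbed on the first leg by $\sfR_k$ modulo further $\RR$-factors), after which $U_\CF\frv$-equivariance of $\ell_n^\star$ recombines the arguments into $\sfR^k\ell_n^\star(v_0,\ldots,v_{n-1})$, and one invocation of compatibility finally removes the resulting $\sfR_k\otimes\sfR^k$ from the outer pairing to produce the strict cyclicity identity $\langle v_0,\ell_n^\star(v_1,\ldots,v_n)\rangle_\star=(-1)^\epsilon\langle v_n,\ell_n^\star(v_0,\ldots,v_{n-1})\rangle_\star$.

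The hardest part will be this final passage from braided to strict cyclicity, requiring careful bookkeeping of the $\RR$-matrix factors picked up from the braided cyclic shift and the interplay between the classical and twisted coproducts of the components of $\RR$. This is precisely the role of the compatibility assumption, which singles out those Drinfel'd twists for which the natural braided cyclicity inherited through the symmetric monoidal functor $\CCF$ collapses to the strict cyclicity suitable for the variational principle discussed in Section~\ref{sec:Linftygft}.
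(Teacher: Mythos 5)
Your proposal is correct and its logical skeleton coincides with the paper's: both proofs first establish that the twisted pairing together with the twisted brackets is \emph{braided} cyclic, landing on an expression of the form $\langle\sfR_k(v_n),\sfR^k(\ell_n^\star(v_0,\dots,v_{n-1}))\rangle_\star$, and then invoke Definition~\ref{def:compatibletwist} once to strip the residual $\sfR_k\otimes\sfR^k$ and obtain strict cyclicity. The difference is in how the middle step is executed. The paper works entirely in Sweedler notation: it groups the arguments as $(v_1\otimes_\star\cdots\otimes_\star v_{n-1})\otimes_\star v_n$, and uses $U\frv$-equivariance of $\ell_n$, the cocycle condition \eqref{eq:cocyclesw}, the invariance \eqref{eq:paringinv} of the classical pairing and classical cyclicity to move $v_n$ to the front by hand. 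You instead package this as a statement about morphisms, writing classical cyclicity as $\langle-,-\rangle\circ(\id\otimes\ell_n)=(-1)^\epsilon\,\langle-,-\rangle\circ(\id\otimes\ell_n)\circ c$ in $\CCM^\sharp$ and transporting it through the coherence maps $\varphi_{n+1}$ of Proposition~\ref{prop:braidedfromclassical}, using $c\circ\varphi_{n+1}=\varphi_{n+1}\circ{}_{\textrm{\tiny$\CF$}}c$. This is valid and arguably cleaner, since it reuses machinery already proved, and it also makes explicit the sign $(-1)^\epsilon$ and the strict graded symmetry and non-degeneracy which the paper relegates to the surrounding text. One small correction to your bookkeeping: when the braided cyclic shift ${}_{\textrm{\tiny$\CF$}}c$ is evaluated by iterated braided transpositions, the identities \eqref{eq:Rmatrixidsw} and \eqref{eq:Rmatrixidswn} deliver $\sfR^k$ acting on $v_0\otimes\cdots\otimes v_{n-1}$ directly through the \emph{twisted} iterated coproduct $\Delta_\CF^{(n-1)}(\sfR^k)$, not the classical one, so no conversion step is needed; this is exactly what is required to apply $U_\CF\frv$-equivariance of $\ell_n^\star$ and pull $\sfR^k$ outside the bracket. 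With that adjustment your argument closes without further gaps.
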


\begin{proof}
For clarity of exposition, in the following we drop the signs arising
from the gradings in the cyclicity conditions of
Section~\ref{sec:Linfty}, as they play no role in the proof since they are the same for the original and the twist deformed $L_\infty$-algebra. We then need to show
that
\begin{align*}
\langle v_0,\ell_n^\star(v_1,v_2,\dots,v_n)\rangle_\star =
  \langle v_n,\ell_n^\star(v_0,v_1,\dots,v_{n-1})\rangle_\star \ .
\end{align*}

For $n=1$, since $\ell_1^\star=\ell_1$ is not deformed, the left-hand side reads
\begin{align*}
\langle v_0, \ell_{1}(v_1)\rangle_{\star}
  &=\langle\bar{\sff}^k (v_0), \ell_{1}(\bar{\sff}_k( v_1))\rangle
  \\[4pt]
  &=\langle\bar{\sff}_k( v_1), \bar{\sff}^k(\ell_{1}( v_0))\rangle \nn \\[4pt]
&=\langle\bar{\sff}^k \sfR_l (v_1), \bar{\sff}_k \sfR^l (\ell_{1}(v_0
                                                                                 ))\rangle
  \\[4pt]
  &=\langle\sfR_l (v_1), \sfR^l (\ell_1 (v_0)) \rangle_\star \nn \\[4pt]
&=\langle v_1, \ell_{1}(v_0)\rangle_\star
\end{align*}
where in the first and fourth equalities we use the definition of the
star-pairing, in the first and second equalities we use commutativity
of $\ell_1$ with the $U\frv$-action, in the second equality we use
classical cyclicity, in the third equality the definition of the
$\RR$-matrix, and in the fifth equality the compatibility of the
twist.

For $n\geq2$, this calculation extends to
\begin{align*}
\langle v_0,\ell_n(v_1\otimes_\star\cdots\otimes_\star v_n)\rangle_\star &= \big\langle
                                                              \bar\sff^k(v_0),\bar\sff_k\big(\ell_n(\bar\sff^l(v_1\otimes_\star\cdots\otimes_\star
                                                              v_{n-1})\otimes\bar\sff_l(v_n))\big)\big\rangle
  \\[4pt]
  &=
    \big\langle\bar\sff^k(v_0),\ell_n\big(\bar\sff_k{}_\swone\bar\sff^l(v_1\otimes_\star\cdots\otimes_\star
    v_{n-1})\otimes\bar\sff_k{}_\swtwo\bar\sff_l(v_n)\big)\big\rangle
  \\[4pt]
  &=
    \big\langle\bar\sff^k_\swone\bar\sff^l(v_0),\ell_n\big(\bar\sff^k_\swtwo\bar\sff_l(v_1\otimes_\star\cdots\otimes_\star
    v_{n-1})\otimes\bar\sff_k(v_n)\big)\big\rangle \\[4pt]
  &=
    \big\langle\bar\sff_k(v_n),\ell_n\big(\bar\sff^k_\swone\bar\sff^l(v_0)\otimes\bar\sff^k_\swtwo\bar\sff_l(v_1\otimes_\star\cdots\otimes_\star
    v_{n-1})\big)\big\rangle \\[4pt]
  &=
    \big\langle\bar\sff_k(v_n),\bar\sff^k\big(\ell_n(\bar\sff^l(v_0)\otimes\bar\sff_l(v_1\otimes_\star\cdots\otimes_\star
    v_{n-1}))\big)\big\rangle \\[4pt]
  &=
    \big\langle\bar\sff^k\sfR_m(v_n),\bar\sff_k\sfR^m\big(\ell_n(\bar\sff^l(v_0)\otimes\bar\sff_l(v_1\otimes_\star\cdots\otimes_\star
    v_{n-1}))\big)\big\rangle \\[4pt]
  &= \big\langle\sfR_m(v_n),\sfR^m\big(\ell_n(v_0\otimes_\star
    v_1\otimes_\star\cdots\otimes_\star v_{n-1})\big)\big\rangle_\star \\[4pt]
  &= \langle v_n,\ell_n(v_0\otimes_\star
    v_1\otimes_\star\cdots\otimes_\star v_{n-1})\rangle_\star
\end{align*}
where we used the same steps as in the $n=1$ case above, and now also
the cocycle property \eqref{eq:cocyclesw} for the inverse of the
Drinfel'd twist and the $U\frv$-invariance \eqref{eq:paringinv} of
the classical pairing in the third equality.
\end{proof}

\section{Braided field theory}
\label{sec:bft}

\subsection{Noncommutative field theories in the braided $L_\infty$-algebra formalism}
\label{sec:ncftLinfty}

We can now build up large classes of noncommutative field theories by
employing a braided version of the `bottoms-up' approach to classical
field theories which we discussed in
Section~\ref{sec:Linftygft}, whereby a classical field theory is defined as a dynamical system by its
4-term $L_\infty$-algebra as the initial input. These can be constructed starting 
from any suitable braided $L_\infty$-algebra in the sense of
Section~\ref{sec:Linftycat}; we call a field theory obtained in this
way a \emph{braided field theory}. That is, in this paper, a braided field theory is
defined as a (formal) dynamical system by its 4-term braided $L_\infty$-algebra as initial input.\footnote{Our formalism may however also be used to treat more general field theories with reducible gauge symmetries, which are captured by $k$-term braided $L_\infty$-algebras with $k>4$.}

Usually these theories will come from
deforming a classical field theory on a manifold $M$ through Drinfel'd
twist deformation quantization of
its corresponding $L_\infty$-algebra, as described in
Section~\ref{sec:Linftytwist}, and indeed this will be the case for
the examples treated in this paper. If $(V,\{\ell_n\})$ is a 4-term
$L_\infty$-algebra with underlying graded vector space
\begin{align*}
V=V_0\oplus V_1\oplus V_2\oplus V_3
\end{align*}
in the representation category $\CCM^\sharp$, then, following the
prescription of Section~\ref{sec:Linftytwist}, we construct the
corresponding 4-term braided $L_\infty$-algebra
$(V[[\hbar]],\{\ell_n^\star\})$ in the twisted representation category
${}_\CF\CCM^\sharp$. 
In simple
instances, a braided field theory obtained in this way through
Drinfel'd twist deformation of a classical Lagrangian field theory will be the
``naive'' deformation, obtained by twisting products of fields in the equations of motion and in
the action functional to star-products in the standard way. However,
we shall see that this is not always the case, a notable example being
the four-dimensional braided ECP theory of gravity that we define
later on. But even in simple cases the notion of braided gauge
symmetry brings with it several novelties and complexities that we
explain in detail below.

We shall now run through the various ingredients
involved in the definition of a noncommutative field theory in this
way for a general 4-term braided $L_\infty$-algebra $(V,\{\ell_n\})$, following the
standard classical prescription of Section~\ref{sec:Linftygft}.

\subsection{Braided Lie algebras of gauge symmetries}
\label{sec:braidedgauge}

Let us first discuss what should generally be meant by a `gauge symmetry' in this setting, and then later pass to its realization in the $L_\infty$-algebra formalism.

\subsubsection*{Braided gauge transformations}

In the context of generalized gauge symmetries, wherein one considers general non-linear field-dependent transformations, a \emph{braided gauge transformation} is a map $V_0\times V_1\to V_1$, denoted as $(\lambda,A)\mapsto \delta_\lambda^\lact A$, which is linear in the gauge parameters $\lambda\in V_0$ and has a potentially arbitrary polynomial dependence on fields $A\in V_1$. This means that the map may be realized as the diagonal of a map $V_0\otimes  \text{\Large$\odot$}^\bullet V_{1}\to V_1$ as an arrow in the category \smash{${}_\CF\CCM^\sharp$}, where the symmetric tensor product is taken with respect to the braided transposition ${}_{\textrm{\tiny$\CF$}}^{\phantom{\dag}}\tau^\sharp$. Explicitly, a braided gauge transformation $\delta_\lambda^\lact A$ may be expanded in tensor powers of the fields $A^{\otimes n}$ as
\begin{align*}
\delta_\lambda^\lact A = \delta_0^\lact(\lambda) + \sum_{n=1}^\infty \, \delta_{\lambda,n}^\lact(A^{\otimes n}) \ ,
\end{align*}
where each polynomial order is now a linear braided symmetric map on $V_1$ in \smash{${}_\CF\CCM^\sharp$}. Similarly to the classical case, the collection of transformations on fields $V_1$ are required to close (in a suitable sense to be made precise later on) under the braided commutator bracket
\begin{align*}
\big[\delta_{\lambda_1}^{\lact}, \delta_{\lambda_2}^{\lact}\big]_\circ^\star :=  \delta_{\lambda_1}^{\lact}\circ \delta_{\lambda_2}^{\lact} - \delta_{\sfR_k(\lambda_2)}^{\lact}\circ \delta_{\sfR^k(\lambda_1)}^{\lact} \ ,
\end{align*}
for $\lambda_1,\lambda_2\in V_0$. 
The braided commutator is a braided Lie bracket, that is, it is braided antisymmetric and satisfies the braided Jacobi identity; this is a special instance of the braided commutator of endomorphisms in the category \smash{${}_\CF\CCM^\sharp$} which makes the braided derivations of a $U_\CF\frv$-module algebra into a braided Lie algebra~\cite{Barnes:2015uxa}. The simplest prototypical case, and in fact the case most relevant for the examples of this paper, is that of a linear braided Lie algebra action, which we discuss in Example~\ref{ex:protoex} below.

By setting $\delta^\lact_\lambda\big|_{V_0}=0$, a braided gauge transformation is extended to a map $\delta_\lambda^\lact:V\to V$ of degree~$0$ as a braided derivation of operations defined on $V\otimes V$. 
It is convenient to formalise the braided derivation property by stating that the action of a braided gauge transformation $\delta^\lact_\lambda$ on a
tensor product is through the twisted coproduct
$$
\triangle_\CF(\delta^\lact_\lambda) = \delta^\lact_\lambda\otimes\id +
\sfR_k\otimes\delta^\lact_{\sfR^k(\lambda)}  \ .
$$
This is also a map $\triangle_\CF(\delta^\lact_\lambda):V\otimes V\to V\otimes V$ of degree~$0$, which encodes the braided Leibniz rule. This map is extended inductively to higher tensor products $\triangle_\CF^n(\delta^\lact_\lambda):\midoplus^nV\to\midoplus^nV$ for $n\geq3$ by
\begin{align}\label{eq:braidedLeibnizleft}
\triangle_\CF^n(\delta^\lact_\lambda) = \delta^\lact_\lambda\otimes\id^{\otimes n-1} + \sum_{i=1}^{n-1}\, \sfR_{k_1}\otimes\cdots\otimes\sfR_{k_i}\otimes\delta^\lact_{\sfR^{k_i}\cdots\sfR^{k_1}(\lambda)}\otimes\id^{\otimes n-i-1} \ ,
\end{align}
and we set $\triangle_\CF^1(\delta_\lambda^{\lact}):=\delta_\lambda^{\lact}$ and $\triangle_\CF^2(\delta_\lambda^\lact):=\triangle_\CF(\delta_\lambda^\lact)$.
Thus if $\mu_n$ is any map on $\midoplus^nV$ in the category ${}_\CF\CCM^\sharp$, we define its braided gauge transformation by
\begin{align}\label{eq:deltaLmudef}
\delta_\lambda^\lact\mu_n := \mu_n\circ\triangle_\CF^n(\delta_\lambda^\lact) \ .
\end{align}
For example, with $n=2$ one has
\begin{align}\label{eq:deltaLmuexpl}
\delta_\lambda^\lact\mu_2(v_1\otimes v_2) = \mu_2(\delta_\lambda^\lact v_1\otimes v_2) + \mu_2\big(\sfR_k(v_1)\otimes \delta^\lact_{\sfR^k(\lambda)}v_2\big)
\end{align}
for all $v_1,v_2\in V$. 

The superscript ${}^\lact$ is used to indicate that this map is regarded as a \emph{left} braided gauge  transformation which ``acts from the left''. Correspondingly, there is a \emph{right} braided gauge transformation $V_1\times V_0\to V_1$, $(A,\lambda)\mapsto \delta_\lambda^\ract A$ which is regarded as ``acting from the right'' and is defined by
\begin{align*}
\delta_\lambda^\ract A:= \delta^\lact_{\sfR_k(\lambda)}\sfR^k(A) = \delta_0^\lact(\lambda) + \sum_{n=1}^\infty \, \delta_{\sfR_k(\lambda),n}^\lact\big(\sfR^k(A^{\otimes n})\big) \ .
\end{align*}
In the classical case, where $\CR=1\otimes1$, the two transformations of course coincide. 
The action of a right braided gauge transformation $\delta_\lambda^\ract$ on tensor products is through the twisted coproduct
\begin{align*}
\triangle_\CF(\delta^\ract_\lambda) = \id\otimes\delta^\ract_\lambda+
\delta^\ract_{\sfR_k(\lambda)}\otimes\sfR^k \ ,
\end{align*}
encoding the braided Leibniz rule from the right, and its higher iterations
\begin{align}\label{eq:braidedLeibnizright}
\triangle_\CF^n(\delta^\ract_\lambda) = \id^{\otimes n-1}\otimes\delta^\ract_\lambda+ \sum_{i=1}^{n-1}\, \id^{\otimes n-i-1} \otimes \delta^\ract_{\sfR_{k_i}\cdots\sfR_{k_1}(\lambda)}\otimes\sfR^{k_1}\otimes\cdots\otimes\sfR^{k_i} \ .
\end{align}

Suppose now that the map $\mu_2$ is braided symmetric, that is, the maps
\begin{align*}
\mu_2 = \mu_2\circ{}_{\textrm{\tiny$\CF$}}^{\phantom{\dag}}\tau^\sharp_{V,V}
\end{align*}
are equal as morphisms in the category ${}_\CF\CCM^\sharp$, where ${}_{\textrm{\tiny$\CF$}}^{\phantom{\dag}}\tau^\sharp_{V,V}:V\otimes V\to V\otimes V$ is the braiding given by \smash{${}_{\textrm{\tiny$\CF$}}^{\phantom{\dag}}\tau^\sharp:=\RR^{-1}\circ\tau^\sharp=\tau^\sharp\circ\RR$} with $\tau^\sharp$ the graded transposition. On homogeneous elements this reads
\begin{align}\label{eq:musym}
\mu_2(v_1\otimes v_2)=(-1)^{|v_1|\,|v_2|}\,\mu_2\big(\sfR_k(v_2)\otimes
  \sfR^k(v_1)\big) \ .
\end{align}
We check that the definition \eqref{eq:deltaLmudef} is compatible with this braided symmetry, that is, that the left braided gauge transformation of the right-hand side of \eqref{eq:musym} is equal to \eqref{eq:deltaLmuexpl}:
\begin{align*}
\delta_\lambda^\lact\big(\mu_2\circ{}_{\textrm{\tiny$\CF$}}^{\phantom{\dag}}\tau^\sharp_{V,V}\big)(v_1\otimes v_2) &:= \big(\mu_2\circ\RR^{-1}\circ\tau^\sharp_{V,V}\circ\triangle_\CF(\delta_\lambda^\lact)\big)(v_1\otimes v_2) \\[4pt]
&= (-1)^{|v_1|\,|v_2|} \, \big(\mu_2\circ\RR^{-1}\big)\big(v_2\otimes\delta_\lambda^\lact v_1 + \delta^\lact_{\sfR^k(\lambda)}v_2\otimes\sfR_k(v_1)\big) \\[4pt]
&= (-1)^{|v_1|\,|v_2|} \, \mu_2\big(\sfR_l(v_2)\otimes\sfR^l(\delta_\lambda^\lact v_1) + \sfR_l(\delta^\lact_{\sfR^k(\lambda)}v_2)\otimes\sfR^l\,\sfR_k(v_1)\big) \\[4pt]
&= \mu_2\big(\delta_\lambda^\lact v_1\otimes v_2 + \sfR_k(v_1)\otimes \delta^\lact_{\sfR^k(\lambda)}v_2\big) \\[4pt]
&= \big(\mu_2\circ\triangle_\CF(\delta^\lact_\lambda)\big)(v_1\otimes v_2) \\[4pt]
&=: \delta_\lambda^\lact\mu_2(v_1\otimes v_2) \ ,
\end{align*}
where in the fourth equality we used braided symmetry \eqref{eq:musym}. 
This consistency check also obviously holds when $\mu_2$ is braided antisymmetric, and it easily extends to braided (anti)symmetric maps $\mu_n$ on higher tensor products $\midoplus^nV$ for $n\geq3$. Similarly, one verifies consistency with right braided gauge transformations. 

Note that here and in the following the ordering between the braiding isomorphism and gauge transformations is important: one must write a map as a composition of morphisms in ${}_\CF\CCM^\sharp$ and precompose them with $\triangle_\CF(\delta_\lambda^{\lact,\ract})$ as in the definition \eqref{eq:deltaLmudef}, including \smash{$\mu_2\circ{}_{\textrm{\tiny$\CF$}}^{\phantom{\dag}}\tau^\sharp_{V,V}$}, because in general the braiding and the twisted coproduct do not commute: \smash{${}_{\textrm{\tiny$\CF$}}^{\phantom{\dag}}\tau^\sharp_{V,V}\circ\triangle_\CF(\delta_\lambda^{\lact,\ract})\neq\triangle_\CF(\delta_\lambda^{\lact,\ract})\circ{}_{\textrm{\tiny$\CF$}}^{\phantom{\dag}}\tau^\sharp_{V,V}$}. In other words
\begin{align*}
\triangle_\CF(\delta_\lambda^{\lact})\big(\sfR_l(v_2)\otimes\sfR^l(v_1)\big) \neq \sfR_l(\delta^\lact_{\sfR^k(\lambda)}v_2)\otimes\sfR^l\,\sfR_k(v_1) +  \sfR_l(v_2)\otimes\sfR^l(\delta_\lambda^\lact v_1) \ ,
\end{align*}
and similarly for right gauge transformations.

\subsubsection*{Drinfel'd twist deformations of gauge symmetries}

If our noncommutative field theory arises from Drinfel'd twist quantization of a classical field theory on a manifold $M$, then a left braided gauge transformation $\delta_\lambda^{\star\lact}:V[[\hbar]]\to V[[\hbar]]$ is induced from a classical gauge transformation $\delta_\lambda:V\to V$ through the usual prescription \eqref{eq:mustar}:
\begin{align*}
\delta_\lambda^{\star\lact}v := \delta_{\bar\sff^k(\lambda)}\bar\sff_k(v)
\end{align*}
for $\lambda\in V_0$ and $v\in V$. The corresponding right braided gauge transformation $\delta_\lambda^{\star\ract}:V[[\hbar]]\to V[[\hbar]]$ is given by
\begin{align*}
\delta_\lambda^{\star\ract}v = \delta^{\star\lact}_{\sfR_k(\lambda)}\sfR^k(v) = \delta_{\bar\sff_k(\lambda)}\bar\sff^k(v) \ ,
\end{align*}
where in the second equality we used $\CF^{-1}\,\RR_{21}=\CF^{-1}\,\RR^{-1} = \CF^{-1}_{21}$.

\begin{example}\label{ex:protoex}
Let us look at the prototypical case that will
apply to all examples in this paper. Let $\frg$ be a Lie algebra with Lie bracket $[-,-]_\frg$, and
let $W$ be a linear representation of $\frg$; we denote the action of $\frg$
on $W$ by $\triangleright:\frg\otimes W\to W$. The action of a gauge
parameter $\lambda\in\Omega^0(M,\frg)$ on a $p$-form field
$\phi\in\Omega^p(M,W)$ valued in $W$ is given by\footnote{The minus sign is the standard convention in gauge theory, so that the action of the corresponding Lie group is a right action.}
\begin{align}\label{eq:repg}
\delta_\lambda\phi=-\lambda\triangleright\phi \ .
\end{align}
In this definition we make no distinction between left and right
actions.

To pass to the twist deformed field theory, we regard the
representation spaces $W$ as trivial $U\frv$-modules and
$\Omega^p(M,W)$ as the tensor product of $U\frv$-modules, similarly to
the graded Lie algebra $\Omega^\bullet(M,\frg)$
(cf. Section~\ref{sec:twistedforms}). We may then define both a
{left} braided representation
\begin{align}\label{eq:lgtphi}
\delta_\lambda^{\star\lact}\phi=-\lambda\triangleright_\star\phi :=-
  \bar\sff^k(\lambda)\triangleright\bar\sff_k(\phi) 
\end{align}
and a {right} braided representation
\begin{align}\label{eq:rgtphi}
\delta_\lambda^{\star\ract}\phi=-\phi\,_{\star\!\!}\triangleleft\lambda :=-
  \sfR_k(\lambda)\triangleright_\star\sfR^k(\phi) =-
  \bar\sff^l\,\sfR_k(\lambda)\triangleright\bar\sff_l\,\sfR^k(\phi) = -\bar\sff_k(\lambda)\triangleright\bar\sff^k(\phi) \ ,
\end{align}
where the first expression encodes the intuition of swapping factors
and then acting on the representation. The twisted coproducts $\triangle_\CF^n(\delta_\lambda^{\star\lact})$ and $\triangle_\CF^n(\delta_\lambda^{\star\ract})$ for $n\geq2$ then define the tensor products of left and right braided representations, respectively. 

Both left and right gauge transformations provide a representation of the braided Lie algebra $\big(\Omega^0(M,\frg)[[\hbar]],[-,-]_\frg^\star\big)$ on $\Omega^p(M,W)[[\hbar]]$:
\begin{align}\label{eq:protobraidedrep}
\big[\delta_{\lambda_1}^{\star\lact,\ract},\delta_{\lambda_2}^{\star\lact,\ract}\big]_\circ^\star = \delta_{[\lambda_1,\lambda_2]_\frg^\star}^{\star\lact,\ract} \ ,
\end{align}
for all $\lambda_1,\lambda_2\in\Omega^0(M,\frg)$. For this, we compute
\begin{align*}
\delta_{\lambda_1}^{\star\lact}\delta_{\lambda_2}^{\star\lact}\phi &= \delta_{\lambda_1}^{\star\lact}\big(-\lambda_2\triangleright_\star\phi \big) \\[4pt]
&= -\sfR_k(\lambda_2)\triangleright_\star\delta_{\sfR^k(\lambda_1)}^{\star\lact}\phi \\[4pt]
&= \bar\sff^m\sfR_k(\lambda_2)\triangleright\big(\bar\sff_{m\swone}\bar\sff^l\sfR^k(\lambda_1)\triangleright\bar\sff_{m\swtwo}\bar\sff_l(\phi)\big) \\[4pt]
&=  \bar\sff^m_{\swone}\bar\sff^l\sfR_k(\lambda_2)\triangleright\big(\bar\sff^m_{\swtwo}\bar\sff_l\sfR^k(\lambda_1)\triangleright\bar\sff_m(\phi)\big) \\[4pt]
&= \bar\sff^m_{\swone}\bar\sff^l\sff^p\bar\sff_k(\lambda_2)\triangleright\big(\bar\sff^m_{\swtwo}\bar\sff_l\sff_p\bar\sff^k(\lambda_1)\triangleright\bar\sff_m(\phi)\big) \\[4pt]
&= \bar\sff^m_{\swtwo}\bar\sff_k(\lambda_2)\triangleright\big(\bar\sff^m_{\swone}\bar\sff^k(\lambda_1)\triangleright\bar\sff_m(\phi)\big) 
\end{align*}
where in the fourth line we used the cocycle property \eqref{eq:cocyclesw} for the inverse of the Drinfel'd twist, in the fifth line we used $\RR_{21}=\RR^{-1}=\CF\,\CF_{21}^{-1}$, and in the last step we used $\CF^{-1}\,\CF=1\otimes1$ along with cocommutativity of the trivial coproduct $\Delta$. With analogous calculations, this correspondingly gives
\begin{align*}
\delta_{\sfR_l(\lambda_2)}^{\star\lact}\delta_{\sfR^l(\lambda_1)}^{\star\lact}\phi &= \bar\sff^m_{\swtwo}\bar\sff_k\sfR^l(\lambda_1)\triangleright\big(\bar\sff^m_{\swone}\bar\sff^k\sfR_l(\lambda_2)\triangleright\bar\sff_m(\phi)\big) = \bar\sff^m_{\swone}\bar\sff^l(\lambda_1)\triangleright\big(\bar\sff^m_{\swtwo}\bar\sff_l(\lambda_2)\triangleright\bar\sff_m(\phi)\big) \ .
\end{align*}
Thus we get
\begin{align*}
\big[\delta_{\lambda_1}^{\star\lact},\delta_{\lambda_2}^{\star\lact}\big]_\circ^\star\phi &= \bar\sff^m_{\swtwo}\bar\sff_l(\lambda_2)\triangleright\big(\bar\sff^m_{\swone}\bar\sff^l(\lambda_1)\triangleright\bar\sff_m(\phi)\big) - \bar\sff^m_{\swone}\bar\sff^l(\lambda_1)\triangleright\big(\bar\sff^m_{\swtwo}\bar\sff_l(\lambda_2)\triangleright\bar\sff_m(\phi)\big) \\[4pt]
&= -[\bar\sff^m_{\swone}\bar\sff^l(\lambda_1),\bar\sff^m_{\swtwo}\bar\sff_l(\lambda_2)]_\frg\triangleright\bar\sff_m(\phi) \\[4pt]
&= \bar\sff^m\big(-[\lambda_1,\lambda_2]_\frg^\star\big)\triangleright\bar\sff_m(\phi) \\[4pt]
&= \delta_{[\lambda_1,\lambda_2]_\frg^\star}^{\star\lact}\phi \ ,
\end{align*}
where in the second line we used the fact that \eqref{eq:repg} is a representation of $\frg$ on $\Omega^p(M,W)$. The proof for right gauge transformations is completely analogous.

For the special case when the field $\phi$ is a gauge connection $a\in\Omega^1(M,\frg)$, $W$ is the adjoint representation of $\frg$, and the left and right braided adjoint actions are given by 
\begin{align*} 
\lambda\triangleright_\star^{\rm ad}a := [\lambda,a]_\frg^\star \qquad \mbox{and} \qquad a\,^{\rm ad}\!\!{}_{\star\!\!}\triangleleft\lambda := [\sfR_k(\lambda),\sfR^k(a)]_\frg^\star = -[a,\lambda]_\frg^\star \ ,
\end{align*}
where in the last equality we used braided antisymmetry of the braided Lie bracket involving gauge parameters. For a matrix Lie algebra $\frg$ one can write
\begin{align*}
[\lambda,a]_\frg^\star = \lambda\star a - \sfR_k(a)\star\sfR^k(\lambda) \ ,
\end{align*}
where the star-product here is the tensor product of matrix multiplication with the twisted product of functions and forms.
The left and right braided adjoint representations induce the left and right braided gauge transformations
\begin{align*}
\delta_\lambda^{\star\lact}a := \dd\lambda - [\lambda,a]_\frg^\star \qquad \mbox{and} \qquad \delta_\lambda^{\star\ract}a := \dd\lambda+[a,\lambda]_\frg^\star \ .
\end{align*}
These give the representations \eqref{eq:protobraidedrep} on $\Omega^1(M,\frg)[[\hbar]]$. This can be shown by using braided antisymmetry and the braided Jacobi identity for the bracket $[-,-]_\frg^\star$, along with the undeformed Leibniz rule for the exterior derivative $\dd$ (cf. Section~\ref{sec:twistedforms}); it will follow as a special case of a more general result in Section~\ref{sec:braidedkinLinfty} below (see Example~\ref{ex:protoLinftygauge}).
\end{example}

The braiding of symmetries illustrated by Example~\ref{ex:protoex} is independent of the particular form of a (classical) Lie algebra $V_0$ of gauge parameters: the only necessary condition is that $V_0$ is a Lie algebra in the category of $U\frv$-modules.

\subsection{Braided versus star-gauge symmetry}
\label{sec:braidedvsstargauge}

Example~\ref{ex:protoex} illustrates an important distinction between braided gauge symmetries and the well-known star-gauge symmetries. These latter gauge transformations are the most prominent in the noncommutative field theory literature, with applications ranging from the Standard Model to gravity, and they are the ones which are consistent with the embeddings of these theories into string theory. In these more conventional settings, one works with a matrix Lie algebra $\frg$ and introduces a star-commutator bracket
\begin{align}\label{eq:starcomm}
[\lambda_1 \!\stackrel{\scriptstyle\star}{\scriptstyle,}\! \lambda_2]_{\frg} := \lambda_1\star\lambda_2 - \lambda_2\star\lambda_1 \ , 
\end{align}
where here the product is the tensor product of star-multiplication with matrix multiplication of $\lambda_1,\lambda_2\in\Omega^0(M,\frg)$. Demanding closure as an ordinary (not braided) Lie algebra is then not possible for gauge algebras beyond the unitary Lie algebras $\frg = \mathfrak{u}(N)$ in the fundamental representation, unless one extends the space of gauge parameters $\Omega^0(M,\frg)$~\cite{AschCast}. This correspondingly introduces new dynamical degrees of freedom into the noncommutative field theory{, which are hard to justify or motivate from a physical standpoint, and moreover cause problems when considering commutative limits of the theories}. One can avoid the introduction of new fields by instead regarding \eqref{eq:starcomm} as an element of the enveloping algebra $U\frg[[\hbar]]$ and using the Seiberg--Witten map~\cite{Jurco:2000ja}; however, the Seiberg--Witten map is not generally known in closed form and so this approach is limited to a description of star-gauge transformations to only the first few orders in the deformation parameter~$\hbar$.

In contrast, in the present setting, the Lie algebra $\big(\Omega^0(M,\frg),[-,-]_\frg\big)$ of gauge parameters is deformed to the braided Lie algebra $\big(\Omega^0(M,\frg)[[\hbar]],[-,-]_\frg^\star\big)$ without the need of extending the vector space $\Omega^0(M,\frg)$ (beyond the usual formal power series extension), for \emph{arbitrary} Lie algebras. Even for a matrix Lie algebra $\frg$, using $\CF_{21}^{-1}=\CF^{-1}\,\RR^{-1}=\CF^{-1}\,\RR_{21}$ one finds that the braided commutator
\begin{align}\label{eq:braidedcomm}
[\lambda_1,\lambda_2]_\frg^\star = \lambda_1\star\lambda_2 - \sfR_k(\lambda_2)\star\sfR^k(\lambda_1)
\end{align}
is not equal to the star-commutator \eqref{eq:starcomm}, even for the unitary Lie algebras and for simple twists like the Moyal--Weyl star-product of Example~\ref{ex:MoyalWeylstar}; a striking example is the Lie algebra $\frg=\mathfrak{u}(1)$ (or any abelian Lie algebra), where the braided commutator \eqref{eq:braidedcomm} always vanishes but the star-commutator \eqref{eq:starcomm} does not.
In our approach the twisting of the gauge algebra and its representations themselves avoids the introduction of new fields altogether. The dynamical sector of the noncommutative field theory is then correspondingly constructed by twist deforming the entire underlying $L_\infty$-algebra structure.

A hybrid definition of `twisted gauge symmetry', which in a sense lies in between the standard star-gauge transformations and our braided gauge transformations, was suggested in~\cite{Oeckl:2000eg,Vassilevich:2006tc,Aschieri:2006ye}. In our setting, this would follow from twist deforming the coproduct of gauge transformations $\triangle(\delta_\lambda)$ by regarding the twist $\CF$ in the universal enveloping algebra of the semi-direct product Lie algebra $\Gamma(TM)\ltimes \Omega^0(M,\frg)$. Then the gauge transformations are just the usual classical gauge transformations, but they act as braided derivations similarly to our braided gauge transformations. Since the actions of $\Omega^0(M,\frg)$ themselves are not deformed, the closure of gauge transformations is the same as in the classical case: the gauge algebra closes in the classical Lie algebra for all Lie algebras $\frg$ and their representations. However, this approach inevitably also involves star-commutators, rather than braided commutators, and so also introduces additional dynamical degrees of freedom~\cite{Aschieri:2006ye}, since the star-commutator does not generally close in the (matrix) Lie algebra $\frg$. This mixing between braided and star-gauge transformations also does not appear to be consistent or natural within the category ${}_\CF\CCM^\sharp$.

\subsection{Braided noncommutative kinematics}
\label{sec:braidedkin}

Continuing Example~\ref{ex:protoex}, we will now define the natural geometric objects which are covariant under braided gauge transformations. These will form the building blocks for the braided gauge-invariant noncommutative field theories that we study in detail later on. Aspects of this story have been studied extensively at the kinematical level by \cite{BraidedGauge}, and also alluded to by \cite{SchenkelThesis}. We will see how this extends to include dynamics by systematically defining the full theory using the braided $L_\infty$-algebra formalism, which constructs braided field theories in a model-independent way.

\subsubsection*{Braided covariant derivatives}

Given a (classical) connection $1$-form $a\in\Omega^1(M,\frg)$, the covariant derivative $\dd^a\phi\in\Omega^{p+1}(M,W)$ of any field $\phi\in\Omega^p(M,W)$ is given classically by
\begin{align*}
\dd^a\phi=\dd\phi + a\triangleright\phi \ ,
\end{align*}
where the action of $\Omega^1(M,\frg)$ on $\Omega^p(M,W)$ is the tensor product of exterior multiplication of forms with the action of $\frg$ on the representation $W$. In the twist deformed field theory, the transformation of $\phi$ under left gauge transformations \eqref{eq:lgtphi} suggests that the correct definition of the \emph{left} covariant derivative is simply by acting with the corresponding left braided representation:
\begin{definition}Given a ({braided}) connection 1-form $a \in \Omega^{1}(M,\frg)$, the \emph{left braided covariant derivative}  of a field $\phi \in \Omega^{p}(M,W)$ is 
\begin{align*}
\dd^a_{\star\lact}\phi := \dd\phi + a\triangleright_\star\phi \ . 
\end{align*}
\end{definition}
This is indeed covariant under left braided gauge transformations:
\begin{align*}
\delta_\lambda^{\star\lact}(\dd_{\star\lact}^a\phi) &= \delta_\lambda^{\star\lact}(\dd\phi + a\triangleright_\star\phi) \\[4pt]
&= \dd\delta_\lambda^{\star\lact}\phi + \delta_\lambda^{\star\lact}a\triangleright_\star\phi + \sfR_k(a)\triangleright_\star\delta_{\sfR^k(\lambda)}^{\star\lact}\phi \\[4pt]
&= -\dd(\lambda\triangleright_\star\phi) + (\dd\lambda-[\lambda,a]_\frg^\star)\triangleright_\star\phi - \sfR_k(a)\triangleright_\star\big(\sfR^k(\lambda)\triangleright_\star\phi\big) \\[4pt]
&= -\dd\lambda\triangleright_\star\phi - \lambda\triangleright_\star\dd\phi + \dd\lambda\triangleright_\star\phi-[\lambda,a]_\frg^\star\triangleright_\star\phi - \sfR_k(a)\triangleright_\star\big(\sfR^k(\lambda)\triangleright_\star\phi\big) \\[4pt]
&= -\lambda\triangleright_\star\dd\phi - \lambda\triangleright_\star(a\triangleright_\star\phi) \\[4pt]
&= -\lambda\triangleright_\star(\dd_{\star\lact}^a\phi) \ ,
\end{align*}
where in the second and fifth lines we used the twisted coproduct $\triangle_\CF(\delta_\lambda^{\star\lact})$ to implement the Leibniz rule from the left.

Similarly, we can show that the left covariant derivative is also covariant under \emph{right} braided gauge transformations \eqref{eq:rgtphi}: 
\begin{align*}
\delta_\lambda^{\star\ract}(\dd_{\star\lact}^a\phi) &=\delta_\lambda^{\star\ract}(\dd\phi+a\triangleright_\star\phi) \\[4pt]
&= \dd\delta_\lambda^{\star\ract}\phi + a\triangleright_\star\delta_\lambda^{\star\ract}\phi + \delta_{\sfR_k(\lambda)}^{\star\ract}a\triangleright_\star\sfR^k(\phi) \\[4pt]
&= \dd(-\phi\,_{\star\!\!}\triangleleft\lambda) + a\triangleright_\star(-\phi\,_{\star\!\!}\triangleleft\lambda) + \big(\dd\sfR_k(\lambda) + [a,\sfR_k(\lambda)]_\frg^\star\big)\triangleright_\star\sfR^k(\phi) \\[4pt]
&= -\dd\phi\,_{\star\!\!}\triangleleft\lambda - \phi\,_{\star\!\!}\triangleleft\dd\lambda - a\triangleright_\star\big(\sfR_k(\lambda)\triangleright_\star\sfR^k(\phi)\big) + \phi\,_{\star\!\!}\triangleleft\dd\lambda + [a,\sfR_k(\lambda)]_\frg^\star\triangleright_\star \sfR^k(\phi) \\[4pt]
&= -\dd\phi\,_{\star\!\!}\triangleleft\lambda - [a,\sfR_k(\lambda)]_\frg^\star\triangleright_\star\sfR^k(\phi) - \sfR_l\sfR_k(\lambda)\triangleright_\star\big(\sfR^l(a)\triangleright_\star\sfR^k(\phi)\big) + [a,\sfR_k(\lambda)]_\frg^\star\triangleright_\star \sfR^k(\phi) \\[4pt]
&= -\dd\phi\,_{\star\!\!}\triangleleft\lambda - \sfR_k(\lambda)\triangleright_\star\sfR^k(a\triangleright_\star\phi) \\[4pt]
&= -(\dd\phi+a\triangleright_\star\phi)\,_{\star\!\!}\triangleleft\lambda \\[4pt] 
&= -(\dd_{\star\lact}^a\phi)\,_{\star\!\!}\triangleleft\lambda \ ,
\end{align*}
where in the second line we used the twisted coproduct $\triangle_\CF(\delta_\lambda^{\star\ract})$ to implement the braided Leibniz rule from the right, in the fifth line we used the twisted coproduct $\triangle_\CF(\delta_\lambda^{\star\lact})$, and in the sixth line we used the second $\RR$-matrix identity of \eqref{eq:Rmatrixidswn}.

As one might have anticipated, we can define analogously the \emph{right} covariant derivative of $\phi\in\Omega^p(M,W)$ by
\begin{align*}
\dd_{\star\ract}^a\phi := \dd\phi + (-1)^p\,\phi\,_{\star\!\!}\triangleleft a = \dd\phi + \sfR_k(a)\triangleright_\star\sfR^k(\phi) \ .
\end{align*}
Similar calculations to those of the left covariant derivative establish that the right covariant derivative is covariant under both left and right braided gauge transformations.

What distinguishes the left and right braided covariant derivatives is their behaviour with respect to the action of the noncommutative algebra of functions on $M$. The vector space of braided fields $\Omega^p(M,W)[[\hbar]]$ carrying a braided representation of $\frg$ is a bimodule over the algebra $\CA_\star=\big(C^\infty(M)[[\hbar]],\star\big)$: one can star-multiply with functions on the left and on the right to produce $p$-forms which transform identically under left and right braided gauge transformations, because functions transform in the trivial representation of the Lie algebra $\frg$. For example, for a function $f\in \CA_\star$ we have
\begin{align*}
\delta_\lambda^{\star\lact}(f\star\phi) = \sfR_k(f)\star\delta_{\sfR^k(\lambda)}\phi = \sfR_k(f)\star\big(-\sfR^k(\lambda)\triangleright_\star\phi\big) = -\lambda\triangleright_\star(f\star\phi) \ ,
\end{align*}
where in the last equality we used the left braided Leibniz rule with $\lambda$ acting via the trivial representation on $f$. Similar results hold for right star-multiplication and right gauge transformations.

However, the left and right covariant derivatives do not satisfy the expected (undeformed) Leibniz rules when viewing $\Omega^p(M,W)[[\hbar]]$ as a bimodule over $\CA_\star$, but only as right and left $\CA_\star$-modules respectively: $\dd_{\star\lact}^a$ is a derivation with respect to right star-multiplication by $\CA_\star$ and analogously $\dd_{\star\ract}^a$ is a left $\CA_\star$-module map. This is where potential confusion for our left/right definitions may arise. Explicitly, it is easy to check the derivation properties
\begin{align*}
\dd_{\star\lact}^a(\phi\star f) = \dd_{\star\lact}^a\phi\star f + (-1)^p\,\phi\wedge_\star\dd f \qquad \mbox{and} \qquad \dd_{\star\ract}^a(f\star\phi) = \dd f\wedge_\star\phi + f\star\dd_{\star\ract}^a\phi \ ,
\end{align*}
but the Leibniz rule does not hold for the other two $\CA_\star$-module actions (here $\wedge_\star$ is simply the twisted exterior product of the form parts). 

In classical differential geometry one can define a connection directly as the covariant derivative map. In noncommutative differential geometry, one has two options which are \emph{a priori} different: a left module connection and a right module connection (see e.g.~\cite{SchenkelThesis,Aschieri:2020ifa}). In contrast, here we have a single gauge connection $a$ which induces both covariant derivatives.

\subsubsection*{Braided curvature}

The left curvature of the gauge field $a$ is now similarly defined with $W$ the adjoint representation and the usual factor of $\frac12$:
\begin{definition}Given a braided gauge field $a\in \Omega^{1}(M,\frg)$, the \emph{left braided curvature} is 
\begin{align}\label{eq:Fadef}
F_a^{\star\lact}:=\dd a + \tfrac12\,[a,a]_\frg^\star \ .
\end{align}
\end{definition}
The right curvature is correspondingly
\begin{align*}
F_a^{\star\ract} := \dd a + \tfrac12\,[\sfR_k(a),\sfR^k(a)]_\frg^\star =\dd a + \tfrac12\,[a,a]_\frg^\star = F_a^{\star\lact} \ ,
\end{align*}
where we used braided symmetry of the bracket $[-,-]_\frg^\star$ on $1$-forms. Hence there is only one curvature $F_a^\star:=F_a^{\star\lact}=F_a^{\star\ract}$. The proof of left and right covariance of the curvature follows from analogous calculations to those above:
\begin{align*}
\delta_\lambda^{\star\lact}F^\star_a = -[\lambda,F^\star_a]_\frg^\star \qquad \mbox{and} \qquad \delta_\lambda^{\star\ract}F^\star_a = [F^\star_a,\lambda]_\frg^\star \ .
\end{align*}
For a matrix Lie algebra $\frg$, one can write
\begin{align*}
F^\star_a = \dd a + \tfrac12\,\big(a\wedge_\star a + \sfR_k(a)\wedge_\star\sfR^k(a)\big) \ ,
\end{align*}
where here the star-product is the tensor product of matrix multiplication with the twisted exterior product; this differs from the usual definition in conventional noncommutative gauge theories, where one would instead write $F_a^\star=\dd a + a\wedge_\star a$.

One of the novelties of the braided curvature $2$-form is that it generally violates the Bianchi identity, that is, it is no longer covariantly constant. The braided deformations of the Bianchi identity are the three-forms in $\Omega^3(M,\frg)[[\hbar]]$ given by
\begin{align}\label{eq:braidedBianchi}
\dd^a_{\star\lact}F_a &= \tfrac12\,[F_a,a]_\frg^\star + \tfrac12\,[a,F_a]_\frg^\star -\tfrac14\,[\sfR_k(a),[\sfR^k(a),a]_\frg^\star]_\frg^\star \ , \nn \\[4pt]
\dd^a_{\star\ract}F_a &= -\tfrac12\,[F_a,a]_\frg^\star - \tfrac12\,[a,F_a]_\frg^\star -\tfrac14\,[\sfR_k(a),[\sfR^k(a),a]_\frg^\star]_\frg^\star \ . 
\end{align}
Both of these relations simply follow by taking the exterior derivative of the expression \eqref{eq:Fadef} and rewriting it in terms of $F_a$ to generate the differential identity
\begin{align*}
\dd F_a = \tfrac12\,[F_a,a]_\frg^\star - \tfrac12\,[a,F_a]_\frg^\star - \tfrac14\,\big([[a,a]_\frg^\star,a]_\frg^\star - [a,[a,a]_\frg^\star]_\frg^\star\big) \ ,
\end{align*}
and using the (odd) braided Jacobi identity \eqref{eq:braidedJacobiforms} on $1$-forms $\Omega^1(M,\frg)[[\hbar]]$ with $\alpha_1=\alpha_2=\alpha_3=a$; contrary to the classical case, this differential identity also involves the gauge field $a$ itself. This violation is well-known in twisted noncommutative differential geometry, see e.g.~\cite{Barnes:2016cjm,NAGravity,Aschieri:2020ifa}. For trivial $\RR$-matrix, the right-hand sides of these covariant derivative equations vanish identically and we recover the 
usual Bianchi identity in classical geometry.

One of the remarkable features of our braided field theory formalism will be that such kinematical violations are simply natural consequences of the braided homotopy relations for the underlying $L_\infty$-algebra, as we discuss further later on. In other words, the violations of the Bianchi identity here will be controlled by a braided $L_\infty$-algebra. Let us finally turn to this formalism.

\subsection{Braided gauge transformations and $L_\infty$-algebras}
\label{sec:braidedkinLinfty}

Generalizing Example~\ref{ex:protoex}, let us now formulate a braided version of the $L_\infty$-algebra
gauge transformations \eqref{gaugetransfA}. Let $(V,\{\ell_n\})$ be a $4$-term braided $L_\infty$-algebra in the category ${}_\CF\CCM^\sharp$ of graded left $U_\CF\frv$-modules. {Following the classical case, we have}

\begin{definition} For a gauge parameter $\lambda\in V_0$, the \emph{left braided gauge transformation} of a field $A\in V_1$ is
\begin{align}\label{eq:LgtL} 
\delta_\lambda^\lact A := \ell_1(\lambda) + \sum_{n =1}^\infty \, \frac{1}{n!}\, (-1)^{\frac12\,{n\,(n-1)}}\, \ell_{n+1}(\lambda,A,\dots,A) 
 \ .
\end{align}
\end{definition}
The right braided gauge transformation is similarly defined by
\begin{align}\label{eq:LgtR}
\delta_\lambda^\ract A := \delta^\lact_{\sfR_k(\lambda)}\sfR^k(A) &= \ell_1(\lambda) + \sum_{n =1}^\infty \, \frac{1}{n!}\, (-1)^{\frac12\,{n\,(n-1)}}\, \ell_{n+1}\big(\sfR_k(\lambda),\sfR^k(A\otimes\dots\otimes A)\big) \nonumber \\[4pt] &= \ell_1(\lambda) + \sum_{n =1}^\infty \, \frac{1}{n!}\, (-1)^{\frac12\,{n\,(n+1)}}\, \ell_{n+1}(A,\dots,A,\lambda) \ ,
\end{align}
where in the last equality we used braided antisymmetry of the $n{+}1$-brackets under interchange of gauge parameters and fields, together with the $\RR$-matrix identities in \eqref{eq:Rmatrixidswn} (and their inverses) to write
\begin{align}
\ell_{n+1}\big(\sfR_k(\lambda),\sfR^k(A\otimes\cdots\otimes A)\big) &=  \ell_{n+1}\big(\sfR_{k_1}\cdots\sfR_{k_n}(\lambda),\sfR^{k_1}(A),\dots,\sfR^{k_n}(A)\big) \nonumber \\[4pt]
&= (-1)^n \, \ell_{n+1}\big(\sfR_{l_1}\sfR^{k_1}(A),\dots,\sfR_{l_n}\sfR^{k_n}(A),\sfR^{l_n}\cdots\sfR^{l_1}\sfR_{k_1}\cdots\sfR_{k_n}(\lambda)\big) \nonumber \\[4pt]
&= (-1)^n \, \ell_{n+1}\big(\sfR_l\sfR^k(A\otimes\cdots\otimes A),\sfR^l\sfR_k(\lambda)\big) \nonumber \\[4pt]
&= (-1)^n \, \ell_{n+1}(A,\dots,A,\lambda) \ . \label{eq:ellRlambdaRA}
\end{align}

\subsubsection*{Braided Lie algebras of gauge transformations}

Whereas a closure statement for the braided gauge $L_\infty$-algebra should follow as in the classical case \eqref{eq:closure}, the general proof is complicated by the complex explicit forms of the braided homotopy relations. For the field theories that we study in this paper, the gauge algebra is always field-independent and closes off-shell, that is, without the need to impose any field equations. In these instances, it is relatively straightforward to establish closure of these left and right braided gauge transformations. For this, as suggested from the theories we shall study, we require that the $4$-term braided $L_\infty$-algebra $(V,\{\ell_n\})$ has the vanishing sets of brackets
\begin{align}\label{eq:vanishingells}
\ell_{n+2}(\lambda_1,\lambda_2,A_1,\dots,A_n)=0 \qquad \mbox{and} \qquad \ell_{n+2}(\lambda_1,\lambda_2,F,A_1,\dots,A_{n-1})=0
\end{align}
for all $n\geq1$, $\lambda_1,\lambda_2\in V_0$, $A_1,\dots,A_n\in V_1$ and $F\in V_2$. If our $L_\infty$-algebra is further equipped with a (braided or strictly) cyclic structure of degree $-3$, as will be the case for all field theories considered in this paper, then the two vanishing conditions in \eqref{eq:vanishingells} are equivalent. 
In this case both left and right braided gauge transformations \eqref{eq:LgtL} and \eqref{eq:LgtR} on fields $V_1$ close the braided Lie algebra
\begin{align}\label{eq:braidedclosure}
\big[\delta_{\lambda_1}^{\textrm{\tiny L,R}},\delta_{\lambda_2}^{\textrm{\tiny L,R}}\big]_\circ^\star = \delta_{-\ell_2(\lambda_1,\lambda_2)}^{\textrm{\tiny L,R}} \ .
\end{align}

Let us indicate why the braided commutation relations \eqref{eq:braidedclosure} hold. Using \eqref{eq:LgtL} and the braided Leibniz rule \eqref{eq:braidedLeibnizleft} we compute the composition of two left braided gauge transformations evaluated on a field $A\in V_1$ to get
\begin{align*}
\delta^\lact_{\lambda_1}\delta^\lact_{\lambda_2}A &= \sum_{n=1}^\infty\, \frac{(-1)^{\frac12\,n\,(n-1)}}{n!} \, \ell_{n+1}\big(\sfR_l(\lambda_2),\triangle^n_\CF(\delta_{\sfR^l(\lambda_1)}^\lact)(A\otimes\cdots \otimes A)\big) \\[4pt]
&= \sum_{n=1}^\infty\, \frac{(-1)^{\frac12\,n\,(n-1)}}{n!} \, \Big( \ell_{n+1}\big(\sfR_l(\lambda_2),\delta_{\sfR^l(\lambda_1)}^\lact A,A^{\otimes n-1}\big) \\ & \quad \, \hspace{1cm} + \sum_{i=1}^{n-1} \, \ell_{n+1}\big(\sfR_l(\lambda_2),\sfR_{k_1}(A),\dots,\sfR_{k_i}(A),\delta^\lact_{\sfR^{k_i}\cdots\sfR^{k_1}\sfR^l(\lambda_1)}A,A^{\otimes n-i-1}\big)\Big) \\[4pt]
&= \sum_{n,m=0}^\infty \, \frac{(-1)^{\frac12\,n\,(n+1) + \frac12\,m\,(m-1)}}{(n+1)!\,m!} \, \Big(\ell_{n+2}\big(\sfR_l(\lambda_2),\ell_{m+1}(\sfR^l(\lambda_1),A^{\otimes m}),A^{\otimes n}\big) \\
& \quad \, \hspace{1cm} + \sum_{i=1}^n \, \ell_{n+2}\big(\sfR_l(\lambda_2),\sfR_{k_1}(A),\dots,\sfR_{k_i}(A),\ell_{m+1}(\sfR^{k_i}\cdots\sfR^{k_1}\sfR^l(\lambda_1),A^{\otimes m}),A^{\otimes n-i}\big) \Big) \\[4pt]
&= \sum_{n,m=0}^\infty \, \frac{(-1)^{\frac12\,n\,(n+1) + \frac12\,m\,(m-1)}}{(n+1)!\,m!} \ \sum_{i=0}^n \, \ell_{n+2}\big(\sfR_l(\lambda_2),\sfR_k(A^{\otimes i}),\ell_{m+1}(\sfR^k\sfR^l(\lambda_1),A^{\otimes m}),A^{\otimes n-i}\big)
\end{align*}
where in the last step we used the first identity from \eqref{eq:Rmatrixidswn}. Reordering the double sum we obtain 
\begin{align*}
\delta^\lact_{\lambda_1}\delta^\lact_{\lambda_2}A = \sum_{n=0}^\infty \, \frac{(-1)^{\frac12\,n\,(n-1)}}{n!} \ \sum_{j=0}^n \, & \frac{(-1)^{(j-1)\,(n-j)}}{n-j+1} \, \binom{n}{j} \\ & \times \ \sum_{i=0}^{n-j} \, \ell_{n-j+2}\big(\sfR_l(\lambda_2),\sfR_k(A^{\otimes i}),\ell_{j+1}(\sfR^k\sfR^l(\lambda_1),A^{\otimes j}),A^{\otimes n-i-j}\big) \ .
\end{align*}
For the corresponding composition $\delta^\lact_{\sfR_p(\lambda_2)}\delta^\lact_{\sfR^p(\lambda_1)}A$, we use
\begin{align*}
& \ell_{n-j+2}\big(\sfR_l\sfR^p(\lambda_1),\sfR_k(A^{\otimes i}),\ell_{j+1}(\sfR^k\sfR^l\sfR_p(\lambda_2),A^{\otimes i}),A^{\otimes n-i-j}\big) \\[4pt]
& \hspace{6cm} = \ell_{n-j+2}\big(\lambda_1,\sfR_k(A^{\otimes i}),\ell_{j+1}(\sfR^k(\lambda_2),A^{\otimes j}),A^{\otimes n-i-j}\big) \ .
\end{align*}

Collecting terms order by order in tensor powers $A^{\otimes n}$, we see that the closure formula \eqref{eq:braidedclosure} for left gauge transformations evaluated on a field $A\in V_1$ is satisfied if
\begin{align}\label{eq:closurecond}
\sum_{j=0}^n \ \sum_{i=0}^{n-j} \, \frac{(-1)^{(j-1)\,(n-j)}}{n-j+1} \, \binom{n}{j} \, \Big( & \, \ell_{n-j+2}\big(\lambda_1,\sfR_k(A^{\otimes i}),\ell_{j+1}(\sfR^k(\lambda_2),A^{\otimes j}),A^{\otimes n-i-j}\big) \nonumber \\ & -\ell_{n-j+2}\big(\sfR_l(\lambda_2),\sfR_k(A^{\otimes i}),\ell_{j+1}(\sfR^k\sfR^l(\lambda_1),A^{\otimes j}),A^{\otimes n-i-j}\big)\Big) \nonumber \\[4pt] & \hspace{4cm} = \ell_{n+1}\big(\ell_2(\lambda_1,\lambda_2),A^{\otimes n}\big) 
\end{align}
for each $n\geq0$. This is a consequence of the braided homotopy relations $\CJ_{n+2}(\lambda_1,\lambda_2,A^{\otimes n})=0$ from \eqref{eq:Jnmaps} and the imposed restrictions \eqref{eq:vanishingells}. To illustrate the complicated $\RR$-matrix gymnastics that are involved in the proof, even with our simplifying assumptions \eqref{eq:vanishingells}, we demonstrate this explicitly for the first two orders. 

For $n=0$ the left-hand side of \eqref{eq:closurecond} evaluates to
\begin{align*}
\ell_2\big(\lambda_1,\ell_1(\lambda_2)\big) - \ell_2\big(\sfR_l(\lambda_2),\ell_1(\sfR^l(\lambda_1))\big) &= \ell_2\big(\lambda_1,\ell_1(\lambda_2)\big) - \ell_2\big(\sfR_l(\lambda_2),\sfR^l(\ell_1(\lambda_1))\big) \\[4pt]
&= \ell_2\big(\lambda_1,\ell_1(\lambda_2)\big) + \ell_2\big(\ell_1(\lambda_1),\lambda_2\big) \\[4pt]
&= \ell_1\big(\ell_2(\lambda_1,\lambda_2)\big) 
\end{align*}
as required, where in the first equality we used $U\frv$-equivariance of the map $\ell_1$, in the second equality we used braided antisymmetry of the $2$-bracket $\ell_2$, and in the last step we used the derivation property of the homotopy identities \eqref{eq:l1l2braided}. 

For $n=1$ the left-hand side of \eqref{eq:closurecond} is
\begin{align*}
&-\tfrac12\,\ell_3\big(\lambda_1,\ell_1(\lambda_2),A\big) -\tfrac12\,\ell_3\big(\lambda_1,\sfR_k(A),\ell_1(\sfR^k(\lambda_2))\big) + \ell_2\big(\lambda_1,\ell_2(\lambda_2,A)\big) \\ & \quad \, 
+ \tfrac12\,\ell_3\big(\sfR_l(\lambda_2),\ell_1(\sfR^l(\lambda_1)),A\big) + \tfrac12\,\ell_3\big(\sfR_l(\lambda_2),\sfR_k(A),\ell_1(\sfR^k\sfR^l(\lambda_1)\big) - \ell_2\big(\sfR_l(\lambda_2),\ell_2(\sfR^l(\lambda_1),A)\big) \\[4pt]
& \hspace{1cm} = -\tfrac12\,\ell_3\big(\lambda_1,\ell_1(\lambda_2),A\big) - \tfrac12\,\ell_3\big(\lambda_1,\sfR_k(A),\sfR^k(\ell_1(\lambda_2))\big) -\ell_2\big(\ell_2(\sfR_{k\bar\swone}(\lambda_2),\sfR_{k\bar\swtwo}(A)),\sfR^k(\lambda_1)\big) \\ & \hspace{1cm} \quad \, + \tfrac12\,\ell_3\big(\sfR_l(\lambda_2),\sfR^l(\ell_1(\lambda_1)),A\big) + \tfrac12\,\ell_3\big(\sfR_l(\lambda_2),\sfR_k(A),\sfR^k\sfR^l(\ell_1(\lambda_1))\big) \\ & \hspace{1cm} \quad \, + \ell_2\big(\ell_2(\sfR_{k\bar\swone}\sfR^l(\lambda_1),\sfR_{k\bar\swtwo}(A)),\sfR^k\sfR_l(\lambda_2)\big) \\[4pt]
& \hspace{1cm} = -\ell_3\big(\lambda_1,\ell_1(\lambda_2),A\big) - \ell_2\big(\ell_2(\sfR_k(\lambda_2),\sfR_l(A)),\sfR^l\sfR^k(\lambda_1)\big) - \tfrac12\,\ell_3\big(\ell_1(\lambda_1),\lambda_2,A\big) \\
& \hspace{1cm} \quad \, + \tfrac12\,\ell_3\big(\sfR_{k\bar\swone}(\lambda_2),\sfR_{k\bar\swtwo}(A),\sfR^k(\ell_1(\lambda_1))\big) + \ell_2\big(\ell_2(\sfR_k\sfR^l(\lambda_1),\sfR_m(A)),\sfR^m\sfR^k\sfR_l(\lambda_2)\big) \\[4pt]
& \hspace{1cm} = -\ell_3\big(\lambda_1,\ell_1(\lambda_2),A\big) - \tfrac12\,\ell_3\big(\ell_1(\lambda_1),\lambda_2,A\big) - \tfrac12\,\ell_3\big(\sfR_m\sfR_l\sfR^k(\ell_1(\lambda_1)),\sfR^m\sfR_{k\bar\swone}(\lambda_2),\sfR^l\sfR_{k\bar\swtwo}(A)\big) \\
& \hspace{1cm} \quad \, -\ell_2\big(\ell_2(\sfR_k(\lambda_2),\sfR_l(A)),\sfR^l\sfR^k(\lambda_1)\big) + \ell_2\big(\ell_2(\lambda_1,\sfR_m(A)),\sfR^m(\lambda_2)\big) \\[4pt]
& \hspace{1cm} = -\ell_3\big(\lambda_1,\ell_1(\lambda_2),A\big) - \tfrac12\,\ell_3\big(\ell_1(\lambda_1),\lambda_2,A\big) - \tfrac12\,\ell_3\big(\sfR_l\sfR^k(\ell_1(\lambda_1)),\sfR^l_{\bar\swone}\sfR_{k\bar\swone}(\lambda_2),\sfR^l_{\bar\swtwo}\sfR_{k\bar\swone}(A)\big) \\
& \hspace{1cm} \quad \, -\ell_2\big(\ell_2(\sfR_k(\lambda_2),\sfR_l(A)),\sfR^l\sfR^k(\lambda_1)\big) + \ell_2\big(\ell_2(\lambda_1,\sfR_m(A)),\sfR^m(\lambda_2)\big) \\[4pt]
& \hspace{1cm} = -\ell_3\big(\lambda_1,\ell_1(\lambda_2),A\big) - \ell_3\big(\ell_1(\lambda_1),\lambda_2,A\big) \\
& \hspace{1cm} \quad \,-\ell_2\big(\ell_2(\sfR_k(\lambda_2),\sfR_l(A)),\sfR^l\sfR^k(\lambda_1)\big) + \ell_2\big(\ell_2(\lambda_1,\sfR_m(A)),\sfR^m(\lambda_2)\big) \\[4pt]
& \hspace{1cm} = \ell_2\big(\ell_2(\lambda_1,\lambda_2),A\big)
\end{align*}
as required. In the first equality we used $U\frv$-equivariance and braided antisymmetry of $\ell_2$. In the second equality we used braided antisymmetry of $\ell_3$ and the second $\RR$-matrix identity \eqref{eq:Rmatrixidswn}. In the third equality we used braided symmetry on degree~1 elements followed by braided antisymmetry involving degree~0 elements. In the fourth equality we used the first $\RR$-matrix identity from \eqref{eq:Rmatrixidswn}.
In the final step we used the braided homotopy relation \eqref{I3braided} together with the vanishing brackets $\ell_3(\lambda_1,\lambda_2,A)=0$ and $\ell_3(\lambda_1,\lambda_2,\ell_1(A))=0$ by our requirements \eqref{eq:vanishingells}. Similarly, one checks explicitly the leading orders of the closure relation \eqref{eq:braidedclosure} evaluated on a field $A\in V_1$ for right braided gauge transformations \eqref{eq:LgtR}. A complete proof, including the general statement of the closure relation, is beyond the scope of this paper.
\begin{example}\label{ex:protoLinftygauge}
Returning to Example~\ref{ex:protoex},
let us now consider a $4$-term braided $L_\infty$-algebra whose kinematical sector is given by the homogeneous subspaces
\begin{align*}
V_0=\Omega^0(M,\frg) \qquad \mbox{and} \qquad V_1=\Omega^p(M,W)\times\Omega^1(M,\frg) \ .
\end{align*}
Comparing the left and right braided gauge transformations with the prescriptions \eqref{eq:LgtL} and \eqref{eq:LgtR} we read off the differential as
\begin{align*}
\ell_1^\star(\lambda) = (0,\dd\lambda) \ ,
\end{align*}
while the only non-zero brackets involving gauge parameters and fields are given by
\begin{align*}
\ell_2^\star\big(\lambda,(\phi,a)\big) = \big(-\lambda\triangleright_\star\phi,-[\lambda,a]_\frg^\star\big) \qquad \mbox{and} \qquad \ell_2^\star\big((\phi,a),\lambda\big) = \big(\phi\,_{\star\!\!}\triangleleft\lambda, -[a,\lambda]_\frg^\star\big) \ .
\end{align*}
The brackets in this sector satisfy \eqref{eq:vanishingells}, and comparing \eqref{eq:protobraidedrep} with \eqref{eq:braidedclosure} we find the non-zero $2$-bracket among gauge parameters
\begin{align*}
\ell_2^\star(\lambda_1,\lambda_2) = -[\lambda_1,\lambda_2]_\frg^\star \ .
\end{align*}
\end{example}

Example~\ref{ex:protoLinftygauge} illustrates the typical features of braided gauge symmetries in most field theories of interest in physics, and in particular those treated in the present paper. Given the complexity of the calculations illustrated above using braided homotopy relations even at the lowest orders, for simplicity and ease of exposition, in the following we will impose the further vanishing conditions
\begin{align}\label{eq:physgt}
\ell_{n+1}(\lambda,A_1,\dots,A_n)=0 \qquad \mbox{and} \qquad \ell_{n+1}(\lambda,F,A_1,\dots,A_{n-1})=0 \ ,
\end{align}
for all $n\geq2$, $\lambda\in V_0$, $A_1,\dots,A_n\in V_1$ and $F\in V_2$.
The first condition implies that the braided gauge transformations are at most linear in the field $A$, that is, $\delta_\lambda^\lact A=\ell_1(\lambda)+\ell_2(\lambda,A)$ and $\delta_\lambda^\ract A=\ell_1(\lambda)-\ell_2(A,\lambda)$. The second condition is again equivalent to the first for (braided or strictly) cyclic $L_\infty$-algebras. With these assumptions, the lowest orders calculations presented above constitute a complete proof of the closure condition \eqref{eq:braidedclosure}.

\subsection{Braided covariant dynamics}
\label{sec:braidedEOM}

For a field $A\in V_1$, the braided analog of \eqref{EOM} which dictates the field equations follows exactly as in the classical case.
\begin{definition}
For $A\in V_1$, the \emph{braided field equations} $F_A=0$ in $V_2$ are given by
\begin{align}\label{eq:braidedeom}
F_A := \sum_{n =1}^\infty \, \frac{1}{n!}\, (-1)^{\frac12\,{n\,(n-1)}}\, \ell_{n}(A,\dots,A) \ .
\end{align} 
\end{definition}
\subsubsection*{Left gauge covariance}

The expression \eqref{eq:braidedeom} is covariant under the left gauge transformations discussed in Section~\ref{sec:braidedkinLinfty}, namely
\begin{align}\label{eq:lefteomcov}
\delta_\lambda^\lact F_A = \sum_{n =0}^\infty \, \frac{1}{(n+1)!}\, (-1)^{\frac12\,{n\,(n-1)}} \ \sum_{i=0}^{n} \, (-1)^i \, \ell_{n+2}\big(\lambda,A^{\otimes i},F_A,A^{\otimes n-i}\big) \ ,
\end{align}
for all gauge parameters $\lambda\in V_0$. In the classical case, where $\RR=1\otimes1$, the expression \eqref{eq:lefteomcov} agrees with \eqref{gaugetransfF}.

Let us indicate why \eqref{eq:lefteomcov} is true. We compute the gauge transformation of \eqref{eq:braidedeom} using \eqref{eq:LgtL} and similar calculations to those in Section~\ref{sec:braidedkinLinfty}:
\begin{align}\label{eq:deltaLFA}
\delta_\lambda^\lact F_A &= \sum_{n =1}^\infty \, \frac{1}{n!}\, (-1)^{\frac12\,{n\,(n-1)}}\, \ell_{n}\big(\triangle_\CF^n(\delta^\lact_\lambda) (A\otimes\dots\otimes A)\big) \nonumber \\[4pt]
&= \sum_{n,m=0}^\infty \, \frac{(-1)^{\frac12\,n\,(n+1)+\frac12\,m\,(m-1)}}{(n+1)!\,m!} \ \sum_{i=0}^n \,  \ell_{n+1}\big(\sfR_k(A^{\otimes i}),\ell_{m+1}(\sfR^k(\lambda),A^{\otimes m}),A^{\otimes n-i}\big) \\[4pt]
&= \sum_{n=0}^\infty \frac{(-1)^{\frac12\,n\,(n-1)}}{n!} \, \sum_{j=0}^n \, \sum_{i=0}^{n-j} \, \frac{(-1)^{(j-1)\,(n-j)}}{n-j+1} \, \binom{n}{j} \, \ell_{n-j+1}\big(\sfR_k(A^{\otimes i}),\ell_{j+1}(\sfR^k(\lambda),A^{\otimes j}),A^{\otimes n-i-j}\big) \nonumber
\end{align}
where in the last step we reorganized the double sum. On the other hand, using \eqref{eq:braidedeom} the right-hand side of \eqref{eq:lefteomcov} is equal to
\begin{align*}
\sum_{n,m=0}^\infty \, & \frac{(-1)^{\frac12\,n\,(n-1) + \frac12\,m\,(m-1)}}{(n+1)!\,m!} \ \sum_{i=0}^n \, (-1)^i \, \ell_{n+2}\big(\lambda,A^{\otimes i},\ell_{m}(A^{\otimes m}),A^{\otimes n-i}\big) \\[4pt] & \hspace{1cm} = \sum_{n=0}^\infty \, \frac{(-1)^{\frac12\,n\,(n-1)}}{n!} \ \sum_{j=0}^n \ \sum_{i=0}^{n-j} \, \frac{(-1)^{j\,(n-j)+i}}{n-j+1} \, \binom{n}{j} \, \ell_{n-j+2}\big(\lambda,A^{\otimes i},\ell_j(A^{\otimes j}),A^{\otimes n-i-j}\big)
\end{align*}
where we set $\ell_0:=0$.

Thus, collecting terms order by order in tensor powers of the fields $A^{\otimes n}$, we see that \eqref{eq:lefteomcov} is satisfied if
\begin{align}
\sum_{j=0}^n \ \sum_{i=0}^{n-j}\, \frac{(-1)^{j\,(n-j)}}{n-j+1} \, \binom{n}{j} \, \Big( (-1)^{n-j} \, & \, \ell_{n-j+1}\big(\sfR_k(A^{\otimes i}),\ell_{j+1}(\sfR^k(\lambda),A^{\otimes j}),A^{\otimes n-i-j}\big) \label{eq:leftcovcond} \\ & \hspace{2cm} - (-1)^i \, \ell_{n-j+2}\big(\lambda,A^{\otimes i},\ell_j(A^{\otimes j}),A^{\otimes n-i-j}\big)\Big) = 0 \nonumber
\end{align}
for each $n\geq0$. This now follows by the braided homotopy relations $\CJ_{n+1}(\lambda,A^{\otimes n})=0$ from \eqref{eq:Jnmaps}. 
For $n=0$ the expression \eqref{eq:leftcovcond} reads $\ell_1(\ell_1(\lambda))=0$, which is the differential identity in \eqref{eq:l1l2braided}. For $n=1$ the left-hand side of the expression \eqref{eq:leftcovcond} reads
\begin{align*}
& -\tfrac12\,\ell_2\big(\ell_1(\lambda),A\big) - \tfrac12\, \ell_2\big(\sfR_k(A),\ell_1(\sfR^k(\lambda))\big) + \ell_1\big(\ell_2(\lambda,A)\big) - \ell_2\big(\lambda,\ell_1(A)\big) \\[4pt]
& \hspace{8cm} = \ell_1\big(\ell_2(\lambda,A)\big) - \ell_2\big(\ell_1(\lambda),A\big) - \ell_2\big(\lambda,\ell_1(A)\big) = 0 
\end{align*}
as required, where we used $U\frv$-equivariance of $\ell_1$ and braided symmetry of $\ell_2$ on degree~$1$ elements in the first step, and the derivation identity of \eqref{eq:l1l2braided} in the last step. For $n=2$, with similar manipulations one can reduce the left-hand side of \eqref{eq:leftcovcond} to
\begin{align*}
& \tfrac13\,\ell_3\big(\ell_1(\lambda),A,A\big) + \tfrac13\,\ell_3\big(\sfR_k(A),\ell_1(\sfR^k(\lambda)),A\big) + \tfrac13\,\ell_3\big(\sfR_{k\bar\swone}(A),\sfR_{k\bar\swtwo}(A),\ell_1(\sfR^k(\lambda))\big) \\ & \quad \, +\ell_2\big(\ell_2(\lambda,A),A\big) - \ell_2\big(\sfR_k(A),\ell_2(\sfR^k(\lambda),A)\big) + \ell_3\big(\lambda,\ell_1(A),A) - \ell_3\big(\lambda,A,\ell_1(A)\big) \\ & \quad \, +\ell_1\big(\ell_3(\lambda,A,A)\big) - \ell_2\big(\lambda,\ell_2(A,A)\big) \\[4pt]
& \hspace{1cm} = \ell_3\big(\ell_1(\lambda),A,A\big) + \ell_3\big(\lambda,\ell_1(A),A\big) - \ell_3\big(A,A,\ell_1(A)\big) + \ell_1\big(\ell_3(\lambda,A,A)\big) \\ & \hspace{2cm} + \ell_2\big(\ell_2(\lambda,A),A\big) + \ell_2\big(\ell_2(\lambda,\sfR_k(A)),\sfR^k(A)\big) + \ell_2\big(\ell_2(\sfR_k(A),\sfR_l(A)),\sfR^l\sfR^k(\lambda)\big) = 0 
\end{align*}
as required, where in the final step we used \eqref{I3braided}.

For $n\geq3$, we will only give the proof under the vanishing conditions \eqref{eq:physgt} that we motivated in Section~\ref{sec:braidedkinLinfty}; that is, we show $\delta_\lambda^\lact F_A = \ell_2(\lambda,F_A)$. In this case, all summands of the second line of \eqref{eq:leftcovcond} vanish except for the term with $j=n$, which is $-\ell_2\big(\lambda,\ell_n(A^{\otimes n})\big)$. In the first line, only the summands with $j=0,1$ are non-zero. For $j=0$ the contributions to the sum over $i$ are
\begin{align*}
\ell_{n+1}\big(\sfR_k(A^{\otimes i}),\ell_1(\sfR^k(\lambda)),A^{\otimes n-i}\big) = \ell_{n+1}\big(\ell_1(\lambda),A^{\otimes n}\big)
\end{align*}
for each $i=0,1,\dots,n$, where we used $U\frv$-equivariance of the bracket $\ell_1$ and braided symmetry of the brackets $\ell_{n+1}$ on degree~$1$ elements. For $j=1$ the contributions to the sum over $i$ can be written in the form
\begin{align*}
\ell_n\big(\sfR_k(A^{\otimes i}),\ell_2(\sfR^k(\lambda),A),A^{\otimes n-1-i}\big) &= \ell_n\big(\sfR_l(\ell_2(\sfR^k(\lambda),A),\sfR^l\sfR_k(A^{\otimes i}),A^{\otimes n-1-i}\big) \\[4pt]
&= \ell_n\big(\ell_2(\lambda,\sfR_k(A)),\sfR^k(A^{\otimes i}),A^{\otimes n-1-i}\big) 
\end{align*}
for each $i=0,1,\dots,n-1$,
where in the first equality we used braided symmetry of $\ell_n$ on degree~$1$ elements, and in the last step we used the first $\RR$-matrix identity from \eqref{eq:Rmatrixidswn} when multiplied on the right by $\sfR^k\otimes1\otimes\sfR_k$ and acted on by the braided antisymmetric map $\ell_2$ of the first two factors. 

Collecting everything together, the condition \eqref{eq:leftcovcond} in this way reads
\begin{align}\label{eq:leftcovJn+1}
(-1)^n \, \ell_{n+1}\big(\ell_1(\lambda),A^{\otimes n}\big) = \ell_2\big(\lambda,\ell_n(A^{\otimes n})\big) - \sum_{i=1}^n \, \ell_n\big(\ell_2(\lambda,\sfR_k(A)),\sfR^k(A^{\otimes i-1}),A^{\otimes n-i}\big) \ .
\end{align}
But \eqref{eq:leftcovJn+1} is just the braided homotopy relation $\CJ_{n+1}(\lambda,A^{\otimes n})=0$ from \eqref{eq:Jnmaps} in this case: terms in the sum over $i$ in $\CJ_{n+1}(\lambda,A^{\otimes n})$ with $i\neq1,2,n$ all vanish because each one involves a bracket from \eqref{eq:physgt}. After removing the terms involving the vanishing brackets from \eqref{eq:physgt}, the remaining terms in $\CJ_{n+1}(\lambda,A^{\otimes n})=0$ with $i=1,2,n$ are precisely the relations \eqref{eq:leftcovJn+1} for all $n\geq3$.

\subsubsection*{Right gauge covariance}

The covariance of \eqref{eq:braidedeom} under right gauge transformations is similarly given by
\begin{align*}
\delta_\lambda^\ract F_A = \delta_{\sfR_k(\lambda)}^\lact \sfR^k(F_A) = \sum_{n =0}^\infty \, \frac{1}{(n+1)!}\, (-1)^{\frac12\,{n\,(n+1)}} \ \sum_{i=0}^n \, (-1)^{i+1} \, \ell_{n+2}\big(A^{\otimes i},F_A,A^{\otimes n-i},\lambda\big) \ .
\end{align*}
The proof is completely analogous to that above and will not be repeated.

Because of this, in the remainder of this paper we will mostly only mention left gauge transformations $\delta_\lambda^\lact$ explicitly, with the corresponding statements for right gauge transformations being the obvious modifications involving suitable rearrangements of arguments and sign changes. Indeed left and right braided gauge transformations are not independent, as the right transformations are defined in terms of left transformations (cf. \eqref{eq:LgtR}). This will become apparent in Section~\ref{sec:braidedNoether}, where we will also see that they both induce the same interdependence among the field equations.

\subsubsection*{Space of physical states}

One of the novelties of braided field theory is the extent to which its symmetries can be regarded as genuine gauge symmetries from a dynamical perspective. Recall that in the classical case of an ordinary field theory on a manifold $M$, by a (infinitesimal) gauge symmetry one usually means an infinite-dimensional symmetry parametrised by sections of a vector bundle over $M$; in the case of Lagrangian field theories these are symmetries of the action functional. As such, they induce redundancies in the degrees of freedom which are solutions of the field equations. For example, suppose that $M$ is a globally hyperbolic spacetime. Choosing a Cauchy surface $\Sigma\subset M$, along with values of the fields and their jets on $\Sigma$ representing initial data, for any solution of the field equations which extends the values of the fields on $M$, one may use the gauge symmetries to produce new solutions with the same initial data. Generally, by Noether's second theorem, gauge symmetries are in a one-to-one correspondence with differential identities among the field equations, which further exhibits that not all degrees of freedom are independent: the space of \emph{physical} states is the moduli space of classical solutions modulo gauge transformations. 

In a braided field theory, the situation is markedly different. In this case there is no properly defined moduli space of classical solutions, as braided gauge transformations do not generally act on the subspace of solutions $A\in V_1$ to the field equations $F_A=0$ and one cannot take the quotient by braided gauge transformations. In other words, braided gauge symmetries \emph{do not} produce new solutions of the field equations. To understand this point, let us consider the covariance of the field equations. Let $A\in V_1$ be a solution of 
\begin{align*}
F_A = 0 \ .
\end{align*}
From covariance \eqref{eq:lefteomcov} it then follows that
\begin{align*}
\delta_\lambda^\lact F_A = 0
\end{align*}
on solutions of $F_A=0$, for any gauge parameter $\lambda\in V_0$. However, computing the first order variation in the infinitesimal gauge parameter we find
\begin{align*}
\left. \frac\dd{\dd t}\right|_{t=0} \, F_{A+t\,\delta_\lambda^\lact A} &= \left. \frac\dd{\dd t}\right|_{t=0} \ \sum_{n=1}^\infty \, \frac1{n!} \, (-1)^{\frac12\,n\,(n-1)} \, \ell_n(A+t\,\delta_\lambda^\lact A,\dots,A+t\,\delta_\lambda^\lact A) \\[4pt]
&= \sum_{n=1}^\infty \, \frac1{n!} \, (-1)^{\frac12\,n\,(n-1)} \ \sum_{i=0}^n \, \ell_n\big(A^{\otimes i},\delta_\lambda^\lact A, A^{\otimes n-1-i}\big) \\[4pt]
&= \sum_{n,m=0}^\infty \, \frac{(-1)^{\frac12\,n\,(n+1) + \frac12\,m\,(m-1)}}{(n+1)! \, m!} \ \sum_{i=0}^n \, \ell_{n+1}\big(A^{\otimes i},\ell_{m+1}(\lambda,A^{\otimes m}),A^{\otimes n-i}\big) \ .
\end{align*}
This differs from the second line of \eqref{eq:deltaLFA} by the $\RR$-matrix factors in the sum over~$i$, and we have shown that in general
\begin{align*}
F_{A+\delta_\lambda^\lact A} \neq F_A + \delta_\lambda^\lact F_A \ ,
\end{align*}
{even at order $\mathcal{O}(\lambda^1)$}, except in the classical case where $\RR=1\otimes 1$ and the expected equality holds. This implies that a braided gauge transformation of a solution $A$ to the field equations $F_A=0$ may result in a field with $F_{A+\delta_\lambda^\lact A}\neq0${, even at order $\mathcal{O}(\lambda^1)$}.

What is then the meaning of a braided gauge symmetry, and why should one even speak of them as ``gauge'' transformations? The answer comes from a braided version of Noether's second theorem: Even though the solution generating aspect of gauge transformations is lost, they still induce interdependence of the field equations, and hence gauge redundancies in this sense. These identities follow, analogously to the classical case, from our braided $L_\infty$-algebra formalism, and is our next step in the construction of a braided field theory in this formulation.

\subsection{Braided Noether identities}
\label{sec:braidedNoether}

The most complicated part of the story, and yet the most important in light of our discussion about the significance of braided gauge symmetries from Section~\ref{sec:braidedEOM}, concerns the braided version of the Noether identities. They are relations for the field equations $F_A$ which hold off-shell, that is, when $F_A\neq0$, and are a consequence of the braided homotopy relations $\CJ_n(A^{\otimes n})=0$ for all $n\geq1$. In Section~\ref{sec:braidedaction} we will discuss how they correspond to braided gauge symmetries.

Working under the same assumptions \eqref{eq:physgt}, we will prove the general vanishing identity
\begin{align}\label{braidedNoether}
\dsf_AF_A & := \ell_1(F_A) + \frac12\,\big(\ell_2(F_A,A) - \ell_2(A,F_A)\big) - \sum_{n=3}^\infty \, \frac1{n!} \, (-1)^{\frac12\,n\,(n-1)} \, \ell_1\big(\ell_{n}(A,\dots,A)\big) \nonumber \\[4pt]
& \quad \, -\sum_{n=2}^\infty \, \frac1{2\,n!} \, (-1)^{\frac12\,n\,(n-1)} \, \Big(\ell_2\big(\ell_n(A,\dots,A),A\big) - \ell_2\big(A,\ell_n(A,\dots,A)\big)\Big) \\[4pt]
&= 0 \ . \nonumber
\end{align}
We comment on the general case below. Note that, in contrast to the classical case \eqref{eq:Noether}, the operation $\dsf_A F_A$ is no longer linear in the field equations $F_A$ and contains inhomogeneous terms involving brackets of the fields $A$ themselves. This is related to the violations of the Bianchi identities which we discusssed at the end of Section~\ref{sec:braidedkin}, and we will come back to this point in our explicit examples later on. 

To prove the identity \eqref{braidedNoether}, we first note that, for any $4$-term braided $L_\infty$-algebra with a (braided or strict) cyclic structure of degree~$-3$, the second of the conditions in \eqref{eq:physgt} is equivalent to
\begin{align}\label{eq:physNoethercond}
\ell_{n+1}(F,A_1,\dots,A_n)=0 \ ,
\end{align}
for all $n\geq2$, $A_1,\dots,A_n\in V_1$ and $F\in V_2$. This vanishing condition significantly simplifies the braided homotopy relation $\CJ_n(A^{\otimes n})=0$ from \eqref{eq:Jnmaps}: in this case the terms with $\ell_{n+1-i}$ acting then vanish for $i\leq n-2$, and the sum over $i$ receives non-zero contributions only from terms with $i=n-1,n$.

The derivation of \eqref{braidedNoether} follows by considering a particular weighted sum of braided homotopy relations $\CJ_n(A^{\otimes n})=0$, which vanishes term by term, expanding each $\CJ_n(A^{\otimes n})$ in brackets, and manipulating the expressions algebraically to make the field equations $F_A$ appear ``maximally'', as many times as possible. We start from
\begin{align}\label{eq:weightedhomotopysum}
0 &= \sum_{n=1}^{\infty}\, \frac{1}{n!}\,(-1)^{\frac{1}{2}\, n\,(n-1)} \, \CJ_{n}(A^{\otimes n})\nn \\[4pt]
&= \sum_{n=1}^{\infty}\, \frac{1}{n!}\, (-1)^{\frac{1}{2}\, n\,(n-1)} \ \sum_{i=1}^{n} \, (-1)^{i\,(n-i)} \ \sum_{\sigma\in{\rm Sh}_{i,n-i}} \, \text{sgn}(\sigma) \, \ell_{n+1-i} \circ (\ell_{i}\otimes \id^{\otimes n-i})\circ {}_{\textrm{\tiny$\CF$}}\sigma (A^{\otimes n})\nn \\[4pt]
	&= \ell_{1}\big(\ell_{1}(A)\big)-\frac{1}{2}\,\Big(\ell_{1}\big(\ell_{2}(A,A)\big)-\ell_{2}\big(\ell_{1}(A),A\big)+\ell_{2}\big(A,\ell_{1}(A)\big)\Big) \\
	& \quad \,+ \sum_{n=3}^{\infty} \, \frac{1}{n!} \, (-1)^{\frac{1}{2}\, n\,(n-1)} \,\bigg((-1)^{n-1} \, \sum_{\sigma\in{\rm Sh}_{n-1,1}} \, \text{sgn}(\sigma)\, \ell_{2} \circ (\ell_{n-1}\otimes \id)\circ {}_{\textrm{\tiny$\CF$}}\sigma\nn \\& \hspace{10cm} + \sum_{\sigma\in{\rm Sh}_{n,0}} \, \text{sgn}(\sigma)\, \ell_{1} \circ \ell_{n}\circ {}_{\textrm{\tiny$\CF$}}\sigma\bigg) (A^{\otimes n}) \ , \nn
\end{align}
where in the third equality we expanded out the $n=1,2$ terms, while using the simplification of $\CJ_n(A^{\otimes n})$ for $n\geq 3$. Next we notice the last sum over $(n,0)$-shuffled permutations contains only the identity permutation, thus combining the first term, the second term and the $\ell_{1}\circ \ell_{n}$ terms of the sum:
\begin{align*}
0 &= \ell_{1}\Big(\ell_{1}(A)-\frac{1}{2}\,\ell_{2}(A,A) + \sum_{n=3}^{\infty}\, \frac{1}{n!} \, (-1)^{\frac{1}{2}\,n\,(n-1)} \, \ell_{n}(A^{\otimes n})\Big) + \frac{1}{2} \, \Big(\ell_{2}\big(\ell_{1}(A),A\big) - \ell_{2}\big(A,\ell_{1}(A)\big)\Big) \nn \\
& \quad \, + \sum_{n=3}^\infty \, \frac{1}{n!} \, (-1)^{\frac{1}{2}\,n\,(n-1)+n-1} \ \sum_{\sigma\in{\rm Sh}_{n-1,1}} \, \text{sgn}(\sigma) \, \ell_{2} \circ (\ell_{n-1}\otimes \id)\circ {}_{\textrm{\tiny$\CF$}}\sigma (A^{\otimes n}) \nn \\[4pt]
&= \ell_{1}(F_A)+\frac{1}{2}\,\Big(\ell_{2}\big(\ell_{1}(A),A\big) - \ell_{2}\big(A,\ell_{1}(A)\big)\Big) \\ & \quad \, + \sum_{n=2}^\infty \, \frac{1}{(n+1)!} \, (-1)^{\frac{1}{2}\,n\,(n-1)} \ \sum_{i=0}^n\, \ell_2\big(\ell_n(A^{\otimes n-i},\sfR_k(A^{\otimes i})),\sfR^k(A)\big) \ ,
\end{align*}
where we cancelled the sign factor $\text{sgn}(\sigma)$ with the same sign coming out of the graded (braided) permutation after acting with $_{\textrm{\tiny$\CF$}} \sigma$ on $A^{\otimes n}$, since $A$ is of degree $1$. Next we separate out the terms with $i=0,n$ from the sum over $i$, write $\frac{1}{(n+1)!}=\frac{1}{2\,n!}-\frac{n-1}{2\,(n+1)!}$, and use braided antisymmetry for the $i=n$ term to get
\begin{align}\label{eq:leftNoetherprelim}
0 &= \ell_{1}(F_A) +\frac{1}{2}\,\Big(\ell_{2}\big(\ell_{1}(A),A\big) - \ell_{2}\big(A,\ell_{1}(A)\big)\Big) \nn \\ & \quad \, +\sum_{n=2}^\infty \, \bigg(\frac{1}{2\,n!}-\frac{n-1}{2\,(n+1)!}\bigg)\,(-1)^{\frac{1}{2}\,n\,(n-1)}\,\Big(\ell_{2}\big(\ell_{n}(A^{\otimes n}),A\big) -\ell_{2}\big(A,\ell_{n}(A^{\otimes n})\big)  \Big) \nn  \\ & \quad \, +\sum_{n=2}^\infty \, \frac{1}{(n+1)!}\,(-1)^{\frac{1}{2}\,n\,(n-1)} \ \sum_{i=1}^{n-1} \, \ell_2\big(\ell_n(A^{\otimes n-i},\sfR_k(A^{\otimes i})),\sfR^k(A)\big)
\nn \\[4pt]
&= \ell_{1}(F_A)+\frac{1}{2}\,\big( \ell_{2}( F_A,A) - \ell_{2}(A, F_A)\big) -\frac{1}{2}\,\sum_{n=2}^\infty\,\frac{n-1}{(n+1)!}\,\Big(\ell_{2}\big(\ell_{n}(A^{\otimes n}),A\big) -\ell_{2}\big(A,\ell_{n}(A^{\otimes n})\big)\Big) \nn \\
& \quad \, + \sum_{n=2}^\infty \, \frac{1}{(n+1)!}\,(-1)^{\frac{1}{2}\,n\,(n-1)} \ \sum_{i=1}^{n-1} \, \ell_2\big(\ell_n(A^{\otimes n-i},\sfR_k(A^{\otimes i})),\sfR^k(A)\big) \ ,
\end{align}
where in the second equality we combine the second and third terms with the $\frac{1}{2\,n!}$ terms of the first sum.

The expression \eqref{eq:leftNoetherprelim} is one form of the braided Noether identities. Written in this way, the reduction to the classical case is immediate: when $\RR=1\otimes 1$ the terms within the sum over~$i$ are all equal and so the two sums cancel each other by strict antisymmetry of bracket $\ell_2$, while the second and third terms likewise combine, thus reducing the identity \eqref{eq:leftNoetherprelim} to the classical Noether identity \eqref{eq:Noether} (when \eqref{eq:physNoethercond} is applied), as expected. The identity \eqref{eq:leftNoetherprelim} can be written in a slightly simpler fashion by using $\CJ_{n+1}(A^{\otimes n+1})=0$ to rewrite the sum over $i$ as
\begin{align*}
\sum_{i=1}^{n-1} \, \ell_2\big(\ell_n(A^{\otimes n-i},\sfR_k(A^{\otimes i})),\sfR^k(A)\big) &= \ell_{2}\big(A,\ell_{n}(A^{\otimes n})\big)-\ell_2\big(\ell_{n}(A^{\otimes n}),A\big) - (-1)^{n} \, \ell_{1} \big(\ell_{n+1}(A^{\otimes n+1})\big) \ .
\end{align*}
Substituting this in \eqref{eq:leftNoetherprelim}, after a short manipulation of the prefactors we arrive at \eqref{braidedNoether}.

It is clear from this calculation how the braided Noether identities follow for any braided field theory. One simply adds up all the braided homotopy relations $\CJ_n(A^{\otimes n})=0$ and makes the field equations appear within the other brackets. The field equations $F_A$ will appear within brackets up to as high order as the brackets which appear in the braided gauge transformation formula \eqref{eq:LgtL}, like they did from the restriction \eqref{eq:physNoethercond} in the present case; many more extra brackets involving solely the fields $A$ will appear in the most general case. The details of the general case are beyond the scope of this paper and will be left for future investigations, as the present result is sufficient for all the theories we shall consider.

The correspondence with braided gauge symmetries is immediately apparent in an action formalism for the braided field theory, where the braided Noether identities $\dsf_AF_A$ are `dual' to the braided gauge transformations $\delta_\lambda^\lact A$ and $\delta_\lambda^\ract A$ via a cyclic structure on the braided $L_\infty$-algebra, as in the classical case from Section~\ref{sec:Linftygft}. Indeed this is how Noether identities are usually derived in classical field theories, and we will discuss this further in Section~\ref{sec:braidedaction} below. But we stress that this is unnnecessary and the Noether identities follow from the known braided homotopy relations in the $L_\infty$-algebra approach to field theories.

\subsection{Action formulation}
\label{sec:braidedaction}

The variational formulation of the braided covariant field equations helps to elucidate the meaning of braided gauge symmetries as discussed previously, while also demonstrating some unexpected subtleties and differences in the action principle as compared to the classical case. Although all of the dynamics of a braided field theory can be encoded in the braided $L_\infty$-algebra structure as we have thus far described, the action formalism will make more clear to what extent particular examples of braided field theories differ from their classical counterparts. 

The field equations $F_A=0$, given by \eqref{eq:braidedeom}, can be described as the stationary locus of an action functional $S:V_1\to\FR$ which is constructed by endowing our $4$-term braided $L_\infty$-algebra $(V,\{\ell_n\})$ with a non-degenerate \emph{strictly} cyclic pairing $\langle-,-\rangle:V\otimes V\to\FR$ of degree~$-3$, that is
\begin{align}\label{eq:strictsympair}
\langle v_0,v_1\rangle = \langle\sfR_k(v_0),\sfR^k(v_1)\rangle =  \langle v_1,v_0\rangle \ ,
\end{align}
and 
\begin{align}\label{eq:strictcyclic}
\langle v_0,\ell_n(v_1,v_2,\dots,v_n)\rangle = (-1)^{(|v_0|+1)\,n} \ \langle v_n,\ell_n(v_0,v_1,\dots,v_{n-1})\rangle \ ,
\end{align}
for all $n\geq1$ and $v_0,v_1,\dots,v_n\in V$. We can interpret the strict symmetry \eqref{eq:strictsympair} as saying that we focus on pairings which are invariant under the natural action of the twisted Hopf algebra. Then the action functional is given exactly as in the classical case.
\begin{definition}
The \emph{braided action functional} $S:V_{1} \rightarrow \FR$ is 
\begin{align}\label{eq:braidedaction}
S(A) := \sum_{n=1}^\infty \, \frac{1}{(n+1)!}\, (-1)^{\frac12\,{n\,(n-1)}}\, \langle A, \ell_{n}(A,\dots,A)\rangle \ ,
\end{align}
on dynamical fields $A\in V_{1}$.
\end{definition}

\subsubsection*{Variational principle}

The strictness requirement on the cyclic structure ensures that the critical locus of the action functional \eqref{eq:braidedaction} coincides with solutions of the field equations from Section~\ref{sec:braidedEOM}. For a general field variation $\delta A$ on the tangent space $TV_1$ to field space, the variation of the action functional (or any other functional of $A\in V_1$) is defined in the usual way by
\begin{align*}
\delta S(A) := \left. \frac\dd{\dd t}\right|_{t=0} \, S(A+t\,\delta A) \ .
\end{align*}
This defines $\delta$ as a strict derivation of degree~$0$, that is, it obeys an undeformed Leibniz rule with the trivial coproduct $\triangle(\delta)=\delta\otimes\id + \id\otimes\delta$. We further require that $\delta$ commutes with the $\RR$-matrices of the braiding, analogously to the action of the exterior derivative $\dd$ on twisted differential forms, but in contrast to braided gauge transformations. Similarly to Section~\ref{sec:braidedgauge}, we iterate this coproduct to act on maps $\mu_n$ defined on $\midoplus^nV$ by $\delta\mu_n:=\mu_n\circ\triangle^n(\delta)$; in particular, we define $\delta\langle-,-\rangle := \langle-,-\rangle\circ\triangle(\delta)$. 

Then the variation $\delta S$ of the action functional \eqref{eq:braidedaction} proceeds identically to the classical case, where both the cyclicity requirement and the derivation properties of $\delta$ are the same: we use
\begin{align*}
\delta\langle A,\ell_n(A^{\otimes n})\rangle = \langle\delta A,\ell_n(A^{\otimes n})\rangle + \sum_{i=0}^{n-1} \, \langle A,\ell_n(A^{\otimes i},\delta A,A^{\otimes n-i-1})\rangle = (n+1) \, \langle\delta A,\ell_n(A^{\otimes n})\rangle \ ,
\end{align*}
where in the last step we used strict cyclicity to set all terms in the sum equal to the first term. Using \eqref{eq:braidedeom} this shows that
\begin{align}\label{eq:braidedvar}
\delta S(A) = \langle \delta A,F_A\rangle
\end{align}
for arbitrary variations $\delta A$ of the fields. Hence by non-degeneracy of the pairing, the extrema of the action functional \eqref{eq:braidedaction} are precisely the solutions $A\in V_1$ of the field equations $F_A=0$.

\subsubsection*{Braided gauge invariance}
\label{sec:braidedgaugeinv}

The problem of invariance of the action functional \eqref{eq:braidedaction} under braided gauge transformations \eqref{eq:LgtL} is much more involved and does not simply follow the classical route. A potentially problematic new feature which immediately arises is the following. Recalling our discussion from Section~\ref{sec:braidedgauge} regarding the consistency with braided symmetric maps, the braided Leibniz rule is not generally compatible with the strict symmetry \eqref{eq:strictsympair} of the cyclic pairing $\langle-,-\rangle$. Strict symmetry means \smash{$\langle-,-\rangle=\langle-,-\rangle\circ\tau_{V,V}^\sharp$}, but this is not the consistent notion of symmetry for a morphism in the category \smash{${}_\CF\CCM^\sharp$}, because it only involves the graded transposition $\tau^\sharp$ which is \emph{not} the braiding isomorphism in ${}_\CF\CCM^\sharp$ unless $\RR=1\otimes1$. As a result $\delta_\lambda^\lact\langle-,-\rangle\neq\delta_\lambda^\lact(\langle-,-\rangle\circ\tau^\sharp)$; explicitly
\begin{align}\label{eq:braidedincons}
\delta_\lambda^\lact\langle v_1,v_2\rangle = \langle\delta^\lact_\lambda v_1,v_2\rangle + \langle\sfR_k(v_1),\delta^\lact_{\sfR^k(\lambda)}v_2\rangle = \langle\delta^\lact_{\sfR^k(\lambda)}v_2,\sfR_k(v_1)\rangle + \langle v_2,\delta^\lact_\lambda v_1\rangle \neq  \delta_\lambda^\lact\langle v_2,v_1\rangle \ ,
\end{align}
which is not generally consistent with $\langle v_1,v_2\rangle = \langle v_2,v_1\rangle$ unless $\RR=1\otimes 1$ in which case equality holds.\footnote{Strictly speaking, the gauge transformations $\delta_\lambda^\lact$ are not even defined on maps which do not live in \smash{${}_\CF\CCM^\sharp$}, so formally this discussion is not even well-defined from the categorical point of view.}

Despite this evident inconsistency, the gauge transformation of the action functional \eqref{eq:braidedaction} \emph{is} nevertheless well-defined, because the cyclicity condition \eqref{eq:strictcyclic} ensures compatibility with strict symmetry of the pairing order by order in tensor powers of the fields $A^{\otimes n}$ for $n\geq 1$: Using \eqref{eq:LgtL}, and repeatedly applying strict symmetry \eqref{eq:strictsympair}, cyclicity \eqref{eq:strictcyclic} and the first $\RR$-matrix identity from \eqref{eq:Rmatrixidswn}, we compute
\begin{align}\label{eq:deltaLsym}
\delta_\lambda^\lact\big\langle A,\ell_n\big(A^{\otimes n}\big)\big\rangle &= \big\langle\delta^\lact_\lambda A,\ell_n\big(A^{\otimes n}\big)\big\rangle + \big\langle \sfR_l(A) , \ell_{n}\big(\triangle_\CF^n(\delta_{\sfR^l(\lambda)}^\lact)(A^{\otimes n})\big)\big\rangle \nonumber \\[4pt]
&= \big\langle A,\ell_n\big(\delta_\lambda^\lact A,A^{\otimes n-1}\big)\big\rangle + \sum_{i=0}^{n-2} \, \big\langle\sfR_l(A),\ell_n\big(\sfR_k(A^{\otimes i}),\delta_{\sfR^k\sfR^l(\lambda)}^\lact A,A^{\otimes n-i-1}\big)\big\rangle \nonumber \\
& \hspace{6cm} + \big\langle\sfR_l(A),\ell_n\big(\sfR_k(A^{\otimes n-1}),\delta_{\sfR^k\sfR^l(\lambda)}^\lact A\big)\big\rangle \nonumber \\[4pt]
&= \big\langle\ell_n\big(\delta_\lambda^\lact A,A^{\otimes n-1}\big),A\big\rangle + \sum_{i=0}^{n-2} \, \big\langle\ell_n\big(\sfR_l(A),\sfR_k(A^{\otimes i}),\delta_{\sfR^k\sfR^l(\lambda)}^\lact A,A^{\otimes n-i-2}\big),A\big\rangle \nonumber \\
& \hspace{6cm} + \big\langle\ell_n\big(\sfR_l(A),\sfR_k(A^{\otimes n-1})\big),\delta_{\sfR^k\sfR^l(\lambda)}^\lact A\big\rangle \nonumber \\[4pt]
&= \big\langle\ell_n\big(\delta_\lambda^\lact A,A^{\otimes n-1}\big),A\big\rangle + \sum_{i=0}^{n-2} \, \big\langle\ell_n\big(\sfR_k(A^{\otimes i+1}),\delta_{\sfR^k(\lambda)}^\lact A,A^{\otimes n-i-2}\big),A\big\rangle \nonumber \\
& \hspace{6cm} + \big\langle\ell_n\big(\sfR_k(A^{\otimes n})\big),\delta_{\sfR^k(\lambda)}^\lact A\big\rangle \nonumber \\[4pt]
&= \big\langle\ell_n\big(\triangle_\CF^n(\delta_\lambda^\lact)(A^{\otimes n})\big),A\big\rangle + \big\langle\sfR_k\big(\ell_n(A^{\otimes n})\big),\delta_{\sfR^k(\lambda)}^\lact A\big\rangle \nonumber \\[4pt]
&= \delta_\lambda^\lact\big\langle\ell_n\big(A^{\otimes n}\big),A\big\rangle \ ,
\end{align}
where in the fourth equality we used
\begin{align*}
\ell_n\big(\sfR_l(A),\sfR_k(A^{\otimes i}),\ell_{m+1}(\sfR^k\sfR^l(\lambda),A^{\otimes m}),A^{\otimes n-i-2}\big) = \ell_n\big(\sfR_k(A^{\otimes i+1}),\ell_{m+1}(\sfR^k(\lambda),A^{\otimes m}),A^{\otimes n-i-2}\big)
\end{align*}
and similarly
\begin{align*}
\big\langle\ell_n\big(\sfR_l(A),\sfR_k(A^{\otimes n-1})\big),\ell_{m+1}\big(\sfR^k\sfR^l(\lambda),A^{\otimes m}\big)\big\rangle = \big\langle\ell_n\big(\sfR_k(A^{\otimes n})\big),\ell_{m+1}\big(\sfR^k(\lambda),A^{\otimes m}\big)\big\rangle \ ,
\end{align*}
for all $m\geq0$. Hence $\delta_\lambda^\lact S(A)$ is well-defined. 

Recall that in the classical case, the exact same cyclicity argument as that used to arrive at \eqref{eq:braidedvar} would show that a general gauge transformation of the action functional can be written as $\delta_\lambda S(A) = \langle \delta_\lambda A,F_A\rangle$ (cf. \eqref{eq:gtNoether}). In this case, a classical gauge transformation can be identified with a specific direction in the tangent space $TV_1$ to field space, as a special instance of a general variation of the action functional. However, while this identification is essentially tautological in the classical case, it ceases to be true in our braided setting, because general variations and braided gauge transformations are treated differently: a transformation $\delta_\lambda^\lact A$ of a field $A\in V_1$ along a gauge parameter $\lambda\in V_0$ is no longer a special case of a general variation $\delta A$ of the field, because the gauge transformations are now defined as \emph{braided} derivations, in contrast to the general variations. In other words, in the braided case the first equality in \eqref{eq:gtNoether} simply no longer holds in general:
\begin{align}\label{eq:deltalambdapair}
\delta_\lambda^\lact S(A) \neq \langle \delta_\lambda^\lact A,F_A\rangle \ .
\end{align}
This is explicitly evident in the calculation of \eqref{eq:deltaLsym}.

With these preliminary considerations out of the way, we will now establish braided gauge invariance of the action functional:
\begin{align}\label{eq:braidedgaugeinv}
\delta_\lambda^\lact S(A)=0
\end{align}
for all $\lambda\in V_0$ and $A\in V_1$. Because of \eqref{eq:deltalambdapair}, we have to work much harder than in the classical case. By using steps analogous to those in \eqref{eq:deltaLFA}, we compute the gauge variation of the action functional \eqref{eq:braidedaction} using \eqref{eq:LgtL} to write
\begin{align*}
\delta_\lambda^\lact S(A) &= \sum_{n,m=0}^\infty \, \frac{(-1)^{\frac12\,n\,(n+1)+\frac12\,m\,(m-1)}}{(n+2)!\,m!} \, \Big( \big\langle\ell_{m+1}\big(\lambda,A^{\otimes m}\big),\ell_{n+1}\big(A^{\otimes n+1}\big)\big\rangle \\ & \hspace{4cm} + \sum_{i=0}^n \,  \big\langle \sfR_l(A), \ell_{n+1}\big(\sfR_k(A^{\otimes i}),\ell_{m+1}(\sfR^k\sfR^l(\lambda),A^{\otimes m}),A^{\otimes n-i}\big)\big\rangle \Big) \ .
\end{align*}
Next we use cyclicity as well as the by now familiar manipulations with the $\RR$-matrix identities \eqref{eq:Rmatrixidswn} to isolate the gauge parameter $\lambda$ in the left entries of the pairings. This yields
\begin{align*}
\big\langle\ell_{m+1}\big(\lambda,A^{\otimes m}\big),\ell_{n+1}\big(A^{\otimes n+1}\big)\big\rangle = (-1)^{m+1} \, \big\langle\lambda,\ell_{m+1}\big(A^{\otimes m},\ell_{n+1}(A^{\otimes n+1})\big)\big\rangle \ ,
\end{align*}
and
\begin{align*}
& \big\langle \sfR_l(A), \ell_{n+1}\big(\sfR_k(A^{\otimes i}),\ell_{m+1}(\sfR^k\sfR^l(\lambda),A^{\otimes m}),A^{\otimes n-i}\big)\big\rangle \\[4pt]
& \hspace{5cm} = \big\langle A,\ell_{n+1}\big(\sfR_l(A),\sfR_k(A^{\otimes i}),\ell_{m+1}(\sfR^k\sfR^l(\lambda),A^{\otimes m}),A^{\otimes n-i-1}\big)\big\rangle \\[4pt]
& \hspace{5cm} = \big\langle A,\ell_{n+1}\big(\sfR_l(A^{\otimes i+1}),\ell_{m+1}(\sfR^l(\lambda),A^{\otimes m}),A^{\otimes n-i-1}\big)\big\rangle \\[4pt]
& \hspace{5cm} = \big\langle A,\ell_{n+1}\big(\sfR_k(\ell_{m+1}(\sfR^l(\lambda),A^{\otimes m})),\sfR^k\sfR_l(A^{\otimes i+1}),A^{\otimes n-i-1}\big)\big\rangle \\[4pt]
& \hspace{5cm} = \big\langle A,\ell_{n+1}\big(\ell_{m+1}(\lambda,\sfR_k(A^{\otimes m})),\sfR^k(A^{\otimes i+1}),A^{\otimes n-i-1}\big)\big\rangle \\[4pt]
& \hspace{5cm} = \big\langle \ell_{m+1}\big(\lambda,\sfR_k(A^{\otimes m})\big),\ell_{n+1}\big(\sfR^k(A^{\otimes i+1}),A^{\otimes n-i}\big)\big\rangle \\[4pt]
& \hspace{5cm} = (-1)^{m+1} \, \big\langle\lambda,\ell_{m+1}\big(\sfR_k(A^{\otimes m}),\ell_{n+1}(\sfR^k(A^{\otimes i+1}),A^{\otimes n-i})\big)\big\rangle \ ,
\end{align*}
for $i=0,1,\dots,n$, where in the first equality we used cyclicity, in the second equality we used the first $\RR$-matrix identity from \eqref{eq:Rmatrixidswn}, in the third equality we used braided symmetry of the bracket $\ell_{n+1}$, in the fourth equality we used the first $\RR$-matrix identity from \eqref{eq:Rmatrixidswn} again, and in the last two equalities we applied cyclicity again. Thus the gauge transformation is given by
\begin{align}\label{eq:isolatelambda}
\delta_\lambda^\lact S(A) &= - \sum_{n,m=0}^\infty \, \frac{(-1)^{\frac12\,n\,(n+1)+\frac12\,m\,(m+1)}}{(n+2)!\,m!} \ \sum_{i=0}^{n+1} \, \big\langle\lambda,\ell_{m+1}\big(\sfR_k(A^{\otimes m}),\ell_{n+1}(\sfR^k(A^{\otimes i}),A^{\otimes n-i+1})\big)\big\rangle \nonumber \\[4pt]
&= - \sum_{n=0}^\infty \frac{(-1)^{\frac12\,n\,(n+1)}}{n!} \ \sum_{j=0}^n \ \sum_{i=0}^{n-j+1} \, \frac{(-1)^{j\,(n-j)}}{(n-j+1)\,(n-j+2)} \, \binom{n}{j} \\[4pt]
& \hspace{6cm} \times \, \big\langle\lambda,\ell_{j+1}\big(\sfR_k(A^{\otimes j}),\ell_{n-j+1}(\sfR^k(A^{\otimes i}),A^{\otimes n-i-j+1})\big)\big\rangle \ , \nonumber
\end{align}
for all $\lambda\in V_0$.

By collecting terms order by order in tensor powers $A^{\otimes n+1}$, and using non-degeneracy of the cyclic pairing, we see that \eqref{eq:braidedgaugeinv} is satisfied if
\begin{align}\label{eq:gaugeinvn+1}
\sum_{j=0}^n \ \sum_{i=0}^{n-j+1} \, \frac{(-1)^{j\,(n-j)}}{(n-j+1)\,(n-j+2)} \, \binom{n}{j} \, \ell_{j+1}\big(\sfR_k(A^{\otimes j}),\ell_{n-j+1}(\sfR^k(A^{\otimes i}),A^{\otimes n-i-j+1})\big) = 0
\end{align}
for each $n\geq0$. This is a consequence of the fact that this expression can be brought to the form of the braided homotopy relations $\CJ_{n+1}(A^{\otimes n+1})=0$ from \eqref{eq:Jnmaps}. The proof follows in exactly the same vein as the proof of braided covariance of the field equations from Section~\ref{sec:braidedEOM}, and will not be repeated here; the details are left for the interested reader to fill in. 

\subsubsection*{Braided Noether identities}

We can now make a more precise statement about the correspondence between the braided Noether identities discussed in Section~\ref{sec:braidedNoether} and braided gauge symmetries of our field theory. Despite the fact that the first equality of \eqref{eq:gtNoether} does not carry over to the braided case (cf. \eqref{eq:deltalambdapair}), one can still write the gauge transformation of the action functional in the form
\begin{align*}
\delta_\lambda^\lact S(A) = -\langle\lambda,\dsf_A F_A\rangle \ ,
\end{align*}
by using cyclicity to isolate the gauge parameter $\lambda$, along with the braided homotopy relations, while making the field equations $F_A$ appear. In this sense the operation $\sfd_A$ is the `braided adjoint' to $\delta^\lact_\lambda$ with respect to the cyclic pairing, and the braided Noether identities $\dsf_A F_A=0$ then follow from braided gauge invariance $\delta_\lambda^\lact S(A)=0$ of the action functional \eqref{eq:braidedaction}, as discussed above. This is the way that Noether identities of field theories are usually derived in the classical case. 

The isolation of $\lambda$ was done in the calculation \eqref{eq:isolatelambda}, from which an explicit formula for $\dsf_A F_A$ can be extracted: the weighted sum of braided homotopy relations \eqref{eq:weightedhomotopysum} is exactly the combination that appears in \eqref{eq:isolatelambda}. What remains to be done is to massage the expression into a form involving $F_A$ explicitly. However, this is not necessary as we saw in Section~\ref{sec:braidedNoether}, as the calculation follows from the braided homotopy relations: the most effective and concise way to derive the braided Noether identities is through the braided $L_\infty$-algebra framework which we have developed in this section. Similarly, cyclicity of the pairing implies that right braided gauge invariance $\delta_\lambda^\ract S(A)=0$ induces the same braided Noether identities $\sfd_AF_A=0$.

\subsubsection*{Drinfel'd twist deformations of action functionals}

In our concrete examples later on, the pairing in a braided field theory used to define an action functional will be obtained by twist deformation quantization of the pairing used in a classical field theory. Note, however, that the resulting action functional from the braided $L_\infty$-algebra formulation is not generally the same as that obtained by naively twist deforming the Lagrangian of a classical field theory, rather we twist only the structure maps of the cyclic $L_\infty$-algebra structure used for the construction of the classical field theory. We will see explicitly in Section~\ref{sec:NCgravity4D} that the two prescriptions do not generally define the same noncommutative field theory.  

\subsection{Example: Braided Chern--Simons theory}
\label{sec:braidedCStheory}

In the remainder of this paper we apply the general framework developed in this section to concrete examples of (diffeomorphism-invariant) braided field theories. We start with the simplest illustrative example to close off the present section. Similarly to the classical case, our formalism provides a vast generalization of braided noncommutative Chern--Simons gauge theory on a three-dimensional manifold $M$ based on a Lie algebra $\frg$, which is the prototypical example of a braided field theory. Starting from the differential graded Lie algebra of $\frg$-valued exterior differential forms on $M$ from Section~\ref{sec:CStheory}, we use Proposition~\ref{prop:braidedfromclassical} to construct the $4$-term differential (graded) braided Lie algebra $\big(\Omega^\bullet(M,\frg)[[\hbar]],\ell_1^\star,\ell_2^\star\big)$ from Section~\ref{sec:twistedforms} with the brackets
\begin{align*}
\ell_1^\star(\alpha) = \dd \alpha \qquad \mbox{and} \qquad \ell_2^\star(\alpha,\beta) = -[\alpha,\beta]_\frg^\star \ ,
\end{align*}
for $\alpha,\beta\in V=\Omega^\bullet(M,\frg)$.

Using \eqref{eq:LgtL} we define the (left) braided gauge transformation of a gauge field $A\in \Omega^1(M,\frg)$ by a gauge parameter $\lambda\in\Omega^0(M,\frg)$ as
\begin{align}\label{eq:CSstargt}
\delta_{\lambda}^{\star\lact}A := \ell_1^\star(\lambda) + \ell_2^\star(\lambda,A) = \dd\lambda - [\lambda,A]_\frg^\star \ .
\end{align}
This is just the braided gauge transformation of a connection from Examples~\ref{ex:protoex} and~\ref{ex:protoLinftygauge}; in particular, they close the braided Lie algebra \eqref{eq:braidedclosure} on $\Omega^0(M,\frg)[[\hbar]]$ with bracket $\ell_2^\star(\lambda_1,\lambda_2)=-[\lambda_1,\lambda_2]_\frg^\star$. Using \eqref{eq:braidedeom}, the field equations $F_A^\star=0$ in $\Omega^2(M,\frg)[[\hbar]]$ are encoded by 
\begin{align}\label{eq:CSstarEOM}
F_A^\star := \ell_1^\star(A) - \tfrac12\,\ell_2^\star(A,A) = \dd A + \tfrac12\, [A,A]_\frg^\star \ ,
\end{align}
which is just the braided curvature of the gauge field $A$ from Section~\ref{sec:braidedkin}; in particular, from \eqref{eq:lefteomcov} we obtain the anticipated braided gauge covariance
\begin{align*}
\delta_\lambda^{\star\lact}F_A^\star = \ell_2^\star(\lambda,F_A^\star) = -[\lambda,F_A^\star]_\frg^\star \ .
\end{align*}
Thus the braided Chern--Simons field equations state that the braided connection $A$ is flat, which is just the obvious deformation of the corresponding statement from the classical case.

At this point, however, the classical and braided field theories diverge. Recalling from Section~\ref{sec:braidedEOM} that \smash{$F^\star_{A+\delta_\lambda^{\star\lact} A}\neq F^\star_A + \delta_\lambda^{\star\lact}F^\star_A$}, there is no moduli space of flat connections modulo braided gauge transformations, and the space of physical states cannot be described as in the classical case. This deviation is accounted for by the different Noether identity in the braided case, which is no longer simply the Bianchi identity for the braided curvature, which is generally violated as we discussed in Section~\ref{sec:braidedkin}. Using \eqref{braidedNoether} and the braided homotopy relations, the braided Noether identity in $\Omega^3(M,\frg)[[\hbar]]$ is given by
\begin{align}\label{eq:CSstarNoether}
\dsf_A^\star F_A^\star :=& \, \ell_1^\star(F_A^\star) - \tfrac12\,\big(\ell_2^\star(A,F_A^\star) - \ell_2^\star(F_A^\star,A)\big) + \tfrac14\,\ell_2^\star\big(\sfR_k(A),\ell_2^\star(\sfR^k(A),A)\big) \nonumber \\[4pt]
=& \, \dd F_A^\star + \tfrac12\,[A,F_A^\star]_\frg^\star - \tfrac12\,[F_A^\star,A]_\frg^\star + \tfrac14\,[\sfR_k(A),[\sfR^k(A),A]_\frg^\star]_\frg^\star \nonumber \\[4pt]
=& \, \tfrac12\,\big(\dd_{\star\lact}^AF_A^\star + \dd_{\star\ract}^AF_A^\star\big) + \tfrac14\,[\sfR_k(A),[\sfR^k(A),A]_\frg^\star]_\frg^\star \ = \ 0 \ ,
\end{align}
in agreement with the deformed Bianchi identities \eqref{eq:braidedBianchi}.
Thus $\sfd_A^\star$ does not coincide with the symmetrized braided covariant derivative $\frac12\,\big(\dd_{\star\lact}^A+\dd_{\star\ract}^A\big)$.

Finally, to formulate the braided Chern--Simons action functional, we assume that the three-manifold $M$ is compact and oriented, that $\frg$ is a quadratic Lie algebra, and that the Drinfel'd twist $\CF\in U\frv[[\hbar]]\otimes U\frv[[\hbar]]$ is Hermitian and compatible with the cyclic structure \eqref{eq:CSpairing}; in this case the compatibility conditions were discussed in Section~\ref{sec:twistedforms}. By Proposition~\ref{prop:compatiblestrict}, the pairing defined by
\begin{align*}
\langle\alpha,\beta\rangle_\star := \int_M \, \Tr_\frg(\alpha\wedge_\star\beta) \ ,
\end{align*}
with $|\alpha|+|\beta| = 3$, thus determines a strictly cyclic structure on the underlying braided $L_\infty$-algebra $\big(\Omega^\bullet(M,\frg)[[\hbar]],\ell_1^\star,\ell_2^\star\big)$. 

Then we can construct a braided gauge invariant action functional $S_\star:\Omega^1(M,\frg)\to\FR[[\hbar]]$, for which the braided Chern--Simons field equation $F_A^\star=0$ describes its critical locus, according to the general prescription of \eqref{eq:braidedaction}:
\begin{align}\label{eq:braidedCSaction}
S_\star(A) = \frac12\,\big\langle A,\ell_1^\star(A)\big\rangle_\star - \frac16 \, \big\langle A,\ell_2^\star(A,A)\big\rangle_\star = \frac12\, \int_M\, \Tr_\frg\Big(A\wedge_\star\dd A + 
 \frac13\, A\wedge_\star [A, A]^\star_\frg\Big) \ .
\end{align}
This is just the expected braided deformation of the classical Chern--Simons action functional \eqref{eq:CSaction} (with the braided Lie bracket).
Using graded cyclicity of the integral over $M$ with respect to the twisted exterior
product $\wedge_\star$, in the sense of Section~\ref{sec:twistedforms}, along with invariance of the quadratic form $\Tr_\frg$ on $\frg$ and the usual $\RR$-matrix identities \eqref{eq:Rmatrixidswn}, one may check explicitly that this action functional is invariant under the braided gauge transformations \eqref{eq:CSstargt}, that it gives the field equations defined by \eqref{eq:CSstarEOM} under arbitrary variations of the gauge fields $\delta A$, and that it yields the braided Noether identity \eqref{eq:CSstarNoether} along the lines explained in Section~\ref{sec:braidedaction} using braided gauge invariance of $S_\star(A)$ and cyclicity of the pairing.

We stress that the constraints imposed in this last step on the manifold $M$, the Lie algebra $\frg$, and the twist $\CF$ are only needed if one wishes to view the field equations as the stationary locus of an action functional. The braided gauge symmetries and dynamics of the braided field theory can be formulated entirely without these restrictions if one is satisfied with regarding it as a non-Lagrangian field theory, which for many (but not all) considerations is sufficient at the classical level.

\section{Noncommutative gravity}
\label{sec:NCgravity}

\subsection{Noncommutative deformations of general relativity}
\label{sec:NCGR}

Over the last $20$ years there has been an ongoing effort 
to construct consistent noncommutative theories of gravity, motivated from several research programmes. On one hand, it is argued that they should serve as low-energy limits of any fully fledged theory of quantum gravity, retaining certain aspects of quantum-induced noncommutativity. On the other hand, T-duality and the conjectural noncommutativity of certain string theoretic backgrounds suggest that gravity ought to be formulated in such terms as well. These noncommutative theories of gravity rely on 
the notions of noncommutative spacetime and noncommutative geometry, and in appropriate 
limits they reduce to classical general relativity. One of the main problems encountered in these 
approaches is the breaking of the diffeomorphism symmetry of general relativity. 
In most theories the diffeomorphism symmetry, or at 
least a part of it, is broken and one 
needs to make sense of this breaking and of the remaining symmetries, if any survive. To put our theories into proper context with previous literature on the subject, we start by briefly reviewing some previous theories of noncommutative gravity based on deformation quantization in the various flavours that are similar in spirit to the approach taken in the present paper: geometric quantization, star-product deformations (equivalently quantization of Poisson structures), and Drinfel'd twist deformations. 

In the second order or metric formalism, noncommutative theories of gravity were originally introduced via Drinfel'd twist deformation quantization in the seminal works~\cite{TwistApproach,Aschieri:2005zs}. In these theories one starts from the Hopf algebra of the classical diffeomorphism symmetry of general relativity and applies a Drinfel'd twist to generate the twisted Hopf algebra of diffeomorphisms, exactly as we have done in the present paper. The noncommutative differential geometry is then covariant with respect to the twisted diffeomorphism symmetry~\cite{PaoloAlex}. Consistent noncommutative deformations of the (vacuum) Einstein equations can be written down, and some 
particular solutions based on twists constructed from Killing or semi-Killing vector fields have been obtained in~\cite{TwistSolutions,Aschieri:2009qh}, as twist deformations of classical solutions and their symmetries.

In the first order or Cartan formalism, noncommutative gravity can be formulated as a 
noncommutative gauge theory of the Lorentz or (A)dS group using the enveloping algebra approach and Seiberg--Witten maps~\cite{Jurco:2000ja}, as discussed in Section~\ref{sec:braidedvsstargauge}. A noncommutative deformation of the Einstein--Cartan--Palatini action was discussed in~\cite{AschCast,Chamseddine:2000si,Cardella:2002pb,Aschieri:2012in}, and the coupling to matter fields (fermion and gauge fields) was discussed in~\cite{Aschieri:2014xka,Gocanin:2017lxl,Aschieri:2012vf,Ciric:2018ygk}; a noncommutative extension of Palatini--Holst theory was considered by~\cite{deCesare:2018cjr}. A common feature of these models is that the first order correction to the classical theory (in the deformation parameter $\hbar$) vanishes if the corresponding action functional is real. The first non-vanishing correction is then at order $\hbar^2$. A solution of the field equations corresponding to a deformation of Minkowski spacetime via the Moyal--Weyl twist was found in \cite{Ciric:2016isg}. Some of these theories also have a twisted diffeomorphism symmetry realized via the action of the star-Lie derivative from Section~\ref{sec:braideddiff} (see~\cite{Aschieri:2014xka,Aschieri:2012vf}). A relation between the Seiberg--Witten map approach and noncommutative theories of gravity via Fedosov deformation quantization of endomorphism bundles was developed in~\cite{Dobrski:2015emm}. Deformed dispersion relations in a $\kappa$-deformation of Minkowski space defined via a Jordanian twist were derived in~\cite{Aschieri:2020yft}.

The main goal of this final section is to apply the formalism of braided field theory that we have developed in Section~\ref{sec:bft} to construct new braided theories of noncommutative gravity. Our approach has many advantages compared to previous treatments. In particular, our theories are symmetric under braided diffeomorphisms, like the theories of~\cite{TwistApproach,Aschieri:2005zs,Aschieri:2014xka,Aschieri:2012vf}. However, in contrast to the first order formalism of~\cite{Aschieri:2014xka,Aschieri:2012vf}, our approach retains braided versions of all symmetries with good classical limits, without the need of introducing new spurious degrees of freedom that have no interpretation in the classical world (see the discussion in Section~\ref{sec:braidedvsstargauge}). Our theories have an action formulation and the coupling to fermion fields is straightforward. In both the metric and Cartan approaches, the failure of twisted diffeomorphisms in generating new classical solutions has always been prevalent (though seemingly not noticed before), and our framework clarifies this unconventional feature (see the discussion in Section~\ref{sec:braidedEOM}). We proceed now to our new braided versions of noncommutative gravity, treating separately the three-dimensional and four-dimensional theories in turn.

\subsection{Braided gravity in three dimensions}
\label{sec:NCgravity3D}

The braided noncommutative deformation of Einstein--Cartan--Palatini gravity in three dimensions follows a similar path to the braided Chern--Simons theory constructed in Section~\ref{sec:braidedCStheory}. It serves to introduce as well some of the ingredients that will appear later in the more complicated four-dimensional version of the theory. Our first observation is that the differential graded Lie algebra of Section~\ref{sec:ECP3d} is an $L_\infty$-algebra in the category $\CCM^\sharp$, that is, the vector space \eqref{eq:ECPvectorspace} is a graded $U\frv$-module and the brackets $\ell_1,\ell_2$ are $U\frv$-equivariant; this stems from the diffeomorphism symmetry of general relativity. We proceed to twist deform it using Proposition~\ref{prop:braidedfromclassical} to a differential (graded) braided Lie algebra, starting from the same graded vector space \eqref{eq:ECPvectorspace} and with the same notation. We use the same non-zero brackets $\ell_1$ and $\ell_2$, except that now we include an extra term which will accomodate the presence of a cosmological constant $\Lambda\in\FR$; this modifies only the $2$-brackets $\ell_2\big((e_1,\omega_1)\,,\,(e_2,\omega_2)\big)$ between fields in \eqref{eq:3dbrackets} by shifting the first slot with the term $2\,\Lambda\,e_1\dwedge e_2$~\cite{ECPLinfty}.

\subsubsection*{Braided $\boldsymbol{L_\infty}$-algebra}
\label{sec:3dLinfty}

The $4$-term braided $L_\infty$-algebra underlying three-dimensional noncommutative gravity has differential given by
\begin{align*}
\ell_{1}^\star(\xi,\rho)=(0,\dd\rho) \ , \quad
  \ell_1^\star(e,\omega)=(\dd\omega,\dd e) \qquad \mbox{and} \qquad \ell_{1}^\star(E,{\mit\Omega})=(0, \dd
  {\mit\Omega}) \ ,
\end{align*}
which as always coincides with the classical differential. The $2$-brackets are modified non-trivially from the classical case to
\begin{align}
\ell_{2}^{\star}\big((\xi_{1},\rho_{1})\,,\,(\xi_{2},\rho_{2})\big)&=\big([\xi_{1},\xi_{2}]_\frv^\star\,,\,-[\rho_{1},\rho_{2
}]_{\aso(3)}^\star+\LL_{\xi_1}^{\star}\rho_{2}
- \LL_{\sfR_k(\xi_{2})}^\star\sfR^k(\rho_{1})\big) \ , \nn
\\[4pt]
\ell_{2}^{\star}\big((\xi,\rho)\,,\,(e,\omega)\big)&=\big(-\rho \star e
+\LL_{\xi}^\star e \,,\, -[\rho,\omega]_{\aso(3)}^\star+\LL_{\xi}^\star\om\big) \
, \nn \\[4pt]
\ell_{2}^{\star}\big((\xi,\rho) \,,\, (E,{\mit\Omega})\big)&=\big(-
[\rho , E]_{\aso(3)}^\star+\LL_{\xi}^\star E \,,\, -\rho \star
{\mit\Omega}+\LL_{\xi}^\star{\mit\Omega} \big) \ , \nn \\[4pt]
\ell_{2}^{\star}\big((\xi,\rho)\,,\,(\CX,\CP)\big)&= 
                                                \Big(\dd x^\mu\otimes\Tr\big(\iota_\mu\dd\,
                                                \bar{\sff}^{k} (\rho)
                                                \dwedge \bar{\sff}_{k}
                                                (\CP)\big)+\LL_{\xi}^\star\CX\,,\,-\rho
                                                \star \CP
                                                +\LL_{\xi}^\star
                                                \CP\Big) \ , \label{eq:braided3dbrackets} \\[4pt] 
\ell_{2}^{\star}\big((e_{1},\omega_{1})\,,\,(e_{2},\omega_{2})\big)&=-\big(\,[\omega_{1}, \omega_{2}]_{\aso(3)}^{\star} + 
2\,\Lambda\, e_1\dwedge_\star e_2
\,,\, \omega_{1}\wedge_\star
e_{2} +\sfR_k(\omega_{2}) \wedge_\star \sfR^k (e_{1}) \big) \
                                                                     , \nn
  \\[4pt]
\ell_{2}^{\star}\big((e,\om)\,,\,(E,{\mit\Omega})\big)&=\Big(\dd x^{\mu}\otimes \Tr  \big( \iota_\mu\dd \,
\bar{\sff}^{k}(e) \dwedge \bar{\sff}_{k}(E) + \iota_\mu\dd\,\bar{\sff}^k( \om) \dwedge \bar{\sff}_{k}({\mit\Omega}) \nn   
-\iota_\mu\bar{\sff}^k (e) \dwedge  \dd\, \bar{\sff}_{k}( E)  \nn \\ 
&\hspace{3cm} - \iota_\mu\bar{\sff}^{k}( \om )\dwedge \dd\,
  \bar{\sff}_k ({\mit\Omega}) \big)\, ,\,  \sfR_k
  (E)\wedge_\star \sfR^k (e) - \omega \wedge_\star {\mit\Omega}
  \Big)  \ . \nn 
\end{align}

In \eqref{eq:braided3dbrackets}, the notation $\rho\star e := \rho\,\triangleright_\star e=\rho^a{}_b\,\star e^b \, {\tt E}_a$ is shorthand for the left braided fundamental representation of $\rho\in\Omega^0(M,\aso(3))$ on $e\in\Omega^1(M,\FR^3)$, as a special instance of the general definition in \eqref{eq:lgtphi}; similarly the action of $\om\in\Omega^1(M,\aso(3))$ on $e\in \Omega^1(M,\FR^3)$ is $\om\wedge_\star e := \om\triangleright_\star e$, where here the star-exterior product includes matrix multiplication in the fundamental representation of $\aso(3)$. For a vector field $\xi\in\Gamma(TM)$, the star-Lie derivative $\LL_\xi^\star$ is defined in Section~\ref{sec:braideddiff}. The twisted product $\dwedge_\star$ is the tensor product of the star-exterior product of differential forms on $M$ with the undeformed exterior product of multivectors on the vector space $\FR^3$; in particular $e_1\dwedge_\star e_2 = \sfR_k(e_2)\dwedge_\star\sfR^k(e_1)$. Note that in the fourth and sixth brackets, for generic twists it is not possible to write the first slots entirely in terms of twisted products $\dwedge_\star$ or $\otimes_\star$, as the various terms do not commute appropriately; we will come back this point below.

In the following we use some calculational identities which are deformations of those used in~\cite{ECPLinfty}. One has
\begin{align*}
(\rho\star e)\dwedge_\star E & = -\rho\dwedge_\star \big(\sfR_k (E) \wedge_\star \sfR^k (e)\big) \ , \nn \\[4pt]
(\rho\star e)\dwedge_\star e\dwedge_\star e & = -\rho\dwedge_\star \big(\sfR_k (e\dwedge_\star e) \wedge_\star 
\sfR^k (e) \big) \ .
\end{align*}
In the classical case, the second line vanishes, but in the noncommutative theory 
\begin{align}\label{eq:eeenon0}
(e\dwedge_{\star} e )\wedge_\star e \neq 0 
\end{align}
because $1$-forms no longer strictly anticommute with each other.

\subsubsection*{Braided Lie algebra of gauge symmetries}
\label{sec:3dgauge}

From \eqref{eq:LgtL}, the left braided gauge transformation of a field $A=(e,\om)$ by a gauge parameter $\lambda=(\xi,\rho)$ is encoded by
\begin{align*}
\delta_{(\xi,\rho)}^{\star\lact} (e,\om) &= \ell_{1}^{\star}(\xi,\rho) + \ell_{2}^{\star}\big((\xi,\rho)\,,\,(e,\om)\big) = \big(-\rho\star e + \LL_\xi^\star e\,,\, \dd\rho-[\rho,\omega]_{\aso(3)}^\star+\LL_\xi^\star\om\big) \ .
\end{align*}
This combines the braided gauge transformations from Examples~\ref{ex:protoex} and~\ref{ex:protoLinftygauge}, for $\frg=\aso(3)$, $W=\FR^3$ and $p=1$, with the braided diffeomorphisms from Section~\ref{sec:braideddiff} which abide by the rules of braided gauge transformations spelled out in Section~\ref{sec:braidedgauge}.\footnote{Analogously to our discussion of left and right braided gauge transformations from Section~\ref{sec:braidedgauge}, there are correspondingly a left and a right star-Lie derivative. The star-Lie derivative appearing here acts from the left, but for simplicity we do not indicate this explicitly in the notation $\LL_\xi^\star$.}

According to \eqref{eq:braidedclosure}, these two individual braided Lie algebras of gauge symmetries combine into the single braided Lie algebra with bracket given by
\begin{align*}
-\ell_2^\star\big((\xi_1,\rho_1)\,,\,(\xi_2,\rho_2)\big) = \big(-[\xi_1,\xi_2]_\frv^\star\,,\,[\rho_1,\rho_2]_{\aso(3)}^\star + \LL_{\sfR_k(\xi_2)}^\star\sfR^k(\rho_1) - \LL_{\xi_1}^\star\rho_2\big) \ .
\end{align*}
We identify this combined gauge algebra as the braided semi-direct product 
\begin{align*}
\Gamma(TM)[[\hbar]]\ltimes_\star\Omega^0\big(M,\aso(3)\big)[[\hbar]]
\end{align*}
of the braided Lie algebra of vector fields from Section~\ref{sec:braideddiff}, and the braided Lie algebra from Examples~\ref{ex:protoex} and~\ref{ex:protoLinftygauge} with $\frg=\aso(3)$, $W=\FR^3$ and $p=1$. Thus our twisting of both the diffeomorphism and local rotational symmetries in a democratic manner preserves the semi-direct product structure of the classical Lie algebra \eqref{ClassicalSymmetries}, and simultaneously avoids the introduction of new degrees of freedom into the theory, in contrast to some previous approaches to noncommutative gravity (cf.\ Section~\ref{sec:braidedvsstargauge}).

\subsubsection*{Braided curvature and torsion}
\label{sec:3deom}

From \eqref{eq:braidedeom}, the field equations $F^\star_{(e,\om)}=(0,0)$ for the three-dimensional braided ECP theory are encoded through
\begin{align*}
F^\star_{(e,\om)} = \ell_1^\star(e,\om) - \tfrac12\,\ell_2^\star\big((e,\om)\,,\,(e,\om)\big) =: (F^\star_e,F^\star_\om) \ .
\end{align*}
The first entry is given by
\begin{align}\label{eq:Fe3d}
\begin{tabular}{|l|}\hline\\
$\displaystyle
F^\star_e = R^\star + \Lambda\,e\dwedge_\star e
$\\\\\hline\end{tabular}
\end{align}
where
\begin{align}\label{eq:curvature}
R^\star = \dd \om +\tfrac12\,[\om,\om]_{\aso(3)}^\star = \dd \om + \tfrac12\,\big(\om\wedge_\star\om + \sfR_k(\om)\wedge_\star\sfR^k(\om)\big)
\end{align}
is the braided curvature $2$-form of the spin connection $\om$, as an example of braided curvature from Section~\ref{sec:braidedkin}. The second entry is given by
\begin{align*}
F^\star_\om = \dd e + \tfrac12\,\om\wedge_\star e + \tfrac12\,\sfR_k(\om)\wedge_\star\sfR^k(e) \ .
\end{align*}
To understand the meaning of this expression, recall from Section~\ref{sec:braidedkin} that there are two separate definitions of braided covariant derivative, and hence there are correspondingly two separate consistent definitions of torsion: A \emph{left} torsion $2$-form
\begin{align}\label{eq:torsionleft}
T^\star_\lact := \dd_{\star\lact}^\om e = \dd e + \om\wedge_\star e \ ,
\end{align}
and a \emph{right} torsion $2$-form
\begin{align}\label{eq:torsionright}
T^\star_\ract := \dd_{\star\ract}^\om e = \dd e + \sfR_{k}(\om)\wedge_\star \sfR^k(e) \ .
\end{align}
Hence the corresponding field equation is given by the symmetrized torsion
\begin{align}\label{eq:Fom3d}
\begin{tabular}{|l|}\hline\\
$\displaystyle
F^\star_\om = \tfrac12\,(T^\star_\lact + T^\star_\ract)
$\\\\\hline\end{tabular}
\end{align}

According to \eqref{eq:leftcovcond}, the braided covariance of the field equations is encoded by
\begin{align}\label{eq:covell2}
\delta_{(\xi,\rho)}^{\star\lact}F^\star_{(e,\om)} = \big(\delta_{(\xi,\rho)}^{\star\lact}F^\star_e,\delta_{(\xi,\rho)}^{\star\lact}F^\star_\om\big) = \ell_2^\star\big((\xi,\rho)\,,\,(F^\star_e,F^\star_\om)\big) \ ,
\end{align}
where
\begin{align}\label{eq:covcondFeom}
\delta_{(\xi,\rho)}^{\star\lact}F^\star_e = -[\rho,F_e^\star]_{\aso(3)}^\star + \LL_\xi^\star F_e^\star \qquad \mbox{and} \qquad \delta_{(\xi,\rho)}^{\star\lact}F^\star_\om =-\rho\star F_\om^\star + \LL_\xi^\star F_\om^\star \ .
\end{align}
This is in complete harmony with the discussion from Section~\ref{sec:braidedkin}, which shows that the braided curvature is covariant, and that the left and right torsions are both covariant with respect to left braided gauge transformations. 

Thus the geometrical meaning of the braided field equations is just a deformation of that from the classical case. The equation $F_\om^\star=0$ is simply the balanced torsion-free condition\footnote{This improves the interpretation of the torsion equation from~\cite{Ciric:2020eab}, where only the left torsion $2$-form was considered.} $T^\star_\lact+T^\star_\ract=0$, and it would be interesting to see if this can be solved uniquely and algebraically for the spin connection $\om$ in terms of the coframe $e$, as in the classical case where it would yield the Levi--Civita connection. Conditions for retrieving unique connections in similar noncommutative geometry settings have been suggested before (see e.g.~\cite{Aschieri:2020ifa,BeggsMajidbook}), and it would be interesting to compare these conditions with our balanced torsion-free condition. The equation $F_e^\star=0$ is the braided analog of the classical Einstein equation with cosmological constant in our formalism (before inserting the Levi--Civita connection). We stress that our field equations are different from those previously suggested for the first order formulation of noncommutative gravity~\cite{AschCast}, already at the kinematic level: in our approach the internal degrees of freedom are braided and hence there are no new dynamical fields introduced from having to extend the underlying Lie algebra $\aso(3)$. 

Our remarks from Section~\ref{sec:braidedEOM} concerning the space of classical solutions of course also apply here. In particular, a braided diffeomorphism $\delta^{\star\lact}_{(\xi,0)}(e,\om)$ can map a classical solution into a field configuration which no longer solves the field equations; for example, for $\Lambda=0$, a {braided} flat noncommutative spacetime can be braided diffeomorphic to a {braided} curved noncommutative spacetime. This peculiar feature appears not only in our formalism, but also in the many earlier treatments of noncommutative gravity which used an underlying symmetry principle based on twisted diffeomorphisms (cf. Section~\ref{sec:NCGR}), which was discussed in Section~\ref{sec:braideddiff}. As far as we are aware, this salient point has not been mentioned or appreciated before. According to the general framework of braided field theory that we have developed in the present paper, the meaning of the local symmetries in noncommutative gravity are captured by braided Noether identities.

\subsubsection*{Braided Noether identities}
\label{sec:3dNoether}

By \eqref{braidedNoether}, the gauge redundancies of braided gravity in three dimensions are encoded by the braided Noether identities
\begin{align}\label{eq:braidedNoether3d}
\dsf_{(e,\om)}^\star F^\star_{(e,\om)} &= \ell_1^\star\big(F_{e}^\star,F_\om^\star\big) - \frac12\,\Big(\ell_2^\star\big((e,\om)\,,\,(F_e^\star,F_\om^\star)\big) - \ell_2^\star\big((F_e^\star,F_\om^\star)\,,\,(e,\om)\big)\Big) \\ & \hspace{6cm} + \frac14\,\ell_2^\star\big(\sfR_k(e,\om) \, , \, \ell_2^\star(\sfR^k(e,\om),(e,\om))\big) \ = \ (0,0) \ , \nn
\end{align}
where $\sfR_k(e,\om):=\big(\sfR_k(e),\sfR_k(\om)\big)$, and we used the slightly simpler form \eqref{eq:leftNoetherprelim}.
They lead to a pair of differential identities among the field equations \eqref{eq:Fe3d} and \eqref{eq:Fom3d}.

The Noether identity corresponding to braided local $\aso(3)$ rotations is given by the second slot of \eqref{eq:braidedNoether3d}. Compared to the classical case~\cite{ECPLinfty}, it now acquires a contribution from the cosmological constant due to the braiding, in addition to the usual inhomogeneous terms:
\begin{align}\label{eq:Noetherso3}
\tfrac12\,\big(\dd_{\star\lact}^\om F^\star_{\om} + \dd_{\star\ract}^\om & F_\om^\star- \sfR_k( F^\star_{e}) \wedge_{\star} \sfR^k(e) - F^\star_{e} \wedge_{\star} e \big) \nn \\
&+\tfrac{1}{4}\, \big( -[\om,\sfR_{k}(\om)]_{\aso(3)}^{\star} \wedge_\star
  \sfR^k( e) +\sfR_k( \om) \wedge_\star \sfR^k(\om)
  \wedge_\star e + \sfR_k (\om \wedge_\star \om) \wedge_\star
  \sfR^k( e) \nn \\
&\hspace{2cm} -2\,\Lambda\, (e\dwedge_{\star} e )\wedge_\star e -2\,\Lambda\, \sfR_k(e\dwedge_{\star} e 
)\wedge_\star \sfR^k( e)\big) \ =  \ 0 \ .
\end{align}
The last two terms here vanish in the classical case, but they are generally non-vanishing in the noncommutative case due to \eqref{eq:eeenon0}. This identity can also be derived directly from the braided deformations of the first Bianchi identity:
\begin{align*}
\dd_{\star\lact}^\om T_\lact^\star &= R^\star\wedge_\star e + \tfrac12\,\omega\wedge_\star\omega\wedge_\star e - \tfrac12\,\sfR_k(\om)\wedge_\star\sfR^k(\om)\wedge_\star e \ , \\[4pt]
\dd_{\star\ract}^\om T_\ract^\star &= \sfR_k(R^\star)\wedge_\star\sfR^k(e) - \tfrac12\,\sfR_k(\om)\wedge_\star\sfR_l(\om)\wedge_\star\sfR^l\sfR^k(e) + \tfrac12\,\sfR_m\sfR_l(\omega)\wedge_\star\sfR^m\sfR_k(\om)\wedge_\star\sfR^l\sfR^k(e) \ ,
\end{align*}
which follow easily by taking the exterior derivatives of \eqref{eq:torsionleft} and \eqref{eq:torsionright}, respectively.
In the classical case, the last two terms on the right-hand sides of both equations cancel and we recover the expected identity $\dd^\om T=R\wedge e$, which is equivalent to \eqref{eq:Noetherso3} in that case.

The Noether identity corresponding to braided diffeomorphisms is given by the first slot of \eqref{eq:braidedNoether3d}. Given the form of the classical map $\ell_2:V_1\otimes V_2\to V_3$ from \eqref{eq:3dbrackets}, we have to use the defining expression $\ell^\star_{2}\big((e,
\om), (\CF_{e},\CF_\om)\big) := \ell_{2}\big( \bar{\sff}^k(e,\om), \bar{\sff}_k(\CF_{e},\CF_{\om})\big)$ directly in terms of the undeformed classical bracket. Expanding \eqref{eq:braidedNoether3d} using classical antisymmetry of brackets in the third line, we find
\begin{align}\label{eq:diffNoether3d}
\dd x^\mu \otimes \Tr\big(N^\star_\mu(F_e^\star,F_\om^\star,e,\om)\big) \ = \ 0 \ ,
\end{align}
where the $3$-form $N^\star_\mu(F_e^\star,F_\om^\star,e,\om)\in\Omega^3(M,\midwedge^3\FR^3)[[\hbar]]$ is given by 
\begin{align*}
N^\star_\mu &= \frac12\,\big(\iota_\mu  \dd \,\bar{\sff}^k (e) \dwedge \bar{\sff}_k (F^\star_{e} )+ \iota_\mu\dd \,\bar{\sff}^k (\om )\dwedge \bar{\sff}_k(F^\star_{\om}) - \iota_\mu\bar{\sff}^k(e)\dwedge \dd \,\bar{\sff}_k (F^\star_{e}) - \iota_\mu \bar{\sff}^k (\om) \dwedge \dd \,\bar{\sff}_k (F^\star_{\om}) \nn \\
& \hspace{1cm} + \iota_\mu  \dd \,\bar{\sff}_k (e) \dwedge \bar{\sff}^k (F^\star_{e}) + \iota_\mu\dd \,\bar{\sff}_k (\om )\dwedge \bar{\sff}^k(F^\star_{\om}) - \iota_\mu\bar{\sff}_k (e)\dwedge \dd \,\bar{\sff}^k (F^\star_{e}) - \iota_\mu \bar{\sff}_k (\om) \dwedge \dd \,\bar{\sff}^k( F^\star_{\om}) \big) \\
& \quad \, - \frac14\,\Big( \iota_\mu \dd \,\bar{\sff}^k \sfR_l (e)\dwedge \bar{\sff}_k\big([\sfR^l (\om),\om]_{\aso(3)}^\star + 2\,\Lambda\,\sfR^l(e)\dwedge_\star e\big) \\
& \hspace{2.5cm} + \iota_\mu \dd \,\bar{\sff}^k \sfR_l (\om)\dwedge \bar{\sff}_k \big(\sfR^l(\om)\wedge_\star e - \sfR_m (\om) \wedge_\star \sfR^m \sfR^l (e) \big)\nn \\
& \hspace{4cm} -\iota_\mu \bar{\sff}^k \sfR_l (e)\dwedge \dd \,\bar{\sff}_k\big([\sfR^l (\om),\om]_{\aso(3)}^\star + 2\,\Lambda\,\sfR^l(e)\dwedge_\star e\big) \\\
& \hspace{5.5cm} -\iota_\mu \bar{\sff}^k \sfR_l (\om)\dwedge \dd \,\bar{\sff}_k \big(\sfR^l(\om)\wedge_\star e -  \sfR_m (\om) \wedge_\star \sfR^m \sfR^l (e) \big) \Big) \ .
\end{align*}
This expression does not immediately involve the star-contraction defined by $\iota_{\xi}^{\star} \alpha = \iota_{\bar{\sff}^{k}(\xi)} \bar{\sff}_{k}(\alpha)$ for $\xi\in\Gamma(TM)$ and $\alpha\in \Omega^{\bullet}(M)$. Nevertheless, the expression \eqref{eq:diffNoether3d} is globally well-defined on $M$ as an element of $\Omega^1(M)\otimes \Omega^3(M)[[\hbar]]$. In some special cases where the legs of the twist $\CF$ commute with the local basis vector fields,  we may write the braided diffeomorphism Noether identity in a slightly more compact form; for example, this holds for abelian twists locally around every point of $M$ outside a set of measure zero~\cite{Aschieri:2009qh,SchenkelThesis}.

\begin{example}\label{ex:MoyalNoether3d}
Consider the Moyal--Weyl twist $\CF_\theta = \sff^k_\theta\otimes\sff_{\theta k}$ on $M=\FR^3$ from Examples~\ref{ex:MoyalWeyltwist} and~\ref{ex:MoyalWeylstar}. Then $\iota_\mu \circ \bar{\sff}_\theta^k = \bar{\sff}_\theta^k \circ \iota_\mu$ since $\iota_\mu\circ \LL_{\partial_\nu}= \LL_{\partial_\nu}\circ\iota_\mu + \iota_{[\partial_\mu,\partial_\nu]_\frv}= \LL_{\partial_\nu}\circ\iota_\mu$ by the Cartan structure equations, and $\iota_\mu^{\star_\theta}=\iota_\mu$ since $\LL_{\partial_\nu}\partial_\mu=[\partial_\nu,\partial_\mu]_\frv=0$ in a holonomic basis. In this case we have
\begin{align*}
\iota_\mu  \dd \,\bar{\sff}_\theta^k( e) \dwedge \bar{\sff}_{\theta k} (F^{\star_\theta}_{e}) = \iota^{\star_\theta}_{\mu}\dd e \dwedge_{\star_\theta} F^{\star_\theta}_{e}
\end{align*}
and
\begin{align*}
\iota_\mu^{\star_\theta} \dd \,\bar{\sff}_{\theta k} (e) \dwedge \bar{\sff}_\theta^k (F^{\star_\theta}_{e}) = \bar{\sff}_\theta^k (F^{\star_\theta}_{e}) \dwedge \bar{\sff}_{\theta k} (\iota_\mu^{\star_\theta} \dd e )= F^{\star_\theta}_{e}\dwedge_{\star_\theta} \iota^{\star_\theta}_\mu \dd e \ ,
\end{align*}
with analogous expressions for the rest of the terms. Since also $\LL_{\partial_\nu}\dd x^\mu=0$, it follows that the expression for the braided diffeomorphism Noether identity \eqref{eq:diffNoether3d} simplifies slightly to an expression in $\Omega^1(\FR^3)\otimes_{\star_\theta}\Omega^3(\FR^3)[[\hbar]]$ given by
\begin{align}\label{eq:MoyalNoether3d}
\dd x^\mu\otimes_{\star_\theta}\Tr\big(N_\mu^{\star_\theta}(F_e^\star,F_\om^\star,e,\om)\big) = 0
\end{align}
with
\begin{align*}
N_\mu^{\star_\theta} &= \frac12\, \big( \iota^{\star_\theta}_\mu  \dd  e \dwedge_{\star_\theta} F^{\star_\theta}_{e} + \iota^{\star_\theta}_\mu\dd \om \dwedge_{\star_\theta} F^{\star_\theta}_{\om} - \iota^{\star_\theta}_\mu e\dwedge_{\star_\theta} \dd F^{\star_\theta}_{e} - \iota^{\star_\theta}_\mu \om \dwedge_{\star_\theta} \dd F^{\star_\theta}_{\om} \nn \\
	& \hspace{1cm} + F^{\star_\theta}_{e}\dwedge_{\star_\theta} \iota^{\star_\theta}_\mu  \dd e + F^{\star_\theta}_{\om} \dwedge_{\star_\theta} \iota^{\star_\theta}_\mu\dd \om  - \dd F^{\star_\theta}_{e} \dwedge_{\star_\theta} \iota^{\star_\theta}_\mu e  - \dd F^{\star_\theta}_{\om} \dwedge_{\star_\theta} \iota^{\star_\theta}_\mu\om \big) \\
& \quad \, -\frac14 \, \Big( \iota^{\star_\theta}_\mu \dd \sfR_l( e)\dwedge_{\star_\theta} [\sfR^l (\om),\om]_{\aso(3)}^{\star_\theta} + 2\,\Lambda\,e\dwedge_{\star_\theta}\iota_\mu^{\star_\theta}\dd e\dwedge_{\star_\theta}e \\
& \hspace{2cm} + \iota^{\star_\theta}_\mu \dd \sfR_l( \om)\dwedge_{\star_\theta} \big(\sfR^l(\om)\wedge_{\star_\theta} e - \sfR_k (\om) \wedge_{\star_\theta} \sfR^k \sfR^l (e) \big)\nn \\
	& \hspace{4cm} -\iota^{\star_\theta}_\mu \sfR_l (e)\dwedge_{\star_\theta} \dd [\sfR^l (\om),\om]_{\aso(3)}^{\star_\theta} + 2\,\Lambda\,e\dwedge_{\star_\theta}\iota_\mu^{\star_\theta}\dd e\dwedge_{\star_\theta}e \\
	& \hspace{6cm} -\iota^{\star_\theta}_\mu \sfR_l (\om)\dwedge_{\star_\theta} \dd  \big(\sfR^l(\om)\wedge_{\star_\theta} e -  \sfR_k (\om) \wedge_{\star_\theta} \sfR^k \sfR^l (e) \big) \Big) \ .
\end{align*}
\end{example}

\subsubsection*{Action functional}
\label{sec:3daction}

The variational principle for three-dimensional braided gravity follows, similarly to the braided Chern--Simons theory of Section~\ref{sec:braidedCStheory}, from choosing a Hermitian Drinfel'd twist $\CF\in U\frv[[\hbar]\otimes U\frv[[\hbar]]$ which is compatible with the cyclic structure of the classical ECP theory defined by \eqref{eq:ECPpairing} and~\eqref{eq:ECPpairing2}. Using Proposition~\ref{prop:compatiblestrict}, we can then turn $(V[[\hbar]],\ell_1^\star,\ell_2^\star)$ into a strictly cyclic braided $L_\infty$-algebra with cyclic pairing defined by
\begin{align}\label{eq:braidedpairing1}
\langle (e,\om) \,,\, (E,{\mit\Omega}) \rangle_{\star}:=
\int_{M}\, \Tr \big(e\dwedge_\star E+ {\mit\Omega} \dwedge_\star \om \big) \ ,
\end{align}
on $V_{1}\otimes V_{2}$ and
 \begin{align}\label{eq:braidedpairing2}
 \langle(\xi,\rho)\,,\,({\CX},{\CP})\rangle_{\star} := \int_M\,
 \iota_{\xi}^{
\star}{\CX} + \int_M\, \Tr\big(\rho\dwedge_\star{\CP}\big) \ ,
 \end{align}
 on $V_{0}\otimes V_{3}$.
 
According to \eqref{eq:braidedaction}, the action functional is then given by 
 \begin{align*}
 S_{\star}(e,\om)&= \tfrac{1}{2}\, \big\langle(e,\om)\,,\, \ell_{1}^{\star}(e,\om) \big\rangle_\star -\tfrac{1}{6}\,\big\langle(e,\om)\,,\,\ell_{2}^{\star}\big((e,\om)\,,\,(e,\om)\big)\big\rangle_\star \\[4pt]
 &= \int_{M}\,\Tr\Big(e \dwedge_{\star} R^\star + \frac\Lambda3\, e\dwedge_\star e\dwedge_\star e\Big) = \int_M\, \epsi_{abc}\, \Big( e^{a}\wedge_{\star} R^{\star bc} + 
\frac{\Lambda}{3}\,e^{a}\wedge_{\star}e^{b}\wedge_{\star}e^{c} \Big) \ .
 \end{align*}
In the second line we used invariance of a top-vector under the left and right braided actions by $\aso(3)$-valued $1$-forms, analogously to the classical case~\cite{ECPLinfty} order by order in the formal power series expansion of the twist in $\hbar$; for example
\begin{align*}
0=(\om\dwedge_\star e)\,_{\star\!\!}\triangleleft \om = -\om \dwedge_\star \big(\sfR_k( \om) \wedge_{\star} \sfR^k(e)\big) + [\sfR_k\sfR_l( \om), \sfR^k(\om)]_{\aso(3)}^\star \dwedge_{\star} \sfR^l(e) \ ,
\end{align*}
and so on. This is just the expected deformation of the classical action functional \eqref{eq:ECP3daction}; it is the unique action functional constructed from the braided pairing, and the corresponding polynomial orders in the coframe field and the curvature, with the correct classical form when $\RR=1\otimes1$.

Using the $\RR$-matrix identities \eqref{eq:Rmatrixidswn}, the cyclicity of integration \eqref{eq:intcyclic} with respect to the star-product $\wedge_\star$, and the invariance of a top-vector under braided $\aso(3)$ rotations, one can directly check that varying this action functional with respect to $e$ gives the field equation $F_e^\star=0$ defined in \eqref{eq:Fe3d}, while varying with respect to $\om$ gives the field equation $F_\om^\star=0$ defined in \eqref{eq:Fom3d}, as was guaranteed by the general theory of Section~\ref{sec:braidedaction}. Also in agreement with the general theory, one checks invariance of the action functional \smash{$\delta_{(0,\rho)}^{\star\lact}S_\star(e,\om)=0$} under braided $\aso(3)$ rotations using braided covariance \smash{$\delta_\rho^{\star\lact}R^\star = -[\rho,R^\star]_{\aso(3)}^\star$} and $\aso(3)$ top-vector invariance such as
\begin{align*}
\rho\star(e\dwedge_\star R^\star) = \bar\sff^k(\rho)\cdot\bar\sff_k(e\dwedge_\star R^\star)=0 \ ,
\end{align*}
as well as invariance \smash{$\delta_{(\xi,0)}^{\star\lact}S_\star(e,\om)=0$} under braided diffeomorphisms using the classical Cartan formula for the Lie derivative and Stokes' theorem order by order in $\hbar$. These two braided gauge invariances further imply respectively the braided Noether identities \eqref{eq:Noetherso3} along the lines discussed in Section~\ref{sec:braidedaction}, though again the most effective and concise way to do this is through the braided $L_\infty$-algebra framework of this section.

\subsection{Braided gravity in four dimensions}
\label{sec:NCgravity4D}

Both the braided Chern--Simons theory from Section~\ref{sec:braidedCStheory} and the three-dimensional braided ECP theory from Section~\ref{sec:NCgravity3D} are based on differential (graded) braided Lie algebras, while their field equations and variational principle could have been more or less guessed from the outset by simply replacing ordinary products of fields with star-products. This does not preclude, however, the different kinematical sector of these theories due to the braided Lie brackets, as well as the novel general nature of classical solutions in braided field theory and the corresponding modified forms of gauge redundancies through the braided Noether identities. We will now discuss a highly non-trivial example which is based on an $L_\infty$-algebra which is not simply a differential graded Lie algebra. It does not follow the naive deformations of the three-dimensional theories studied thus far, and it illustrates the novelty of our approach in generating new noncommutative field theories with interesting symmetry structures. Indeed, we shall now derive a new noncommutative theory of gravity in four dimensions by applying the braided $L_\infty$-algebra framework to the classical four-dimensional ECP theory.

\subsubsection*{Braided $\boldsymbol{L_\infty}$-algebra}

Let $M$ be an oriented parallelizable four-manifold. The $4$-term $L_\infty$-algebra underlying the classical Einstein--Cartan--Palatini formulation of general relativity on $M$ with cosmological constant $\Lambda\in\FR$ was developed in detail by~\cite{ECPLinfty} for an arbitrary signature pseudo-Riemannian structure on $M$; for definiteness, we focus on the case of Euclidean signature, but everything we do holds identically for arbitrary signature $(p,q)$ (with $p+q=4$): we simply replace the Lie algebra $\mathfrak{so}(4)$ and its fundamental representation $\FR^4$ by their indefinite versions $\mathfrak{so}(p,q)$ and $\FR^{p,q}$. As in the three-dimensional case, this defines an $L_\infty$-algebra in the category $\CCM^\sharp$ of $\RZ$-graded $U\frv$-modules. Then the underlying graded vector space is given by
\begin{align*}
V:= V_{0} \oplus V_{1} \oplus V_{2} \oplus V_3
\end{align*}
where 
\begin{align*} 
V_{0}&=\Gamma(TM)\times \Omega^{0}\big(M,\mathfrak{so}(4)\big) \ ,
       \nn \\[4pt] 
V_{1}&= \Omega^{1}(M,\FR^{4}) \times
       \Omega^{1}\big(M,\mathfrak{so}(4) \big) \ , \\[4pt]
V_{2}&=\Omega^{3}\big(M,\midwedge^{3}\FR^{4}\big) \times
       \Omega^{3}\big(M,\midwedge^{2}\FR^{4}\big) \ , \\[4pt]
V_3&=\Omega^1\big(M,\Omega^4(M)\big) \times \Omega^4\big(M,\midwedge^2\FR^{4}\big)
     \ .
\nn
\end{align*} 
We use the same conventions and notation for fields as we did in the case of three-dimensional gravity from Sections~\ref{sec:ECP3d} and~\ref{sec:NCgravity3D}.
Following the prescription of Section~\ref{sec:Linftytwist}, we twist deform the classical brackets written down in~\cite[Section~8.2]{ECPLinfty}, which leads to a $4$-term braided $L_\infty$-algebra $(V[[\hbar]],\ell_1^\star,\ell_2^\star,\ell_3^\star)$ with the following non-vanishing braided brackets.

The differential $\ell_1^\star$ is given by
\begin{align}\label{eq:ell14d}
\ell_{1}^\star(\xi,\rho)=(0,\dd\rho) \ , \quad
	\ell_{1}^\star(e,\omega)=(0,0) \qquad \mbox{and} \qquad \ell_{1}^\star(E,{\mit\Omega})=(0,-\dd {\mit\Omega}) \ .
\end{align}
The $2$-bracket $\ell_2^\star$ is defined by
\begin{align}\label{eq:ell24d}
\ell_{2}^\star\big((\xi_{1},\rho_{1})\,,\,(\xi_{2},\rho_{2})\big)&=
	\big([\xi_{1},\xi_{2}]_\frv^\star\,,\,-[\rho_{1},\rho_{2}]_{\aso(4)}^\star+\LL_{\xi_{1}}^{\star}\rho_{2}
	- \LL_{\sfR_k(\xi_{2})}^{\star}\sfR^k(\rho_{1}) \big) \ , \nn \\[4pt]
\ell^\star_{2}\big((\xi,\rho)\,,\,(e,\omega)\big)&=\big(-\rho \star e
	+\LL_{\xi}^{\star}e\,,\, -[\rho,\omega]_{\aso(4)}^\star+\LL_{\xi}^{\star}\om\big) \ , \nn \\[4pt]
	\ell_{2}^\star\big((\xi,\rho)\,,\,(E,{\mit\Omega}) \big)&=\big(-
	\rho \star E+\LL_{\xi}^{\star}E \,,\, -[\rho
	, {\mit\Omega}]_{\aso(4)}^\star+\LL_{\xi}^{\star}{\mit\Omega}\big) \ , \nn \\[4pt]
\ell_{2}^\star\big((\xi,\rho)\,,\,({\CX},{\CP})\big)&=
	\Big(\dd x^\mu\otimes\Tr\big(\iota_\mu \dd \,\bar{\sff}^k(\rho) \dwedge
	\bar{\sff}_k({\CP})\big) +\LL_{\xi}^{\star}{\CX}\,,\,-[\rho
	, {\CP}]_{\aso(4)}^{\star} +\LL_{\xi}^{\star} {\CP}\Big) \ , \\[4pt]
\ell_{2}^\star\big((e_{1},\omega_{1})\,,\,(e_{2},\omega_{2})\big)&=-\big(e_{1} \dwedge_\star \dd \omega_{2}
	+ \sfR_k (e_{2}) \dwedge_\star \dd \sfR^k (\omega_{1}) \,,\, e_{1}
	\dwedge_\star \dd e_{2} +\sfR_k (e_{2}) \dwedge_\star \dd \sfR^k (e_{1})\big) \ , \nn \\[4pt]
\ell_{2}^\star\big((e,\om)\,,\,(E,{\mit\Omega})\big)&=\Big(\dd x^\mu\otimes\Tr \big( \iota_\mu \dd\,\bar{\sff}^k( e) \dwedge \bar{\sff}_k (E) - \iota_\mu \dd\,\bar{\sff}^k( \om) \dwedge \bar{\sff}_k ({\mit\Omega}) \nn \\
& \hspace{4cm}  -\iota_\mu \bar{\sff}^k (e) \dwedge \dd\, \bar{\sff}_k( E) + \iota_\mu \bar{\sff}^k(\om) \dwedge \dd\, \bar{\sff}_k ({\mit\Omega}) \big) \ , \nn \\ 
	& \hspace{6cm} \tfrac32 \, \sfR_k(E) \wedge_\star \sfR^k(e) + [\omega, {\mit\Omega}]_{\aso(4)}^\star\Big) \ , \nn
\end{align}
with all other $2$-brackets equal to zero. Finally, there is a single non-zero $3$-bracket $\ell_3^\star$ defined on fields by
\begin{align}\label{eq:ell34d}
	& \ell_{3}^\star\big((e_{1},\omega_{1})\,,\,(e_{2},\omega_{2})\,,\,(e_{3},\omega_{3})
	\big) \nn \\[4pt]
	& \hspace{1cm} =-\Big( e_{1}\dwedge_\star [\omega_{2} , \omega_{3}]_{\aso(4)}^\star + \sfR_k (e_{2}) \dwedge_\star [\sfR^k (\omega_{1}) , \omega_{3}]_{\aso(4)}^\star + \sfR_k(  e_{3}) \dwedge_\star \sfR^k ([\omega_{1} , \omega_{2}]_{\aso(4)}^\star) \nn \\
& \hspace{12cm} +  6\, \Lambda\, e_{1}\dwedge_\star e_{2}\dwedge_\star
	e_{3}  \ , \nn \\
	& \hspace{2.5cm} e_{1} \dwedge_\star (\omega_{2} \wedge_\star e_{3}) 
	+e_{1} \dwedge_\star \big(\sfR_k( \omega_{3}) \wedge_\star \sfR^k( e_{2})\big) \\
& \hspace{3.5cm} +  \sfR_k (e_{2})\dwedge_\star \big( \sfR^k(\omega_{1}) \wedge_\star e_{3}\big) 
	+\sfR_k( e_{2}) \dwedge_\star \big(\sfR_l(\omega_{3}) \wedge_\star \sfR^l \sfR^k( e_{1})\big)  
	\nn \\ & \hspace{4.5cm} + \sfR_k (e_{3})\dwedge_\star \sfR^k (\omega_{1} \wedge_\star e_{2}) + \sfR_{k}\sfR_l( e_{3}) \dwedge_\star \big(\sfR_m \sfR^l(\omega_{2} ) \wedge_\star \sfR^m \sfR^k( e_{1})\big)  \Big) \ . \nn
\end{align} 
Note that the differential of a generic field $A=(e,\om)\in V_1$ vanishes, which will lead to some simplifications in the calculations involving the braided $L_\infty$-algebra brackets below.

The derivation of the braided gauge symmetries of noncommutative gravity in four dimensions is identical to that in three dimensions which was presented in Section~\ref{sec:NCgravity3D}, and so will not be repeated here; the only change is that the Lie algebra of rotations $\aso(3)$ is replaced with its counterpart $\aso(4)$ appropriate to four dimensions. We shall therefore dive straight away into a description of the novel dynamics of the braided gravity theory, where we will uncover many unexpected and surprising features.

\subsubsection*{Field equations}

By \eqref{eq:braidedeom}, the field equations $F_{(e,\om)}^\star=(0,0)$ of the braided ECP theory in four dimensions are given by
\begin{align*}
F^\star_{(e,\om)} = -\tfrac12\,\,\ell_2^\star\big((e,\om)\,,\,(e,\om)\big) - \tfrac16\,\ell_3^\star\big((e,\om)\,,\,(e,\om)\,,\,(e,\om)\big) =: (F_e^\star,F_\om^\star) \ .
\end{align*}
Expanding the brackets, the components are given by
\begin{align}\label{eq:Feell4d}
F^\star_{e} &= \tfrac{1}{2}\,\big( e \dwedge_\star \dd \om + \sfR_k( e)  \dwedge_\star \dd \sfR^k (\om) \big) + \Lambda\,e\dwedge_\star e\dwedge_\star e \nn \\
& \quad \, +\tfrac{1}{6}\, \big(e \dwedge_\star [\om,\om]_{\aso(4)}^\star +\sfR_k( e) \dwedge_\star [\sfR^k( \om),\om]_{\aso(4)}^\star + \sfR_k( e) \dwedge_\star \sfR^k( [\om,\om]_{\aso(4)}^\star)\big) \ ,
\end{align}
and
\begin{align}\label{eq:Fomell4d}
F^\star_{\om}&= \frac{1}{2}\,\big( e \dwedge_\star \dd e + \sfR_k( e) \dwedge_\star \dd \sfR^k( e) \big) \nn \\
& \quad \, +\frac{1}{6}\,\Big(e\dwedge_\star(\om \wedge_\star e) +  e \dwedge_\star \big(\sfR_k( \om) \wedge_\star \sfR^k( e)\big)+ \sfR_k( e) \dwedge_\star \big(\sfR^k( \om) \wedge_\star e\big) \nn \\
& \hspace{3cm} + \sfR_k( e) \dwedge_\star\big( \sfR_l( \om) \wedge_\star \sfR^l \sfR^k( e)\big) + \sfR_k( e) \dwedge_\star \sfR^k (\om \wedge_\star e) \nn \\
& \hspace{6cm} + \sfR_{k}\sfR_l( e) \dwedge_\star \big(\sfR_m \sfR^l(\omega) \wedge_\star \sfR^m \sfR^k( e)\big) \Big) \ .
\end{align}
According to \eqref{eq:leftcovcond}, the field equations transform under braided gauge transformations as in \eqref{eq:covell2}, with
\begin{align}\label{eq:covcondFeom4d}
\delta_{(\xi,\rho)}^{\star\lact}F^\star_e = -\rho\triangleright_\star F_e^\star + \LL_\xi^\star F_e^\star \qquad \mbox{and} \qquad \delta_{(\xi,\rho)}^{\star\lact}F^\star_\om =-\rho\triangleright_\star F_\om^\star + \LL_\xi^\star F_\om^\star \ .
\end{align}
The goal is now to simplify these field equations using various $\RR$-matrix identities to make the terms appear in manifestly covariant forms, as would be naively expected from the covariance property that is guaranteed by the general theory of Section~\ref{sec:braidedEOM}. This will introduce the curvature $2$-form $R^\star$ defined analogously to \eqref{eq:curvature}, as well as the left and right torsion $2$-forms $T_\lact^\star$ and $T_\ract^\star$ from \eqref{eq:torsionleft} and \eqref{eq:torsionright}. 

\subsubsection*{Braided torsion condition}

Let us start with the second component $F_\om^\star$. We will show that
\begin{align}\label{eq:Fom4d}
\begin{tabular}{|l|}\hline\\
$\displaystyle
F^\star_\om = \tfrac{1}{6}\,\big( e\dwedge_\star T_\lact^\star - T_{\ract}^\star \dwedge_\star e - \dd^{\om}_{\star\lact}(e\dwedge_\star e) - \dd^{\om}_{\star\ract}(e\dwedge_\star e) \big)
$\\\\\hline\end{tabular}
\end{align}
This expression is manifestly braided covariant, and it reduces exactly (with the correct numerical factor) to the expected torsion form $F_\om=e\dwedge T$ in the classical case.

For this, we use the Yang--Baxter equation \eqref{eq:YangBaxter} for the $\RR$-matrix to write the last term in \eqref{eq:Fomell4d} as
\begin{align*}
\sfR_m\sfR_l(e) \dwedge_\star\big( \sfR^m \sfR_k( \om) \wedge_\star \sfR^l \sfR^k(e)\big) =\sfR_m( e) \dwedge_\star \sfR^m \big(\sfR_k( \om) \wedge_\star \sfR^k( e)\big) \ .
\end{align*}
Using then braided antisymmetry of the $\dwedge_\star$-product on the latter, second and seventh terms, we get
\begin{align}\label{eq:Fom4dprelim}
F^\star_{\om}= \frac{1}{6}\,\Big(&3\, e\dwedge_\star \dd e - 3\, \dd e \dwedge_\star e + e\dwedge_\star (\om \wedge_\star e) + e \dwedge_\star\big( \sfR_k( \om) \wedge_\star \sfR^k(  e)\big) + \sfR_k(e) \dwedge_\star \big(\sfR^k( \om) \wedge_\star e\big)  \nn \\
	&+\sfR_k( e)\dwedge_\star \big(\sfR_l( \om ) \wedge_\star \sfR^l \sfR^k( e)\big) - (\om \wedge_\star e) \dwedge_\star e -\big( \sfR_k( \om) \wedge_\star \sfR^k( e)\big) \dwedge_\star e \Big) \ .
\end{align}
Now notice that
\begin{align}\label{eq:Fom4did1}
\sfR_k( e) \dwedge_\star (\sfR^k( \om) \wedge_\star e)= - \om \triangleright_\star (e\dwedge_\star e) + (\om \wedge_\star e) \dwedge_\star e
\end{align}
and 
\begin{align}\label{eq:Fom4did2}
\sfR_k( e)\dwedge_\star \big(\sfR_l( \om) \wedge_\star \sfR^l \sfR^k( e)\big) &= - \sfR_k( e)\dwedge_\star \big(\sfR^k( e)\,_{\star\!\!}\triangleleft \om\big) \nn \\[4pt]
& = -\big(\sfR_k( e) \dwedge_\star \sfR^k( e)\big)\,_{\star\!\!}\triangleleft \om - \big(\sfR_k( e)\,_{\star\!\!}\triangleleft \sfR_l (\om)\big) \dwedge_\star \sfR^l \sfR^k( e) \nn \\[4pt]
&=-(e\dwedge_\star e)\,_{\star\!\!}\triangleleft \om - \sfR_k (e\,_{\star\!\!}\triangleleft \om) \dwedge_\star \sfR^k( e) \nn \\[4pt] 
&= -(e\dwedge_\star e)\,_{\star\!\!}\triangleleft\om +  e\dwedge_\star (e\,_{\star\!\!}\triangleleft \om) \ . 
\end{align}
Substituting both \eqref{eq:Fom4did1} and \eqref{eq:Fom4did2} in \eqref{eq:Fom4dprelim}, and recalling that
\begin{align*}
e\,_{\star\!\!}\triangleleft \om := - \sfR_k( \om) \triangleright_\star \sfR^k( e) = -\sfR_k( \om )\wedge_\star \sfR^k(e) \ ,
\end{align*} 
after cancelling terms we find
\begin{align*}
F^\star_{\om} &= \frac{1}{6}\,\Big( 3 \, e\dwedge_\star \dd e - 3 \, \dd e\dwedge_\star e + e\dwedge_\star( \om \wedge_\star e) - \om \triangleright_\star (e\dwedge_\star e ) - (e\dwedge_\star e)\,_{\star\!\!}\triangleleft \om \\
& \quad \, \hspace{1cm} - \big(\sfR_k( \om) \wedge_\star \sfR^k( e)\big) \dwedge_\star e \Big) \nn \\[4pt]
&= \frac{1}{6}\,\Big( e \dwedge_\star \dd e + e\dwedge_\star (\om \wedge_\star e) - \dd e \dwedge_\star e - \big(\sfR_k (\om) \wedge_\star \sfR^k( e) \big) \dwedge_\star e  - \dd(e\dwedge_\star e) - \om \triangleright_\star (e\dwedge_\star e) \\
& \quad \, \hspace{1cm} - \dd(e\dwedge_\star e) - (e\dwedge_\star e)\,_{\star\!\!}\triangleleft \om \Big ) \ .
\end{align*}
After identifying left and right braided covariant derivatives, we finally arrive at \eqref{eq:Fom4d}.

\subsubsection*{Braided Einstein equation}

Next consider $F^\star_{e}$, which is our braided noncommutative analog of the four-dimensional Einstein equation with cosmological constant (that is, before solving for the Levi--Civita connection using the classical torsion equation). We will show that
\begin{align}\label{eq:Fe4d}
\begin{tabular}{|l|}\hline\\
$\displaystyle
F^\star_{e} = \tfrac{1}{6}\,\big( 2\, e\dwedge_\star R^\star + 2\, R^\star \dwedge_\star e + 6\,\Lambda\,e\dwedge_\star e\dwedge_\star e  $ \\[2mm]
$ \quad \, \hspace{2cm} + \, e \dwedge_\star \dd \om + \dd \om \dwedge_\star e + \sfR_k( e) \dwedge_\star [\sfR^k( \om),\om]_{\aso(4)}^\star \big)
$\\\\\hline\end{tabular}
\end{align} 
The first three terms are manifestly braided covariant, but the rest are not. However, a few lines of calculations shows that the last three terms form a braided covariant combination, as expected from the general covariance property which is guaranteed by the braided homotopy calculations of Section~\ref{sec:braidedEOM}. In the classical case, this expression reduces exactly (with the correct numerical factors) to the expected curvature form $F_e=e\dwedge R + \Lambda\,e\dwedge e\dwedge e$. 

Starting from \eqref{eq:Feell4d}, we use braided antisymmetry on the second and last terms to write
\begin{align*}
F^\star_{e} &= \tfrac{1}{2}\,\big(e \dwedge_\star \dd \om + \tfrac{1}{3}\, e \dwedge_\star [\om,\om]_{\aso(4)}^\star+ \dd \om \dwedge_\star e + \tfrac{1}{3}\, [\om,\om]_{\aso(4)}^\star \dwedge_\star e + 2\,\Lambda\,e\dwedge_\star e\dwedge_\star e \\
& \quad \, \hspace{1cm} + \tfrac{1}{3}\, \sfR_k(e) \dwedge_\star [\sfR^k( \om),\om]_{\aso(4)}^\star \big) \nn \\[4pt]
&= \tfrac{1}{2}\, \big(e\dwedge_\star R^\star + R^\star\dwedge_\star e + 2\,\Lambda\,e\dwedge_\star e\dwedge_\star e \\
& \quad \, \hspace{1cm} - \tfrac{1}{6}\, e\dwedge_\star [\om,\om]_{\aso(4)}^\star - \tfrac{1}{6}\,[\om,\om]_{\aso(4)}^\star \dwedge_\star e +\tfrac{1}{3}\, \sfR_k( e) \dwedge_\star [\sfR^k( \om),\om]_{\aso(4)}^\star \big) \ ,
\end{align*}
which is easily brought to the form \eqref{eq:Fe4d}. It is somewhat surprising and remarkable that the second line of \eqref{eq:Fe4d} is braided covariant, despite not being expressible in terms of manifestly covariant objects. It is instructive to explicitly demonstrate how this happens. For this, consider a left braided $\aso(4)$ gauge transformation of the first term there:
\begin{align}\label{eq:so4gt1}
\delta_{(0,\rho)}^{\star\lact} (e \dwedge_\star \dd \om)&=- (\rho\star e) \dwedge_\star \dd \om + \sfR_k( e) \dwedge_\star \dd \big( \sfR^k( \dd \rho) - [ \sfR^k( \rho), \om]_{\aso(4)}^\star \big) \nn \\[4pt]
&= - (\rho \star e) \dwedge_\star \dd \om - \sfR_k( e)\dwedge_\star [\sfR^k( \dd \rho), \om]_{\aso(4)}^\star -\sfR_k( e)\dwedge_\star [\sfR^k( \rho), \dd \om]_{\aso(4)}^\star \nn \\[4pt]
&= - \rho \triangleright_\star (e\dwedge_\star \dd \om) - \sfR_k( e)\dwedge_\star [\sfR^k( \dd \rho), \om]_{\aso(4)}^\star \ ,
\end{align}
where in the first and last equalities we used the left braided Leibniz rule. Similarly, one calculates 
\begin{align}\label{eq:so4gt2}
\delta_{(0,\rho)}^{\star\lact}(\dd \om \dwedge_\star e)= -\rho \triangleright_\star (\dd \om \dwedge_\star e) - [\dd \rho,\om]_{\aso(4)}^\star \dwedge_\star e \ .
\end{align}
The non-covariant terms involving $\dd\rho$ in \eqref{eq:so4gt1} and \eqref{eq:so4gt2} cancel exactly with the corresponding non-covariant terms in the gauge transformation of the last term in \eqref{eq:Fe4d}: after a couple of lines of similar manipulations we find
\begin{align}\label{eq:so4gt3}
\delta_{(0,\rho)}^{\star\lact}\big(\sfR_k( e) \dwedge_\star [\sfR^k( \om), \om]_{\aso(4)}^\star\big) &= - \rho \triangleright_\star \big( \sfR_k( e) \dwedge_\star [\sfR^k( \om), \om]_{\aso(4)}^\star \big) +  \sfR_k( e) \dwedge_\star [\sfR^k( \dd \rho),\om]_{\aso(4)}^\star \nn \\
& \quad \, + \sfR_k \sfR_l( e) \dwedge_\star [\sfR^k \sfR_m( \om), \sfR^l\sfR^m(\dd \rho)]_{\aso(4)}^\star \ .
\end{align}
The second term in \eqref{eq:so4gt3} cancels the non-covariant term in \eqref{eq:so4gt1}. Using the Yang--Baxter equation \eqref{eq:YangBaxter}, the last term in \eqref{eq:so4gt3} can be written as
\begin{align*}
\sfR_l \sfR_k( e) \dwedge_\star  [\sfR_m \sfR^k( \om), \sfR^m\sfR^l( \dd \rho)]_{\aso(4)}^\star
&= \sfR_l \sfR_k( e) \dwedge_\star [\sfR^l(\dd \rho ), \sfR^k( \om)]_{\aso(4)}^\star \\[4pt]
&= \sfR_k( e) \dwedge_\star \sfR^k( [\dd \rho,\om]_{\aso(4)}^\star) \\[4pt]
&= [\dd \rho,\om]_{\aso(4)}^\star \dwedge_\star e \ ,
\end{align*}
which cancels the remaining non-covariant term in \eqref{eq:so4gt2}; here we use braided symmetry of the braided Lie bracket, then the second $\RR$-matrix identity from \eqref{eq:Rmatrixidswn} followed by braided symmetry of the $\dwedge_\star$-product. Altogether, we have explicitly demonstrated \eqref{eq:covcondFeom4d}, as expected on general grounds from the proof using the braided $L_\infty$-algebra structure in Section~\ref{sec:braidedEOM}.

The field equations \eqref{eq:Fom4d} and \eqref{eq:Fe4d} of braided noncommutative gravity in four dimensions together are one of the main results of this paper. They are drastically different from the field equations previously suggested for the first order formulation of noncommutative gravity. Besides the universal (dimension-independent) differences that we discussed in the three-dimensional case of Section~\ref{sec:NCgravity3D}, where the braided field equations were relatively mild ``naive'' deformations of the classical equations, here we see that the explicit forms of the   braided field equations are very different and not simply ``naive'' deformations: the properties of the field equations of braided gravity vary drastically with dimension, in contrast to the classical case. In particular, the torsion equation $F_\om^\star=0$ now involves more braided covariant forms than just those constructed from the left and right torsion forms. The curvature equation $F_e^\star=0$ is even more striking, as in addition to the anticipated braided covariant forms, there is a further non-manifestly covariant combination whose individual forms are not braided covariant on their own. These latter forms could not have been guessed or even motivated solely by deformation of the classical equations. They are a necessary and natural consequence of twisting the kinematical and dynamical content simultaneously via the braided $L_\infty$-algebra approach to noncommutative field theory: in this approach braided gauge covariance is formulated for the doublet of fields $A=(e,\om)\in V_1$ (and the corresponding doublet of field equations), which allows for non-trivial mixing of the component fields $e$ and $\om$ in non-manifestly covariant ways, but in such a way that the braided homotopy relations ensure that all combinations are covariant under braided gauge transformations.

Generally, we expect this complexity to increase as the number and orders of non-zero brackets $\ell_n^\star$ for $n>2$ increases. In this sense braided noncommutative field theories generally contain many more covariant forms than naively expected from classical considerations.

\subsubsection*{Braided Noether identities}

By \eqref{braidedNoether}, the gauge redundancies of braided gravity in four dimensions are encoded by the braided Noether identities
\begin{align}\label{eq:braidedNoether4d}
\dsf_{(e,\om)}^\star F^\star_{(e,\om)} &= \ell_1^\star\big(F_{e}^\star,F_\om^\star\big) -\frac1{12}\,\ell_1^\star\big(\ell_3^\star((e,\om),(e,\om),(e,\om))\big) + \frac14\,\ell_2^\star\big(\sfR_k(e,\om)\, , \, \ell_2^\star(\sfR^k(e,\om),(e,\om)) \big) \nn \\ & \quad \, + \frac1{12} \, \Big(\ell_2^\star\big(\ell_3^\star((e,\om),(e,\om),(e,\om))\,,\,(e,\om)\big) - \ell_2^\star\big((e,\om)\,,\,\ell_3^\star((e,\om),(e,\om),(e,\om))\big)\Big) \nn \\
& \quad \, - \frac12\,\Big(\ell_2^\star\big((e,\om)\,,\,(F_e^\star,F_\om^\star)\big) - \ell_2^\star\big((F_e^\star,F_\om^\star)\,,\,(e,\om)\big)\Big)
\ = \ (0,0) \ ,
\end{align}
where we again used the slightly simpler form \eqref{eq:leftNoetherprelim}, along with $\ell_1^\star(e,\om)=(0,0)$. As previously, we rewrite these as differential identities among the field equations \eqref{eq:Fom4d} and \eqref{eq:Fe4d}.

The braided Noether identity for local $\aso(4)$ rotations, in $\Omega^4\big(M,\midwedge^2\FR^{4}\big)[[\hbar]]$, is given by the second entry of \eqref{eq:braidedNoether4d}. Inserting the corresponding brackets from \eqref{eq:ell14d}--\eqref{eq:ell34d}, using the usual $\RR$-matrix identities from \eqref{eq:YangBaxter} and \eqref{eq:Rmatrixidswn}, and collecting terms using \eqref{eq:Fom4did1} and \eqref{eq:Fom4did2}, we find
\begin{align}\label{eq:braidedNoetherso4}
& \frac12\,\Big(\dd_{\star \lact}^\om F_\om^\star+\dd_{\star \ract}^\om F_\om^\star+\frac32\,\sfR_k(F_e^\star)\wedge_{\star}\sfR^k(e)+\frac32\,F_e^\star\wedge_{\star}e\Big)
\nonumber \\
& \hspace{1cm} + \frac14\,\Big( \frac13\,\big(\dd\Phi^\star_{(e,\om)} - [\Phi^{\star}_{(e,\om)},\om]_{\aso(4)}^\star - [\om,\Phi_{(e,\om)}^\star]_{\aso(4)}^\star \big) - \frac{1}{2}\,\big(\widetilde{\Phi}{}^\star_{(e,\om)}\wedge_{\star} e + \sfR_l(\widetilde{\Phi}{}^\star_{(e,\om)})\wedge_{\star} \sfR^l(e) \big) \nonumber\\
& \hspace{2cm} - \frac32\,\big(e\dwedge_{\star} \sfR_k(\dd\om)+\dd\om\dwedge_{\star}\sfR_k( e)\big)\wedge_{\star}\sfR^k (e)+[\dd (e\dwedge_{\star} \sfR_k (e)),\sfR^k(\om)]_{\aso(4)}^\star\nonumber\\
& \hspace{2cm} -3\, \Lambda\,( e\dwedge_\star e\dwedge_\star e )\wedge_{\star} e - 3\, \Lambda\, \sfR_k( e\dwedge_\star e\dwedge_\star e )\wedge_{\star} \sfR^k(e)\Big) \ = \ 0 \ ,
\end{align}
where we defined
\begin{align*}
\Phi^\star_{(e,\om)} &= e\dwedge_\star(\om\wedge_{\star}e)-(\sfR_k(\om) \wedge_{\star} \sfR^k (e))\dwedge_{\star}e - \om\triangleright_\star(e\dwedge_\star e)-(e\dwedge_{\star}e)\,_{\star\!\!}\triangleleft \om \ , \\[4pt]
\widetilde{\Phi}{}^\star_{(e,\om)} &= e\dwedge_{\star}[\om,\om]_{\aso(4)}^\star+\sfR_k(e)\dwedge_{\star}[\sfR^k(\om),\om]_{\aso(4)}^\star+[\om,\om]_{\aso(4)}^\star\dwedge_{\star}e \ .
\end{align*}
In the classical case, only the first line of \eqref{eq:braidedNoetherso4} survives and reduces to the expected Noether identity $\dd^\omega F_\omega = -\frac{3}{2}\,F_e\wedge e$, which follows from (but is no longer equivalent to) the first Bianchi identity $\dd^\om T= R\wedge e$~\cite{ECPLinfty}.

The Noether identity corresponding to the braided diffeomorphism invariance of the four-dimensional theory is valued in $\Omega^1(M)\otimes\Omega^4(M)[[\hbar]]$ and is given by the first entry of \eqref{eq:braidedNoether4d}. The contributions from the first two terms vanish by definition of the differential $\ell_1^\star$, while the last line contributes terms of the same form as in the three-dimensional case, up to signs. The remaining terms can be expanded by commuting the exterior derivative $\dd$ with the legs of the twist, along with the identities \eqref{eq:Fom4did1} and \eqref{eq:Fom4did2}. Altogether, the explicit form of the braided diffeomorphism Noether identity for four-dimensional gravity may be expressed similarly to \eqref{eq:diffNoether3d}, now with the $4$-form $N_\mu^\star(F_e^\star,F_\om^\star,e,\om)\in\Omega^4\big(M,\midwedge^4\FR^{4})[[\hbar]]$ given by
\begin{align*}
N_\mu^\star &= \frac12\,\big(\iota_\mu \bar{\sff}^k (\dd e) \dwedge \bar{\sff}_k (F^\star_{e}) - \iota_\mu\bar{\sff}^k( \dd \om) \dwedge \bar{\sff}_k(F^\star_{\om}) - \iota_\mu \bar{\sff}^k( e) \dwedge \bar{\sff}_k (\dd F^\star_{e}) + \iota_\mu \bar{\sff}^k (\om) \dwedge \bar{\sff}_k (\dd F^\star_{\om}) \\
& \hspace{1cm} + \iota_\mu  \bar{\sff}_k (\dd e) \dwedge \bar{\sff}^k (F^\star_{e}) - \iota_\mu \bar{\sff}_k (\dd \om)\dwedge \bar{\sff}^k(F^\star_{\om}) - \iota_\mu \bar{\sff}_k (e) \dwedge \bar{\sff}^k (\dd F^\star_{e}) + \iota_\mu \bar{\sff}_k (\om) \dwedge \bar{\sff}^k (\dd F^\star_{\om} ) \big) \\
& \quad \, +\frac14\,\Big(\iota_\mu \bar{\sff}^k \sfR_l (\dd e )\dwedge \bar{\sff}_k\big( \sfR^l(e)\dwedge_\star \dd \om + \sfR^l (\dd \om) \dwedge_\star e \big) + \iota_\mu \bar{\sff}^k \sfR_l (\dd \om) \dwedge \dd\,\bar{\sff}_k \big(\sfR^l( e) \dwedge_\star  e\big) \\
& \hspace{1.5cm} -\iota_\mu \bar{\sff}^k \sfR_l (e) \dwedge  \bar{\sff}_k \big(\sfR^l (\dd e)\dwedge_\star \dd \om + \sfR^l (\dd \om) \dwedge_\star \dd e\big) \Big) \\
& \quad \, + \frac1{72} \, \big( \iota_\mu \bar{\sff}^k (\dd e) \dwedge \bar{\sff}_k (\Psi_{(e,\om)}^\star ) - \iota_\mu \bar{\sff}^k (e) \dwedge \dd\,\bar{\sff}_k (\Psi_{(e,\om)}^\star) \\
& \hspace{2cm} + \iota_\mu \bar{\sff}_k( \dd e )\dwedge \bar{\sff}^k (\Psi_{(e,\om)}^\star) -\iota_\mu\bar{\sff}_k(e)\dwedge\dd\,\bar{\sff}^k(\Psi_{(e,\om)}^\star) \\
& \hspace{1.7cm} - \iota_\mu \bar{\sff}^k (\dd \om) \dwedge \bar{\sff}_k(\widetilde{\Psi}{}^\star_{(e,\om)}) + \iota_\mu \bar{\sff}^k (\om) \dwedge \dd\,\bar{\sff}_k (\widetilde{\Psi}{}^\star_{(e,\om)}) \\
& \hspace{2cm} -\iota_\mu\bar\sff_k(\dd\om)\dwedge\bar\sff^k(\widetilde{\Psi}{}^\star_{(e,\om)}) + \iota_\mu\bar\sff_k(\om)\dwedge\dd\,\bar\sff^k(\widetilde{\Psi}{}^\star_{(e,\om)}) \big) \ ,
\end{align*}
where we defined
\begin{align*}
\Psi_{(e,\om)}^\star &= e\dwedge_\star [\om,\om]_{\aso(4)}^\star + [\om,\om]_{\aso(4)}^\star\dwedge_\star e + \sfR_l( e) \dwedge_\star [ \sfR^l( \om) ,\om]_{\aso(4)}^\star + 6\, \Lambda\, e\dwedge_\star e\dwedge_\star e \ , \\[4pt]
\widetilde{\Psi}{}^\star_{(e,\om)} &= e \dwedge_\star (\om \wedge_\star e) - \big(\sfR_l (\om) \wedge_\star \sfR^l( e) \big) \dwedge_\star e - \om \triangleright_\star (e \dwedge_\star e) + (e\dwedge_{\star}e)\,_{\star\!\!}\triangleleft \om \ .
\end{align*}
We see that the braided Noether identities become extremely complicated in this case, even with the simplified twists that we discussed in the three-dimensional theory, as we see in Example~\ref{ex:MoyalNoether4d} below. Nevertheless, the identities exist in explicit form and provide a highly non-trivial differential identification among the braided field equations, which captures the statement of braided gauge invariance of the field theory.

\begin{example}\label{ex:MoyalNoether4d}
Similarly to Example~\ref{ex:MoyalNoether3d}, the expression for the braided diffeomorphism Noether identity can be reduced slightly in the example of the Moyal--Weyl twist $\CF_\theta$ on $M=\FR^4$. In this case we can again write the Noether identity as in \eqref{eq:MoyalNoether3d}, now with the $4$-form
\begin{align*}
N_\mu^{\star_\theta} &= \tfrac12\,\big(\iota_\mu^{\star_\theta} \dd e \dwedge_{\star_\theta} F^{\star_\theta}_{e} - \iota_\mu^{\star_\theta}\dd \om \dwedge_{\star_\theta} F^{\star_\theta}_{\om} - \iota_\mu^{\star_\theta} e\dwedge_{\star_\theta} \dd F^{\star_\theta}_{e} + \iota_\mu^{\star_\theta} \om \dwedge_{\star_\theta} \dd F^{\star_\theta}_{\om} \\
& \hspace{1cm} + F^{\star_\theta}_{e} \dwedge_{\star_\theta} \iota_\mu^{\star_\theta}  \dd e - F^{\star_\theta}_{\om}\dwedge_{\star_\theta} \iota_\mu^{\star_\theta} \dd \om - \dd F^{\star_\theta}_{e}\dwedge_{\star_\theta} \iota_\mu^{\star_\theta} e + \dd F^{\star_\theta}_{\om} \dwedge_{\star_\theta} \iota_\mu^{\star_\theta} \om \big) \\
& \quad \, +\tfrac14\,\big(e\dwedge_{\star_\theta} \iota_\mu^{\star_\theta} \dd e \dwedge_{\star_\theta} \dd \om + \dd \om\dwedge_{\star_\theta} \iota_\mu^{\star_\theta}\dd e \dwedge_{\star_\theta} e + \dd e\dwedge_{\star_\theta} \iota_\mu^{\star_\theta}\dd\om\dwedge_{\star_\theta} e \\
& \hspace{1.5cm} - e\dwedge_{\star_\theta} \iota_\mu^{\star_\theta}\dd\om\dwedge_{\star_\theta} \dd e - \dd e\dwedge_{\star_\theta} \iota_\mu^{\star_\theta} e \dwedge_{\star_\theta} \dd \om - \dd \om \dwedge_{\star_\theta} \iota_\mu^{\star_\theta} e\dwedge_{\star_\theta} \dd e \big) \\
& \quad \, + \tfrac1{72} \, \big( \iota_\mu^{\star_\theta} \dd e\dwedge_{\star_\theta} \Psi_{(e,\om)}^{\star_\theta} - \iota_\mu^{\star_\theta} e\dwedge_{\star_\theta} \dd\Psi_{(e,\om)}^{\star_\theta} + \Psi_{(e,\om)}^{\star_\theta}\dwedge_{\star_\theta}\iota_\mu^{\star_\theta} \dd e - \dd\Psi_{(e,\om)}^{\star_\theta}\dwedge_{\star_\theta}\iota_\mu^{\star_\theta} e \\
& \hspace{1.7cm} - \iota_\mu^{\star_\theta} \dd \om \dwedge_{\star_\theta} \widetilde{\Psi}{}^{\star_\theta}_{(e,\om)} + \iota_\mu^{\star_\theta} \om \dwedge_{\star_\theta} \dd\widetilde{\Psi}{}^{\star_\theta}_{(e,\om)} - \widetilde{\Psi}{}^{\star_\theta}_{(e,\om)} \dwedge_{\star_\theta} \iota_\mu^{\star_\theta} \dd \om + \dd\widetilde{\Psi}{}^{\star_\theta}_{(e,\om)} \dwedge_{\star_\theta}\iota_\mu^{\star_\theta} \om \big) \ .
\end{align*}
\end{example}

\subsubsection*{Action functional}

Twisting the four-dimensional classical ECP pairing from~\cite[Section~8.2]{ECPLinfty} using an arbitrary Drinfel'd twist $\CF$, we obtain the (braided) cyclic structure 
	\begin{align}\label{eq:ecpbraidedpairing2}
		\begin{split}
			\langle (e,\om) \,,\, (E,{\mit\Omega}) \rangle_{\star}&:=
			\int_{M}\, \Tr \big(e\dwedge_\star E - \om \dwedge_\star {\mit\Omega} \big) \ ,\\[4pt]
			\langle(\xi,\rho)\,,\,({\CX},{\CP})\rangle_{\star} &:= \int_M\,
			\iota_{\xi}^{
				\star}{\CX} + \int_M\, \Tr\big(\rho\dwedge_\star{\CP}\big) \ ,
		\end{split}
	\end{align}
	on $V_{1}\otimes V_{2}$ and $V_{0}\otimes V_{3}$, respectively. We shall write out and simplify the four-dimensional braided noncommutative ECP functional in two steps. First we work with a general twist (or alternatively ignore strict cyclicity for a compatible twist). Then we shall employ a compatible twist, and use $\wedge_\star$-cyclicity to simplify the corresponding action functional. The ECP action functional is defined via \eqref{eq:braidedaction}:
	\begin{align}\label{eq:braidedECPLagrprelim}
		S_{\star}(e,\om) :&\!= \tfrac12\, \big\langle (e,\omega)\,,\,\ell^\star_1(e,\omega)\big\rangle_{\star} -
		\tfrac1{6}\, \big\langle
		(e,\omega)\,,\,\ell^\star_2\big((e,\omega)\,,\,(e,\omega)\big) \big\rangle_{\star}
		\nn \\
& \quad \, - \tfrac1{24}\, \big\langle (e,\omega)\,,\,
		\ell^\star_3\big((e,\omega)\,,\,(e,\omega) \,,\,(e,\omega)\big)
		\big\rangle_{\star} \nn  \\[4pt]
		&= \int_M \, \frac{1}{6}\,  \Tr \big(e\dwedge_{\star}e\dwedge_{\star} \dd \om + e \dwedge_{\star} \dd \om \dwedge_\star e - \om \dwedge_{\star} e \dwedge_{\star} \dd e+ \om \dwedge_{\star} \dd e \dwedge_{\star} e \big )\nn  \\
		&\hspace{1.5cm}+ \frac{1}{24} \, \Tr\Big( e \dwedge_{\star} e \dwedge_{\star} [\om,\om]_{\aso(4)}^\star + e \dwedge_{\star} \sfR_k( e )\dwedge_{\star} [\sfR^k( \om) ,\om]_{\aso(4)}^\star \nn  \\
		&\hspace{3.1cm} + e \dwedge_{\star} [\om,\om]_{\aso(4)}^\star\dwedge_\star e -\om \dwedge_{\star} e \dwedge_{\star} (\om \wedge_\star e)  \\
		&\hspace{3.1cm} - \om \dwedge_{\star}e \dwedge_{\star} \big(\sfR_k( \om) \wedge_\star \sfR^k( e)\big) - \om \dwedge_{\star} \sfR_k( e )\dwedge_{\star} \big(\sfR^k( \om) \wedge_\star e\big)  \nn \\
		&\hspace{3.1cm}- \om \dwedge_{\star} \sfR_k( e )\dwedge_{\star} \big(\sfR_l( \om) \wedge_\star \sfR^l \sfR^k( e)\big) \nn  + \om \dwedge_{\star}( \om \wedge_\star e) \dwedge_\star e  \\ &\hspace{3.1cm} + \om \dwedge_{\star} \big(\sfR_k(\omega) \wedge_\star \sfR^k(e)\big)\wedge_\star e + 6\,\Lambda\,e\dwedge_\star e\dwedge_\star e\dwedge_\star e\Big)\ , \nn 
	\end{align}	
	where we used $\ell^\star_1(e,\om)=(0,0)$.  
	
We note that the first, second, fifth, sixth, and seventh terms in \eqref{eq:braidedECPLagrprelim} combine modulo exact forms as 
	\begin{align*}
			\tfrac{1}{24}\,&\big( 2\, e\dwedge_\star e\dwedge_\star R^\star + 2 \, e \dwedge_\star R^\star \dwedge_\star e + e \dwedge_{\star} \sfR_k( e )\dwedge_{\star} [\sfR^k( \om) ,\om]_{\aso(4)}^\star \\& \,\,   +2 \, e\dwedge_\star e \dwedge_\star \dd \om + 2\,   e\dwedge_\star \dd \om \dwedge_\star e \big)   \\[4pt]
			&\hspace{2cm}= \tfrac{1}{24}\,\big( 2\, e\dwedge_\star e\dwedge_\star R^\star + 2 \, e \dwedge_\star R^\star \dwedge_\star e - \big(\sfR_k(\om)\wedge_\star \sfR^k (e)\big) \dwedge_\star e\dwedge_\star \om  \\
			&\hspace{2cm}\qquad \qquad + e \dwedge_\star (\om\wedge_\star e)\dwedge_\star \om -2\, \dd (e \dwedge_\star e) \dwedge_\star \om + 2\, e \dwedge_\star \dd \om \dwedge_\star e \big) \\[4pt]
			&\hspace{2cm}= \tfrac{1}{24}\, \big( 2\, e\dwedge_\star e\dwedge_\star R^\star + 2 \, e \dwedge_\star R^\star \dwedge_\star e - T_\ract^\star \dwedge_\star e \dwedge_\star \om + e\dwedge_\star T_\lact^\star \dwedge_\star \om \\ 
			&\hspace{2cm} \qquad \qquad + e\dwedge_\star e \dwedge_\star \dd \om + 2\, e\dwedge_\star \dd \om \dwedge_\star e\big)\ ,
	\end{align*}
where in the first equality we used invariance of top-vectors under braided local $\mathfrak{so}(4)$  transformations on the third term and integration by parts on the fourth term, while in the second equality we simply identified the left and right torsion 2-forms. The rest of the terms in \eqref{eq:braidedECPLagrprelim} may be analogously manipulated into the expression
	\begin{align*}
			\tfrac{1}{24}\,\big(  \om \dwedge_\star T_\ract^\star \dwedge_\star e -\om \dwedge_\star e \dwedge_\star T_\lact^\star + \dd \om \dwedge_\star  e\dwedge_\star e +\om\dwedge_\star \dd^{\om}_{\star\ract}(e\dwedge_\star e) + \om \dwedge_\star \dd^\om_{\star\lact}(e\dwedge_\star e) \big) \ . 
		\end{align*}
Summing these two expressions, we obtain the $4$-dimensional ECP action functional in the simplified form
\begin{align}\label{eq:braidedECPLagr}
		S_\star(e,\om) = \frac{1}{24} \, \int_M\, \Tr \Big( &2\, e\dwedge_\star e\dwedge_\star R^\star + 2 \, e \dwedge_\star R^\star \dwedge_\star e - T_\ract^\star \dwedge_\star e \dwedge_\star \om + e\dwedge_\star T_\lact^\star \dwedge_\star \om \\ & \, + \om \dwedge_\star T_\ract^\star \dwedge_\star e -\om \dwedge_\star e \dwedge_\star T_\lact^\star +\om\dwedge_\star \dd^{\om}_{\star\ract}(e\dwedge_\star e) + \om \dwedge_\star \dd^\om_{\star\lact}(e\dwedge_\star e)\nn \\
		&\, +e\dwedge_\star e\dwedge_\star \dd \om + 2\, e\dwedge_\star \dd \om \dwedge_\star e + \dd \om \dwedge_\star  e\dwedge_\star e + 6\,\Lambda\,e\dwedge_\star e\dwedge_\star e\dwedge_\star e\Big) \ . \nn 
	\end{align}
	
The first two terms of \eqref{eq:braidedECPLagr} are manifestly braided covariant, while the rest are not individually covariant since the gauge field $\om$ appears explicitly. However, one may check directly that they do form covariant combinations. Let us explicitly demonstrate that the eighth and the last term form a covariant combination. The potential non-covariance arises from the terms proportional to $\dd \rho$ in a braided $\mathfrak{so}(4)$ gauge transformation, since the covariant components proportional to $\rho$ vanish by invariance of a top-vector. Hence working modulo exact forms we find
	\begin{align*}
			\delta_{\rho}^{\star,\lact}\big(\om\dwedge_\star  \dd^{\om}_{\star\lact}(e\dwedge_\star e)+ \dd \om \dwedge_\star  e\dwedge_\star e \big) & = \dd \rho \dwedge_\star \dd^{\om}_{\star\lact}(e\dwedge_\star e) - [\dd \rho, \om]_{\mathfrak{so}(4)}^{\star} \dwedge_\star e \dwedge_\star e \\[4pt]
			&= \dd \rho \dwedge_\star\big( \om \triangleright_\star (e\dwedge_\star e) \big) - \dd \rho \dwedge_\star \big(\om \triangleright_\star (e\dwedge_\star e) \big) \\[4pt]
			&=0 \ , 
	\end{align*}
where in the second equality we dropped an exact top-form from the first term, and used invariance of top-vectors under the braided $\mathfrak{so}(4)$ action on the second term. The rest of the terms in \eqref{eq:braidedECPLagr} form covariant combinations via analogous calculations. Hence the invariance of the action functional under braided $\mathfrak{so}(4)$ transformations is fulfilled, as it should be by the general discussion of Section~\ref{sec:braidedgaugeinv}.
	
By restricting to compatible twists as in Proposition~\ref{prop:compatiblestrict}, the pairing becomes a strictly cyclic structure on $\big(V[[\hbar]], \ell_{1}^\star, \ell_2^\star, \ell_{3}^\star \big)$. In this case, the four-dimensional braided ECP action functional simplifies further to 
	\begin{align}\label{eq:4daction}
		S_\star(e,\om) &= \int_M\,\Tr\Big(\frac12\,e\dwedge_\star e\dwedge_\star R^\star + \frac\Lambda4 \, e\dwedge_\star e\dwedge_\star e\dwedge_\star e\Big)  \\ & \quad \, -\frac1{24} \, \int_M\,\Tr\Big(\om\dwedge_\star\big(2\,e\dwedge_\star T_\lact^\star - 2\,T_\ract^\star\dwedge_\star e + \dd_{\star\lact}^\om(e\dwedge_\star e) + \dd_{\star\ract}^\om(e\dwedge_\star e)\big)\Big) \ \, \nn 
	\end{align} 
	which follows immediately from \eqref{eq:braidedECPLagr} by $\wedge_\star$-cyclicity under the integral and the identity
	\begin{align*}
			\Tr\big(\dd\om\dwedge_\star e\dwedge_\star e\big) = \Tr\big(2\,R^\star\dwedge_\star e\dwedge_\star e - \tfrac12\,\om\dwedge_\star\dd_{\star\lact}^\om(e\dwedge_\star e) - \tfrac12\,\om\dwedge_\star\dd_{\star\ract}^\om(e\dwedge_\star e) \big) \ ,
	\end{align*}
that may be checked by a simple expansion of the terms on the right-hand side.
The first line of \eqref{eq:4daction} is exactly the ``naive'' deformation of the classical Einstein--Cartan--Palatini action functional in four dimensions, with the correct numerical prefactors; indeed, the second line vanishes in the classical case $\RR=1\otimes 1$. Using similar identities to those employed before, one can explicitly check that the variation of this action functional with respect to the spin connection $\om$ yields the field equations $\CF_\om^\star=0$ from \eqref{eq:Fom4d}, while varying with respect to coframe field $e$ yields the field equations $\CF_e^\star=0$ from \eqref{eq:Fe4d}, as expected on general grounds from Section~\ref{sec:braidedaction}.

\end{document}